\DeclareMathOperator{\Ad}{Ad}
\DeclareMathOperator{\ad}{ad}
\newcommand\alphabar{{\bar\alpha}}
\newcommand\Bbar{{\bar B}}
\newcommand\calB{{\mathcal B}}
\newcommand\calBbar{\bar{\mathcal B}}
\newcommand\calF{{\mathcal F}}
\newcommand\calL{{\mathcal L}}
\newcommand\calLbar{\bar{\mathcal L}}
\newcommand\calM{{\mathcal M}}
\newcommand\calMbar{\bar{\mathcal M}}
\newcommand\calP{{\mathcal P}}
\newcommand\calPbar{\bar{\mathcal P}}
\newcommand\calQ{{\mathcal Q}}
\newcommand\calQbar{\bar{\mathcal Q}}
\newcommand\calX{{\mathcal X}}
\newcommand\calXbar{\bar{\mathcal X}}
\newcommand\chibar{{\bar \chi}}
\newcommand\Dbar{{\bar D}}
\newcommand\defeq{\; {\overset{\text{def}}=} \;}
\newcommand\commentout[1]{}
\newcommand\der{\partial}
\newcommand\ders{\partial_s}
\newcommand\fbar{{\bar f}}
\newcommand\gbar{{\bar g}}
\newcommand\Integer{{\mathbb Z}}
\newcommand\Lbar{{\bar L}}
\newcommand\Mbar{{\bar M}}
\DeclareMathOperator{\ordh}{ord^{\hbar}}
\DeclareMathOperator{\ordx}{ord_\xi}
\newcommand\Psibar{{\bar \Psi}}
\newcommand\Pbar{{\bar P}}
\newcommand\Qbar{{\bar Q}}
\DeclareMathOperator{\Res}{Res}
\newcommand\Sbar{{\bar S}}
\newcommand\symh{\sigma^\hbar}
\newcommand\tbar{{\bar t}}
\newcommand\totsym{\sigma_{\mathrm{tot}}}
\newcommand\ubar{{\bar u}}
\newcommand\utilde{{\tilde u}}
\newcommand\vbar{{\bar v}}
\newcommand\Wbar{{\bar W}}
\newcommand\Xbar{{\bar X}}
\newcommand\zbar{{\bar z}}
\newtheorem{thm}{Theorem}[section]
\newtheorem{lem}[thm]{Lemma}
\newtheorem{prop}[thm]{Proposition}
\theoremstyle{definition}
\theoremstyle{remark}
\newtheorem{rem}[thm]{Remark}
\numberwithin{equation}{section}
\newcommand\thmref[1]{Theorem~\ref{#1}}
\newcommand\propref[1]{Proposition~\ref{#1}}
\newcommand\secref[1]{Section~\ref{#1}}
\newcommand\lemref[1]{Lemma~\ref{#1}}
\begin{document}
\title[$\hbar$-expansion of Toda]{
       An $\hbar$-expansion of the Toda hierarchy:\\
       a recursive construction of solutions
}
\author{Kanehisa Takasaki}
\address{
Graduate School of Human and Environmental Studies,
Kyoto University,
Yoshida, Sakyo, Kyoto, 606-8501, Japan
}
\email{takasaki@math.h.kyoto-u.ac.jp}
\author{Takashi Takebe}
\address{
Faculty of Mathematics,
National Research University -- Higher School of Economics,
Vavilova Street, 7, Moscow, 117312, Russia
}
\email{ttakebe@hse.ru}
\date{2 December 2011}
\maketitle
\begin{abstract}
A construction of general solutions of the $\hbar$-dependent 
Toda hierarchy is presented.  The construction is based on 
a Riemann-Hilbert problem for the pairs $(L,M)$ and 
$(\Lbar,\Mbar)$ of Lax and Orlov-Schulman operators.  
This Riemann-Hilbert problem is translated 
to the language of the dressing operators $W$ and $\Wbar$.  
The dressing operators are set in an exponential form 
as $W = e^{X/\hbar}$ and $\Wbar = e^{\phi/\hbar}e^{\Xbar/\hbar}$, 
and the auxiliary operators $X,\Xbar$ and the function $\phi$ 
are assumed to have $\hbar$-expansions $X = X_0 + \hbar X_1 + \cdots$, 
$\Xbar = \Xbar_0 + \hbar\Xbar_1 + \cdots$ and 
$\phi = \phi_0 + \hbar\phi_1 + \cdots$.  The coefficients 
of these expansions turn out to satisfy a set of recursion 
relations.  $X,\Xbar$ and $\phi$ are recursively determined 
by these relations.  Moreover, the associated wave functions  
are shown to have the WKB form $\Psi = e^{S/\hbar}$ 
and $\Psibar = e^{\Sbar/\hbar}$, which leads to an $\hbar$-expansion 
of the logarithm of the tau function.  
\end{abstract}

\setcounter{section}{-1}
\section{Introduction}
\label{sec:intro}

This paper is a continuation of our previous work 
\cite{tak-tak:09,tak-tak:11} on a quasi-classical 
or $\hbar$-dependent (where $\hbar$ is the Planck constant) 
formulation of the KP hierarchy \cite{tak-tak:95}.   
We presented therein a recursive construction 
of general solutions to the $\hbar$-dependent KP hierarchy.  
The construction starts from a Riemann-Hilbert problem 
for the pair $(L,M)$ of Lax and Orlov-Schulman operators.  
This Riemann-Hilbert problem can be translated 
to the language of the underlying dressing operator $W$.  
Assuming the exponential form $W = e^{\hbar^{-1}X}$ 
and an $\hbar$-expansion of the operator $X$, 
one can derive a set of recursion relations 
that determine the operator $X$ order-by-order 
of the $\hbar$-expansion from the lowest part 
(namely, a solution of the dispersionless KP hierarchy 
\cite{tak-tak:95}).  Thus the Lax, Orlov-Schulman 
and dressing operators are obtained.  Furthermore, 
borrowing an idea from Aoki's exponential calculus 
of microdifferential operators \cite{aok:86}, 
one can show that the wave function has the WKB form 
$\Psi = e^{\hbar^{-1}S}$.  This leads to an $\hbar$-expansion 
of the associated tau function as a generalisation 
of the ``genus expansion'' of partition functions 
in string theories and random matrices 
\cite{dij:91,kri:91,mor:94,dfgz}.  
The goal of this paper is to generalise these results 
to an $\hbar$-dependent formulation of the Toda hierarchy 
\cite{tak-tak:95}.   

The Toda hierarchy is built from difference operators 
\begin{equation*}
  a(s,e^{\der_s}) = \sum_{m}a_m(s)e^{m\der_s} 
\end{equation*}
on a one-dimensional lattice (with coordinate $s \in \Integer$) 
rather than microdifferential operators on a continuous line. 
Even in the $\hbar$-independent case \cite{uen-tak:84}, 
the formulation of the hierarchy itself is 
more complicated than that of the KP hierarchy.  
The hierarchy has two sets of time evolutions 
for time variables $t = (t_n)$ and $\tbar = (\tbar_n)$. 
These time evolutions are formulated with two Lax operators 
$L$ and $\Lbar$.  Orlov-Schulman operators, dressing operators 
and wave functions, too, are prepared in pairs.  
In the $\hbar$-dependent formulation \cite{tak-tak:95}, 
the Planck constant $\hbar$ plays the role of lattice spacing, 
which shows up in the shift operators as $e^{\hbar\der_s}$.  
Difference operators in the Lax formalism are linear combinations 
\begin{equation*}
  a(\hbar,s,e^{\hbar\der_s}) = \sum_{m}a_m(\hbar,s)e^{m\hbar\der_s} 
\end{equation*}
of these shift operators with $\hbar$-dependent coefficients 
$a_m(\hbar,s)$.  

To construct a general solution of the $\hbar$-dependent 
Toda hierarchy, we start from a Riemann-Hilbert problem 
for the pairs $(L,M)$ and $(\Lbar,\Mbar)$ of 
Lax and Orlov-Schulman operators.  This problem can be converted 
to a problem for the dressing operators $W$ and $\Wbar$.  
We seek $W$ and $\Wbar$ in the exponential form 
\begin{equation*}
  W = e^{\hbar^{-1}X}, \quad 
  \Wbar = e^{\hbar^{-1}\phi}e^{\hbar^{-1}\Xbar}, 
\end{equation*}
where $X$ and $\Xbar$ are difference operators 
and $\phi$ is a function of $(\hbar,s,t,\tbar)$. 
Assuming that these operators and functions 
have $\hbar$-expansions, we can derive a set 
of recursion relations for the coefficients 
of these expansions.  The lowest part of this expansion 
turns out to be the dressing function 
of the dispersionless Toda hierarchy \cite{tak-tak:91,tak-tak:95}. 
Thus the description of the Lax, Orlov-Schulman 
and dressing operators are mostly parallel 
to the case of the $\hbar$-dependent KP hierarchy.  

The construction of the associated wave functions 
exhibits a new feature.  To formulate an analogue 
of Aoki's exponential calculus for difference operators, 
we define the ``symbol'' of a difference operator 
$a(\hbar,s,e^{\hbar\der_s})$ to be $a(\hbar,s,\xi)$.  
The operator product $a(\hbar,s,e^{\hbar\der_s})b(\hbar,s,e^{\hbar\der_s})$ 
induces the $\circ$-product 
\begin{equation*}
 \begin{split}
    a(\hbar,s,\xi) \circ b(\hbar,s,\xi)
    &=
    e^{\hbar\, \xi\der_\xi \der_{s'}} 
    a(\hbar,s,\xi) b(\hbar,s',\xi')|_{s'=s,\xi'=\xi}
\\
    &=
    \sum_{n=0}^\infty\frac{\hbar^n}{n!}
    (\xi\der_\xi)^n a(\hbar,s,\xi) 
        \der_{s'}^n b(\hbar,s',\xi')|_{s'=s,\xi'=\xi} 
 \end{split}
\end{equation*}
for those symbols.  Although looking very similar, 
this product structure is slightly different 
from the $\circ$-product of the symbols $a(\hbar,x,\xi)$ 
of $\hbar$-dependent microdifferential operators 
$a(\hbar,x,\hbar\der_x)$ in that $\der_\xi$ is now replaced 
with $\xi\der_\xi$.   This tiny difference, however, 
has a considerable effect; unlike $\der_\xi$, 
$\xi\der_\xi$ does not lower the order with respect to $\xi$. 
Because of this, we are forced to modify our previous method 
\cite{tak-tak:09}.

We admit that our construction of solutions 
is extremely complicated.  The recursive procedure 
is illustrated in Appendix for a special case 
that is related to $c = 1$ string theory at self-dual radius 
\cite{dmp:93,egu-kan:94,hop:94}.  
Even in this relatively simple case, we have been unable 
to derive an explicit form of the solution unless 
a half of the full time variables are set to zero.  
This is a price to pay for treating {\em general} solutions.  
In this sense, our method cannot be directly compared 
with the method in random matrix theory \cite{mor:94,dfgz}, 
in particular, Eynard and Orantin's 
topological recursion relations \cite{eyn-ora:07}.  
Their recursion relations stem from the ``loop equations'' 
for random matrices, which are constraints 
to single out a class of special solutions 
of an underlying integrable hierarchy,
while our method does not use any extra structure other than the
integrable hierarchy itself.

This paper is organised as follows.  
Section 1 is a review of the $\hbar$-dependent formulation 
of the Toda hierarchy.  The Riemann-Hilbert problem 
is also formulated therein.   Section 2 presents 
the recursive solution of the Riemann-Hilbert problem.  
The method is a rather straightforward generalisation 
of the case of the $\hbar$-dependent KP hierarchy.  
Section 3 deals with the $\hbar$-expansion of the wave function. 
Aoki's exponential calculus is reformulated for difference operators. 
Relevant recursion relations are thereby derived, 
and shown to have a solution.  
Section 4 mentions the $\hbar$-expansion of the tau function. 

\bigskip

\paragraph*{\em Acknowledgements}

The authors are grateful to Professor Akihiro Tsuchiya for drawing our
attention to this subject.

This work is partly supported by Grant-in-Aid for Scientific Research
No.\ 19540179 and No.\ 22540186 from the Japan Society for the Promotion
of Science and by the Bilateral Joint Project ``Integrable Systems,
Random Matrices, Algebraic Geometry and Geometric Invariants''
(2010--2011) of the Japan Society for the Promotion of Science and the
Russian Foundation for Basic Research. 

TT is partly supported by the grant of the National Research University
-- Higher School of Economics, Russia, for the Individual Research
Project 09-01-0047 (2009) and 10-01-0043 (2010). Part of the work was
done during his stay in the Institut Mittag-Leffler (Djursholm, Sweden)
by the programme ``Complex Analysis and Integrable Systems'' in 2011. He
thanks the Institut Mittag-Leffler and the organisers for hospitality.

\section{$\hbar$-dependent Toda hierarchy: review}
\label{sec:dtoda}

In this section we recall several facts on the Toda hierarchy depending
on a formal parameter $\hbar$ in \cite{tak-tak:95}, \S2.7. Throughout
this paper all functions are formal power series.

The $\hbar$-dependent Toda hierarchy is defined by the Lax
representation
\begin{equation}
 \begin{aligned}
     &\hbar \frac{\der L}{\der t_n}     = [ B_n,     L ],\qquad&
     &\hbar \frac{\der L}{\der \tbar_n} = [ \Bbar_n, L ],
\\
     &\hbar \frac{\der \Lbar}{\der t_n}     = [ B_n,     \Lbar ],\qquad&
     &\hbar \frac{\der \Lbar}{\der \tbar_n} = [ \Bbar_n, \Lbar ],
\\
     &B_n = (L^n)_{\geq 0}, &
     &\Bbar_n = (\Lbar^{-n})_{\le -1}, \qquad n=1,2,\dotsc,
 \end{aligned}
\label{todah}
\end{equation}
where the {\em Lax operators} $L$, $\Lbar$ are difference operators with
respect to the discrete independent variable $s\in\hbar\Integer$ of the
form
\begin{align}
    L &= e^{\hbar \ders}
      + \sum_{n=0}^\infty u_{n+1}(\hbar,s,t,\tbar)e^{-n\hbar\ders},
\label{L}
\\
    \Lbar^{-1} &= \ubar_0(\hbar,t,\tbar,s) e^{-\hbar\ders} +
    \sum_{n=0}^\infty \ubar_{n+1}(\hbar,t,\tbar,s) e^{n\hbar\ders}
\label{Lbar}
\end{align}
and $(\quad)_{\ge 0}$ and $(\quad)_{\le -1}$ are projections onto a
linear combination of $e^{n\hbar\der/\der s}$ with $n\geqq 0$ and $\leqq
-1$, respectively. Note that $e^{\hbar\ders}$ is a difference operator
with step $\hbar$: $e^{n\hbar\ders}f(s)=f(s+n\hbar)$. The coefficients
$u_n(\hbar,t,\tbar,s)$, $\ubar_n(\hbar,t,\tbar,s)$ of $L$, $\Lbar$ are
assumed to be formally regular with respect to $\hbar$:
$u_n(\hbar,t,\tbar,s) = \sum_{m=0}^\infty \hbar^m u_n^{(m)}(t,\tbar,s)$,
$\ubar_n(\hbar,t,\tbar,s) = \sum_{m=0}^\infty \hbar^m
\ubar_n^{(m)}(t,\tbar,s)$ as $\hbar \to 0$.

We define the {\em $\hbar$-order} of the difference
operator by 
\begin{equation}
    \ordh \left( \sum a_{n,m}(t,\tbar,s) \hbar^n e^{m \hbar\ders} \right)
    \defeq
    \max \left\{ -n \,\left|\, 
                 \sum_{m} a_{n,m}(t,\tbar,s) e^{m \hbar\ders} \neq 0 
    \right.\right\}.
\label{def:ordh}
\end{equation}
In particular, $\ordh\hbar=-1$, $\ordh e^{\hbar\ders}=0$. For
example, the condition which we imposed on the coefficients
$u_n(\hbar,t,\tbar,s)$ and $u_n(\hbar,t,\tbar,s)$ can be restated as
$\ordh(L) = \ordh(\Lbar) = 0$. 

The {\em principal symbol} (resp.\ the {\em symbol of order $l$})
of a difference operator $A = \sum a_{n,m}(t,\tbar,s)\hbar^n e^{m\hbar\ders}$
with respect to the $\hbar$-order is
\begin{align}
    \symh(A)
    &\defeq \sum_{m} a_{-\ordh(A),m}(t,\tbar,s) \xi^m 
\label{def:pr-symbol}
    \\
    (\text{resp. } 
    \symh_l(A)
    &\defeq \sum_{m} a_{-l,m}(t,\tbar,s) \xi^m.
\label{def:symbol}
\end{align}
When it is clear from the context, we sometimes use $\symh$ instead
of $\symh_l$.

The Lax operators $L$ and $\Lbar$ are expressed by {\em dressing
operators} $W$ and $\Wbar$:
\begin{equation}
    L     = \Ad W     (e^{\hbar\ders}) 
          = W e^{\hbar\ders} W^{-1}, \qquad
    \Lbar = \Ad \Wbar (e^{\hbar\ders}) 
          = \Wbar e^{\hbar\ders} \Wbar^{-1},
\label{L,Lbar=Ad(W,Wbar)expd}
\end{equation}
The operators $W$ and $\Wbar$ should have specific forms:
\begin{gather}
     W    = e^{\hbar^{-1} X^\circ(\hbar,t,\tbar,s,e^{\hbar\ders}) }
            e^{\hbar^{-1}\alpha(\hbar) (\hbar\ders)}
\label{W=exp(Xcirc)exp(alpha)}
\\
     X^\circ(\hbar,t,\tbar,s,e^{\hbar\ders}) =
    \sum_{k=1}^\infty \chi^\circ_k   (\hbar,t,\tbar,s) e^{-k\hbar\ders},
\label{Xcirc}
\\
    \Wbar = e^{\hbar^{-1} \phi(\hbar,t,\tbar,s)}
            e^{\hbar^{-1} \Xbar^\circ(\hbar,t,\tbar,s,e^{\hbar\ders})}
            e^{\hbar^{-1}\alphabar(\hbar) (\hbar\ders)}
\label{Wbar=exp(phi)exp(Xbarcirc)exp(alpha)}
\\
    \Xbar^\circ(\hbar,t,\tbar,s,e^{\hbar\ders}) =
    \sum_{k=1}^\infty \chibar^\circ_k(\hbar,t,\tbar,s) e^{ k\hbar\ders},
\label{Xbarcirc}
\\
\begin{aligned}
    \ordh (\phi(\hbar,t,\tbar,s)) &=
    \ordh (X^\circ(\hbar,t,\tbar,s,e^{\hbar\ders})) = 
    \ordh \alpha(\hbar)
\\
    &= \ordh (\Xbar^\circ(\hbar,t,\tbar,s,e^{\hbar\ders})) = 
    \ordh \alphabar(\hbar) = 0,
\end{aligned}
\label{ordXcircetc=0}
\end{gather}
and $\alpha(\hbar)$ and $\alphabar(\hbar)$ are constants with respect
to $t$, $\tbar$ and $s$. (In \cite{tak-tak:95} we did not introduce
$\alpha$, $\alphabar$, which will be necessary in
\secref{sec:recursion}.) 

Note that the set of operators of the form
\begin{equation}
    a\hbar\frac{\der}{\der s} + \sum_{k} \chi_k(s) e^{k\hbar\ders},
\label{diff+diff-op}
\end{equation}
where $a$ does not depend on $s$ and $\chi_k(s)$ are functions of $s$,
is closed under the commutator bracket. Hence any theorem or formula for
Lie algebras can be applied to such operators. In particular, using the
Campbell-Hausdorff formula, we can rewrite $W$ and $\Wbar$ in
the following form, which will be more convenient in the later
discussion:
\begin{gather}
    W     = e^{\hbar^{-1} X(\hbar,t,\tbar,s,e^{\hbar\ders})}
\label{W=exp(X)}
\\
    X(\hbar,t,\tbar,s,e^{\hbar\ders})
    =
    \alpha(\hbar) \hbar\frac{\der}{\der s}
    +
    \sum_{k=1}^\infty \chi_k   (\hbar,t,\tbar,s) e^{-k\hbar\ders},
\label{X}
\\
    \Wbar = e^{\hbar^{-1} \phi(\hbar,t,\tbar,s)}
            e^{\hbar^{-1} \Xbar(\hbar,t,\tbar,s,e^{\hbar\ders})},
\label{Wbar=exp(phi)exp(Xbar)}
\\
    \Xbar(\hbar,t,\tbar,s,e^{\hbar\ders})
    =
    \alphabar(\hbar) \hbar\frac{\der}{\der s}
    +
    \sum_{k=1}^\infty \chibar_k(\hbar,t,\tbar,s) e^{ k\hbar\ders},
\label{Xbar}
\\
\begin{aligned}
    \ordh (\phi(\hbar,t,\tbar,s)) &=
    \ordh (X(\hbar,t,\tbar,s,e^{\hbar\ders})) = 
    \ordh \alpha(\hbar)
\\
    &= \ordh (\Xbar(\hbar,t,\tbar,s,e^{\hbar\ders})) = 
    \ordh \alphabar(\hbar) = 0.
\end{aligned}
\label{ordXetc=0}
\end{gather}
Here we define the $\hbar$-order and the principal symbol of operators
of the form \eqref{diff+diff-op}, in particular those of $X$ and
$\Xbar$, by defining $\ordh(\hbar\ders)=0$ and
$\symh(\hbar\ders)=\log\xi$, which are consistent with the former
definitions \eqref{def:ordh} and \eqref{def:pr-symbol}.

The {\em wave functions} $\Psi(\hbar,t,\tbar,s;z)$ and
$\Psibar(\hbar,t,\tbar,s;\zbar)$ are defined by
\begin{equation}
    \Psi(\hbar,t,\tbar,s;z) 
    = W z^{s/\hbar} e^{\zeta(t,z)/\hbar},\qquad
    \Psibar(\hbar,t,\tbar,s;\zbar) 
    = \Wbar \zbar^{s/\hbar} e^{\zeta(\tbar,\zbar^{-1})/\hbar},
\label{def:wave-func}
\end{equation}
where $\zeta(t,z)=\sum_{n=1}^\infty t_n z^n$,
$\zeta(\tbar,\zbar^{-1})=\sum_{n=1}^\infty \tbar_n \zbar^{-n}$. They
are solutions of linear equations
\begin{align*}
    L \Psi &= z \Psi, &
    \hbar\frac{\der \Psi}{\der t_n} &= B_n \Psi,&
    \hbar\frac{\der \Psi}{\der \tbar_n} &= \Bbar_n \Psi,
    \quad (n=1,2,\dotsc),
\\
    \Lbar \Psibar &= \zbar \Psibar, &
    \hbar\frac{\der \Psibar}{\der t_n} &= B_n \Psibar,&
    \hbar\frac{\der \Psibar}{\der \tbar_n} &= \Bbar_n \Psibar,
    \quad (n=1,2,\dotsc),
\end{align*}
and have the WKB form \eqref{wave-func}, as we shall show in
\secref{sec:wave-function}. Moreover they are expressed by means of the
{\em tau function} $\tau(\hbar,t,\tbar,s)$ as follows:
\begin{equation}
 \begin{aligned}
    \Psi(\hbar,t,\tbar;z) &=
    \frac{ \tau(\hbar,t-\hbar [z^{-1}] ,\tbar,s) }
         { \tau(\hbar,t,\tbar,s) }
    z^{\alpha(\hbar)/\hbar} 
    z^{s/\hbar} e^{\zeta(t,z)/\hbar},
\\
    \Psibar(\hbar,t,\tbar;\zbar) &=
    \frac{ \tau(\hbar,t,\tbar-\hbar [\zbar] ,s+\hbar) }
         { \tau(\hbar,t,\tbar,s) }
    \zbar^{\alphabar(\hbar)/\hbar} 
    \zbar^{s/\hbar} e^{\zeta(\tbar,\zbar^{-1})/\hbar}
 \end{aligned}
\label{tau/tau}
\end{equation}
where $[z^{-1}]=(1/z,1/2z^2,1/3z^3,\dots)$,
$[\zbar]=(\zbar,\zbar^2/2,\zbar^3/3,\dotso)$. We shall study the
$\hbar$-expansion of the tau function in \secref{sec:tau-function}.

The {\em Orlov-Schulman operators} $M$ and $\Mbar$ \cite{orl-sch} are
defined by
\begin{align}
    M &= 
    \Ad \left(
          W \exp\left(\hbar^{-1} \zeta(t,e^{\hbar\ders}) \right)
        \right)s &
    &=
    W \left( \sum_{n=1}^\infty n t_n e^{n\hbar\ders} + s \right)
    W^{-1}
\label{def:M}
\\
    \Mbar &= 
    \Ad \left(
          \Wbar \exp\left(\hbar^{-1} \zeta(\tbar,e^{-\hbar\ders})
                    \right)
        \right)s &
    &=
    \Wbar
     \left( -\sum_{n=1}^\infty n t_n e^{-n\hbar\ders} + s \right)
    \Wbar^{-1}
\label{def:Mbar}
\end{align}
where $\zeta(t,e^{\hbar\ders}) = \sum_{n=1}^\infty t_n e^{n\hbar\ders}$
and $\zeta(\tbar,e^{-\hbar\ders}) = \sum_{n=1}^\infty \tbar_n
e^{-n\hbar\ders}$. It is easy to see that $M$ and $\Mbar$ have forms
\begin{align}
    M &= \sum_{n=1}^\infty n t_n L^n + s + \alpha(\hbar)
       + \sum_{n=1}^\infty v_n(\hbar,t,\tbar,s) L^{-n},
\label{M}
\\
    \Mbar &=-\sum_{n=1}^\infty n \tbar_n \Lbar^{-n} + s
           + \alphabar(\hbar)
           + \sum_{n=1}^\infty \vbar_n(\hbar,t,\tbar,s) \Lbar^n.
\label{Mbar}
\end{align}
and satisfies the following properties:
\begin{itemize}
\item $\ordh(M) = \ordh(\Mbar) = 0$;
\item the canonical commutation relation: $[L, M] = \hbar L$ and
      $[\Lbar, \Mbar] = \hbar \Lbar$;
\item the same Lax equations as $L$, $\Lbar$:
\begin{equation}
 \begin{aligned}
    \hbar \frac{\der M}{\der t_n}     &= [ B_n, M ], \quad&
    \hbar \frac{\der M}{\der \tbar_n} &= [\Bbar_n, M ],
\\
    \hbar \frac{\der \Mbar}{\der t_n}     &= [ B_n, \Mbar ], \quad&
    \hbar \frac{\der \Mbar}{\der \tbar_n} &= [ \Bbar_n, \Mbar ],
 \end{aligned}
    \quad n = 1,2,\ldots;
\label{lax:M,Mbar}
\end{equation}
\item another linear equation for the wave function $\Psi$:
\begin{equation*}
    M \Psi = \hbar z \frac{\der \Psi}{\der z}, \qquad
    \Mbar \Psibar = \hbar \zbar \frac{\der\Psibar}{\der \zbar}.
\end{equation*}
\end{itemize}

\begin{rem}
\label{rem:alpha=const}
 As in the KP case (Remark 1.2 in \cite{tak-tak:09}), if an operator $M$
 of the form \eqref{M} and and an operator $\Mbar$ of the form
 \eqref{Mbar} satisfy the Lax equations \eqref{lax:M,Mbar} and the
 canonical commutation relation $[L,M]=\hbar L$ and
 $[\Lbar,\Mbar]=\hbar\Lbar$ with the Lax operator $L$ and $\Lbar$ of the
 Toda lattice hierarchy, then $\alpha(\hbar)$ and $\alphabar(\hbar)$ in
 the expansions \eqref{M} and \eqref{Mbar} do not depend on any $t_n$,
 $\tbar_n$ nor $s$. In fact, suppose that $\alpha(\hbar)$ depens on $s$;
 $\alpha(\hbar)=\alpha(\hbar,s)$. Then, the canonical commutation
 relation $[L,M]=\hbar L$ is expanded as
\[
    (\hbar+\alpha(\hbar,s+\hbar)-\alpha(\hbar,s)) e^{\hbar\ders}
    + (\text{difference operators of lower order}) 
    = \hbar L,
\]
 which implies $\alpha(\hbar,s+\hbar)=\alpha(\hbar,s)$. Similarly, from
 \eqref{lax:M,Mbar} follows $\frac{\der\alpha}{\der
 t_n}=\frac{\der\alpha}{\der \tbar_n}=0$ with the help of \eqref{todah}
 and $[L^n, M] = n\hbar L^n$. For $\alphabar(s)$ the proof is the same.
\end{rem}

The following proposition (Proposition 2.7.11 of \cite{tak-tak:95}) is a
``dispersionful'' counterpart of the theorem for the dispersionless Toda
hierarchy found earlier (cf.\ Section 4 of \cite{tak-tak:91} and 
\propref{prop:dRH} below).

\begin{prop}
\label{prop:RH}
(i)
 Suppose that operators 
 $f(\hbar,s,e^{\hbar\ders})$, $g(\hbar,s,e^{\hbar\ders})$,
 $\fbar(\hbar,s,e^{\hbar\ders})$, $\gbar(\hbar,s,e^{\hbar\ders})$,
 $L$, $\Lbar$, $M$ and $\Mbar$ satisfy the following conditions:
\begin{itemize}
 \item $\ordh f = \ordh g = \ordh \fbar = \ordh \gbar = 0$,
       $[f,g]=\hbar f$, $[\fbar,\gbar]=\hbar\fbar$;
 \item $L$, $\Lbar$, $M$ and $\Mbar$ are of the form \eqref{L},
       \eqref{Lbar}, \eqref{M} and \eqref{Mbar} respectively. They are
       canonically commuting: $[L,M]=\hbar L$,
       $[\Lbar,\Mbar]=\hbar\Lbar$; 
 \item Equations
\begin{equation}
    f(\hbar,M,L) = \fbar(\hbar,\Mbar,\Lbar), \qquad
    g(\hbar,M,L) = \gbar(\hbar,\Mbar,\Lbar)
\label{(f,g)(M,L)=(fbar,gbar)(Mbar,Lbar)}
\end{equation}
       hold.
\end{itemize}
 Then the pair $(L,\Lbar)$ is a solution of the Toda lattice hierarchy
 \eqref{todah} and $M$ and $\Mbar$ are the corresponding Orlov-Schulman
 operators.

(ii)
 Conversely, for any solution $(L,\Lbar,M,\Mbar)$ of the
 $\hbar$-dependent Toda lattice hierarchy there exists a quadruplet
 $(f,\fbar,g,\gbar)$ satisfying the conditions in (i).
\end{prop}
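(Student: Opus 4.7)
My plan is a dressing / Riemann--Hilbert argument paralleling the $\hbar$-dependent KP case \cite{tak-tak:09,tak-tak:95}. For part (i), I first construct dressing operators $W, \Wbar$ of the forms \eqref{W=exp(X)}--\eqref{Wbar=exp(phi)exp(Xbar)} realising
\begin{equation*}
   L = W e^{\hbar\ders} W^{-1}, \quad M = W\Bigl(s + \sum_{n\ge 1} n t_n e^{n\hbar\ders}\Bigr) W^{-1},
\end{equation*}
and the barred analogues with $\Wbar$ and $s - \sum_{n\ge 1} n\tbar_n e^{-n\hbar\ders}$; the coefficients $\chi_k, \chibar_k, \phi$ are determined recursively by matching powers of $e^{k\hbar\ders}$, with compatibility ensured by $[L, M] = \hbar L$ and $[\Lbar, \Mbar] = \hbar\Lbar$. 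Next, I pass to the enlarged dressings
\begin{equation*}
    V := W \exp\bigl(\hbar^{-1}\zeta(t, e^{\hbar\ders})\bigr), \qquad
    \bar V := \Wbar \exp\bigl(\hbar^{-1}\zeta(\tbar, e^{-\hbar\ders})\bigr).
\end{equation*}
Using the identity $e^{\hbar^{-1}\zeta(t, e^{\hbar\ders})} s e^{-\hbar^{-1}\zeta(t, e^{\hbar\ders})} = s + \sum n t_n e^{n\hbar\ders}$ (a consequence of $[s, e^{\hbar\ders}] = -\hbar e^{\hbar\ders}$), one obtains the cleaner relations $L = V e^{\hbar\ders} V^{-1}$, $M = V s V^{-1}$, and the barred versions. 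The twistor equations \eqref{(f,g)(M,L)=(fbar,gbar)(Mbar,Lbar)} then translate to $V f(\hbar, s, e^{\hbar\ders}) V^{-1} = \bar V \fbar(\hbar, s, e^{\hbar\ders}) \bar V^{-1}$, and likewise with $g, \gbar$.

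Applying $\hbar\der_{t_n}$ and writing $\Omega_n := (\hbar\der_{t_n} V) V^{-1}$, $\bar\Omega_n := (\hbar\der_{t_n} \bar V) \bar V^{-1}$, the $t_n$-independence of $f, \fbar$ yields $[\Omega_n - \bar\Omega_n, F] = [\Omega_n - \bar\Omega_n, G] = 0$ with $F := f(\hbar, M, L)$ and $G := g(\hbar, M, L)$. Conjugating back by $V$, the operator $V^{-1}(\Omega_n - \bar\Omega_n) V$ commutes with $f(\hbar, s, e^{\hbar\ders})$ and $g(\hbar, s, e^{\hbar\ders})$; because $[f, g] = \hbar f$ is a Heisenberg-type relation, its common centralizer in the difference operator algebra consists only of scalars, so $\Omega_n - \bar\Omega_n$ is scalar. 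A direct computation, using that $\der_{t_n} X$ and $\der_{t_n} \Xbar$ involve only negative, resp.\ positive, powers of $e^{\hbar\ders}$, gives $\Omega_n = L^n + (\hbar\der_{t_n} W) W^{-1}$ with the second summand in $(\quad)_{\le -1}$, whereas $\bar\Omega_n = (\hbar\der_{t_n}\Wbar)\Wbar^{-1}$ has no $(\quad)_{\le -1}$ part. Matching powers of $e^{k\hbar\ders}$ in the scalar identity $\Omega_n - \bar\Omega_n = $ const then forces $(\hbar\der_{t_n} W) W^{-1} = -(L^n)_{\le -1}$, so $\Omega_n = B_n = (L^n)_{\ge 0}$, with the residual scalar absorbed into $\der_{t_n}\phi$. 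The Lax equations \eqref{todah} and \eqref{lax:M,Mbar} for $t_n$-flows then follow from $\hbar\der_{t_n} L = [\Omega_n, L]$ and its analogues; the parallel argument with $\hbar\der_{\tbar_n}$ yields $\Bbar_n = (\Lbar^{-n})_{\le -1}$ and the remaining equations.

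For part (ii), I exhibit the simplest possible twistor data. Take
\begin{equation*}
    f(\hbar, y, x) := x, \qquad g(\hbar, y, x) := y,
\end{equation*}
so $f(\hbar, M, L) = L$, $g(\hbar, M, L) = M$, and $[f, g] = [e^{\hbar\ders}, s] = \hbar e^{\hbar\ders} = \hbar f$. Fix $(t, \tbar) = (0, 0)$ and denote by $W_0, \Wbar_0, L_0, \Lbar_0, M_0, \Mbar_0$ the values of the various operators at that point (in particular $\Mbar_0 = \Wbar_0 s \Wbar_0^{-1}$, $\Lbar_0 = \Wbar_0 e^{\hbar\ders} \Wbar_0^{-1}$). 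Define
\begin{equation*}
    \fbar(\hbar, s, e^{\hbar\ders}) := \Wbar_0^{-1} L_0 \Wbar_0, \qquad
    \gbar(\hbar, s, e^{\hbar\ders}) := \Wbar_0^{-1} M_0 \Wbar_0,
\end{equation*}
viewed as difference operators in $(s, e^{\hbar\ders})$. Then $[\fbar, \gbar] = \Wbar_0^{-1}[L_0, M_0]\Wbar_0 = \hbar\fbar$, and by construction $\fbar(\hbar, \Mbar_0, \Lbar_0) = \Wbar_0 \fbar \Wbar_0^{-1} = L_0 = f(\hbar, M_0, L_0)$, and analogously for $\gbar$, so \eqref{(f,g)(M,L)=(fbar,gbar)(Mbar,Lbar)} holds at $(t, \tbar) = (0, 0)$. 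To propagate to arbitrary $(t, \tbar)$, set $D := f(\hbar, M, L) - \fbar(\hbar, \Mbar, \Lbar)$; the given Lax equations \eqref{todah} and \eqref{lax:M,Mbar} give $\hbar\der_{t_n} D = [B_n, D]$ and $\hbar\der_{\tbar_n} D = [\Bbar_n, D]$, so $D$ satisfies a linear initial value problem with $D|_{(0,0)} = 0$; by uniqueness, $D \equiv 0$. The same reasoning handles $g, \gbar$.

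The main technical obstacle is the commutant step of part (i): showing that for a pair $(f, g)$ of difference operators with $[f, g] = \hbar f$, the common centralizer reduces to scalars. Under appropriate non-degeneracy (so that $f, g$ generate an algebra isomorphic to the one generated by $e^{\hbar\ders}$ and $s$), this follows from the fact that operators commuting with both a translation and a multiplication-by-$s$ are constants. The leftover scalar freedom in $\Omega_n - \bar\Omega_n$ is then absorbed into the gauge of $\phi$, consistent with the constancy of $\alpha(\hbar), \alphabar(\hbar)$ required by \remref{rem:alpha=const}.
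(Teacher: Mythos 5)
The paper itself gives no proof of \propref{prop:RH}: it is quoted as Proposition 2.7.11 of \cite{tak-tak:95}, so your attempt has to be measured against the standard twistor-theorem argument given there. Your proof follows essentially that route. In part (i), dressing $(L,M)$ and $(\Lbar,\Mbar)$, passing to $V=W\exp(\hbar^{-1}\zeta(t,e^{\hbar\ders}))$ and $\bar V=\Wbar\exp(\hbar^{-1}\zeta(\tbar,e^{-\hbar\ders}))$, deducing $[\Omega_n-\bar\Omega_n,F]=[\Omega_n-\bar\Omega_n,G]=0$, concluding that $\Omega_n-\bar\Omega_n$ is a constant, and then extracting $\Omega_n=B_n$ (resp.\ the $\tbar$-analogue with $\Bbar_n$) from the triangular structure of $(\hbar\der_{t_n}W)W^{-1}$ and $(\hbar\der_{t_n}\Wbar)\Wbar^{-1}$ is exactly the classical argument; your part (ii) data $\fbar=\Wbar_0^{-1}L_0\Wbar_0=(W_0^{-1}\Wbar_0)^{-1}e^{\hbar\ders}(W_0^{-1}\Wbar_0)$, $\gbar=\Wbar_0^{-1}M_0\Wbar_0$ is the standard choice as well, and your propagation by uniqueness of the linear initial value problem is a legitimate (formal power series) replacement for the equivalent observation that $V^{-1}\bar V$ is $(t,\tbar)$-independent.

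The one step you must tighten is the centralizer lemma, which you yourself flag: you invoke it ``under appropriate non-degeneracy'', but \propref{prop:RH} contains no such hypothesis, so as written your proof of (i) is conditional. Fortunately the needed non-degeneracy is a consequence of the stated hypotheses and should be proved, not assumed. From $[f,g]=\hbar f$ and $\ordh f=\ordh g=0$ one gets $\{\symh(f),\symh(g)\}=\symh(f)\neq 0$ for the bracket \eqref{poisson}; since $\{a,b\}=\xi\,(\der_\xi a\,\der_s b-\der_s a\,\der_\xi b)$, the Jacobian determinant of $(\symh(f),\symh(g))$ with respect to $(\xi,s)$ equals $\xi^{-1}\symh(f)$, which is not a zero divisor in the ring of symbols. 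Hence any symbol Poisson-commuting with both $\symh(f)$ and $\symh(g)$ has vanishing $\xi$- and $s$-derivatives, i.e.\ is constant. If now $[C,f]=[C,g]=0$ with $\ordh C\leqq 0$, this applies to $\symh(C)$; subtracting that constant (times the appropriate power of $\hbar$) lowers $\ordh C$ by at least one, and induction on the $\hbar$-order shows that $C$ is a constant formal series in $\hbar$. Inserting this closes the gap. The remaining points you gloss over --- the existence of dressing operators realising a canonically commuting pair of the forms \eqref{L}, \eqref{M}, the constancy of $\alpha,\alphabar$ needed for $(\hbar\der_{t_n}W)W^{-1}\in(\ )_{\leq-1}$ and $(\hbar\der_{t_n}\Wbar)\Wbar^{-1}\in(\ )_{\geq 0}$ (your appeal to \remref{rem:alpha=const} is correct), and the formal well-definedness of the various $\Ad$-actions --- are standard ingredients of the framework used throughout the paper and the cited reference.
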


The leading term of the $\hbar$-dependent Toda lattice hierarchy with
respect to the $\hbar$-order gives the {\em dispersionless Toda
hierarchy}. Namely,
\begin{align}
    \calL &:= \symh(L)
    = \xi
    + \sum_{n=0}^\infty u_{0,n+1} \xi^{-n}, \qquad
    (u_{0,n+1} := \symh(u_{n+1})),
\label{calL}
\\
    \calLbar^{-1} &= \symh(\Lbar^{-1})
    = \ubar_{0,0} \xi^{-1} +
    \sum_{n=0}^\infty \ubar_{0,n+1} \xi^n, \qquad
    (\ubar_{0,n+1} := \symh(\ubar_{n+1}))
\label{calLbar}
\end{align}
satisfy the dispersionless Lax type equations
\begin{equation}
 \begin{aligned}
     &\frac{\der \calL}{\der t_n}     = \{\calB_n,    \calL\},\qquad&
     &\frac{\der \calL}{\der \tbar_n} = \{\calBbar_n, \calL\},
\\
     &\frac{\der \calLbar}{\der t_n}     = \{\calB_n,    \calLbar\},
     \qquad&
     &\frac{\der \calLbar}{\der \tbar_n} = \{\calBbar_n, \calLbar\},
\\
     &\calB_n = (\calL^n)_{\geq 0}, &
     &\calBbar_n = (\calLbar^{-n})_{\le -1}, \qquad n=1,2,\dotsc,
 \end{aligned}
\label{dtoda}
\end{equation}
where $(\quad)_{\geq 0}$ and $(\quad)_{\geq 0}$ are the truncation of
Laurent series to the polynomial part and to the negative order part
respectively. The Poisson bracket $\{,\}$ is defined by
\begin{equation}
    \{a(s,\xi), b(s,\xi)\}
    =
    \xi\left(
    \frac{\der a}{\der \xi} \frac{\der b}{\der s}
    -
    \frac{\der a}{\der s} \frac{\der b}{\der \xi}
    \right).
\label{poisson}
\end{equation}

The dressing operation \eqref{L,Lbar=Ad(W,Wbar)expd} for $L$ and $\Lbar$
becomes the following dressing operation for $\calL$ and $\calLbar$:
\begin{equation}
 \begin{aligned}
    \calL
    &= \exp \bigl( \ad_{\{,\}} X_0 \bigr) \xi,
    &
    X_0     &:= \symh(X),
\\
    \calLbar
    &= \exp \bigl( \ad_{\{,\}} \phi_0 \bigr) 
       \exp \bigl( \ad_{\{,\}} \Xbar_0 \bigr) \xi,
    &
    \phi_0 &:= \symh(\phi),\ 
    \Xbar_0 := \symh(\Xbar),
\end{aligned}
\label{calL,Lbar=exp(adX,Xbar)xi}
\end{equation}
where $\ad_{\{,\}} (f) (g):= \{f,g\}$. 

The principal symbol of the Orlov-Schulman operators are {\em
Orlov-Schulman functions},
\begin{align}
    \calM &= \sum_{n=1}^\infty nt_n \calL^n + s
    + \alpha_0
    + \sum_{n=1}^\infty v_{0,n} \calL^{-n},
\label{calM}
\\
    (v_{0,n}&:=\symh(v_n),\quad
    \alpha_0:=\symh(\alpha(\hbar)) )
\nonumber
\\ 
   \calMbar &= - \sum_{n=1}^\infty n\tbar_n \calLbar^{-n} + s
    + \alphabar_0 
    + \sum_{n=1}^\infty \vbar_{0,n} \calLbar^n,
\label{calMbar}
\\
    (\vbar_{0,n}&:=\symh(\vbar_n), \quad
    \alphabar_0:=\symh(\alphabar_0(\hbar))).
\nonumber
\end{align}
which are equal to
\begin{align}
    \calM &= 
    \exp \bigl( \ad_{\{,\}} X_0 \bigr) 
    \exp \bigl( \ad_{\{,\}} \zeta(t,\xi)\bigr) s,
\label{calM=exp(adX)s}
\\
    \calMbar &= 
    \exp \bigl( \ad_{\{,\}} \phi_0 \bigr)
    \exp \bigl( \ad_{\{,\}} \Xbar_0 \bigr) 
    \exp \bigl( \ad_{\{,\}} \zeta(\tbar,\xi^{-1})\bigr) s
\label{calMbar=exp(adXbar)s}
\end{align}
where $\zeta(t,\xi) = \sum_{n=1}^\infty t_n \xi^n$ and
$\zeta(\tbar,\xi^{-1}) = \sum_{n=1}^\infty \tbar_n \xi^{-n}$. The series
$\calM$ satisfies the canonical commutation relation with $\calL$,
$\{\calL,\calM\}=\calL$, while $\calMbar$ satisfies the canonical
commutation relation with $\calLbar$,
$\{\calLbar,\calMbar\}=\calLbar$. The principal symbols of equations
\eqref{lax:M,Mbar} give the Lax type equations:
\begin{equation}
 \begin{aligned}
    \hbar \frac{\der \calM}{\der t_n}     &= \{\calB_n,    \calM\},
    \quad&
    \hbar \frac{\der \calM}{\der \tbar_n} &= \{\calBbar_n, \calM\},
\\
    \hbar \frac{\der \calMbar}{\der t_n}  
    &= \{\calB_n, \calMbar\},
    \quad&
    \hbar \frac{\der \calMbar}{\der \tbar_n}
    &= \{\calBbar_n, \calMbar\},
 \end{aligned}
    \quad n = 1,2,\ldots,
\label{lax:calM,Mbar}
\end{equation}

The Riemann-Hilbert type construction of the solution is essentially the
same as \propref{prop:RH}. (Proposition 2.5.1 of \cite{tak-tak:95}; We
do not need to assume the canonical commutation relation
$\{\calL,\calM\}=\calL$ and $\{\calLbar,\calMbar\}=\calLbar$.)

\begin{prop}
\label{prop:dRH}
(i)
 Suppose that functions $f_0(s,\xi)$, $g_0(s,\xi)$, $\fbar_0(s,\xi)$,
 $\gbar_0(s,\xi)$, $\calL$, $\calLbar$, $\calM$ and $\calMbar$
 satisfy the following conditions:
\begin{itemize}
 \item $\{f_0,g_0\}=f_0$, $\{\fbar_0,\gbar_0\}=\fbar_0$;
 \item $\calL$, $\calLbar^{-1}$, $\calM$ and $\calMbar$ have the form
       \eqref{calL}, \eqref{calLbar}, \eqref{calM} and \eqref{calMbar}
       respectively. 
 \item Equations
\begin{equation}
    f_0(\calM,\calL) = \fbar_0(\calMbar,\calLbar), \qquad
    g_0(\calM,\calL) = \gbar_0(\calMbar,\calLbar).
\label{(f0,g0)(M,L)=(fbar0,gbar0)(Mbar,Lbar)}
\end{equation}
       hold.
\end{itemize}
 Then the pair $(\calL,\calLbar)$ is a solution of the dispersionless
 Toda hierarchy \eqref{dtoda} and $\calM$ and $\calMbar$ are the
 corresponding Orlov-Schulman functions.

(ii)
 Conversely, for any solution $(\calL,\calLbar,\calM,\calMbar)$ of the
 dispersionless Toda hierarchy, there exists a quadruplet
 $(f_0,g_0,\fbar_0,\gbar_0)$  satisfying the conditions in (i).
\end{prop}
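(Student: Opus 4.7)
The plan is to adapt the proof of \propref{prop:RH} (equivalently, Proposition 2.7.11 of \cite{tak-tak:95}) to the quasi-classical setting, with operator commutators replaced by the Poisson bracket \eqref{poisson}. Part (i) has two tasks: extract the canonical commutation relations $\{\calL,\calM\} = \calL$ and $\{\calLbar,\calMbar\} = \calLbar$ (not assumed in the hypotheses), and derive the Lax equations \eqref{dtoda}. Part (ii) constructs the twistor data from a given solution.

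For the canonical commutation relations in (i), the key ingredient is the chain-rule identity
\begin{equation*}
    \{F(\calM,\calL), G(\calM,\calL)\}
    = \frac{\{F,G\}(\calM,\calL)}{\calL}\,\{\calL,\calM\}
\end{equation*}
valid for any $F(X,Y)$, $G(X,Y)$, and its bar-sided analogue. Applied to the two twistor equations \eqref{(f0,g0)(M,L)=(fbar0,gbar0)(Mbar,Lbar)}, together with $\{f_0,g_0\} = f_0$ and $\{\fbar_0,\gbar_0\} = \fbar_0$, this gives $\{\calL,\calM\}/\calL = \{\calLbar,\calMbar\}/\calLbar$. A direct expansion using \eqref{calL}--\eqref{calMbar} shows that the left-hand side contains only nonpositive powers of $\xi$ with leading coefficient $1$, while the right-hand side contains only nonnegative powers of $\xi$ with leading coefficient $1$, so both must equal $1$.

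For the Lax equations, introduce the defects
\begin{equation*}
    A_n := \frac{\der\calL}{\der t_n} - \{\calB_n,\calL\}, \qquad
    A'_n := \frac{\der\calM}{\der t_n} - \{\calB_n,\calM\},
\end{equation*}
together with $\bar{A}_n, \bar{A}'_n$ defined analogously from $\calLbar,\calMbar$. Differentiating the two twistor equations in $t_n$ yields, via the chain rule, a $2 \times 2$ linear system equating $f_{0,X}A'_n + f_{0,Y}A_n$ and $g_{0,X}A'_n + g_{0,Y}A_n$ (evaluated at $(\calM,\calL)$) with their barred counterparts. A power-counting analysis using $\calB_n = (\calL^n)_{\geq 0}$, the splitting $\calL^n = \calB_n + (\calL^n)_{<0}$, and the canonical commutation relations just derived shows that $A_n, A'_n$ contain only nonpositive powers of $\xi$ while $\bar{A}_n, \bar{A}'_n$ contain only nonnegative powers; combined with the nonvanishing of the coefficient determinants, the matching of complementary sectors forces all four defects to vanish. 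The $\tbar_n$-direction is handled symmetrically with $\calBbar_n = (\calLbar^{-n})_{\leq -1}$.

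For part (ii), fix the canonical choice $\fbar_0(X,Y) = Y$ and $\gbar_0(X,Y) = X$ (so $\{\fbar_0,\gbar_0\} = \fbar_0$ trivially), and define $f_0, g_0$ by $f_0(\calM|_0,\calL|_0) = \calLbar|_0$ and $g_0(\calM|_0,\calL|_0) = \calMbar|_0$ at the reference point $t = \tbar = 0$. The canonical commutation $\{\calL|_0, \calM|_0\} = \calL|_0$ makes $(\calM|_0, \calL|_0)$ a canonical coordinate system in the relevant algebra of formal symbols, so $f_0, g_0$ exist and are unique; $\{\calLbar|_0, \calMbar|_0\} = \calLbar|_0$ then forces $\{f_0, g_0\} = f_0$ via the chain-rule formula. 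To propagate the twistor identities from $t = \tbar = 0$ to all $(t,\tbar)$, observe that by \eqref{dtoda}, \eqref{lax:calM,Mbar} and the chain rule, both sides of each twistor equation satisfy the linear evolutions $\der h/\der t_n = \{\calB_n, h\}$ and $\der h/\der \tbar_n = \{\calBbar_n, h\}$, so uniqueness of the resulting formal ODEs completes the argument. The main obstacle is the power-counting step in (i): since the twistor data enters arbitrarily through the coefficients of the linear system, the defects cannot be solved for directly, and one must carefully exploit that the unbarred and barred sides live in complementary Laurent-series sectors (at $\xi = \infty$ and $\xi = 0$ respectively), a structural feature of the Toda setting absent in the KP case of \cite{tak-tak:09}.
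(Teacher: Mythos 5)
First, a point of reference: the paper itself contains no proof of \propref{prop:dRH} --- it is quoted as Proposition 2.5.1 of \cite{tak-tak:95} --- so your attempt must be measured against the proof given there. Your first step is correct and is indeed how that proof begins: the chain-rule identity for the bracket \eqref{poisson} (with the first slot of $f_0$ playing the role of $s$) is valid, and combined with $\{f_0,g_0\}=f_0$, $\{\fbar_0,\gbar_0\}=\fbar_0$ and the forms \eqref{calL}--\eqref{calMbar} it gives $\{\calL,\calM\}/\calL=1+O(\xi^{-1})$ and $\{\calLbar,\calMbar\}/\calLbar=1+O(\xi)$, whence both equal $1$ (modulo the tacit assumption that $f_0(\calM,\calL)$ is not a zero divisor). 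Your outline of part (ii) --- canonical barred data, formal inversion of $(s,\xi)\mapsto(\calM,\calL)$ at $t=\tbar=0$, propagation in $t,\tbar$ by uniqueness of the flows --- is also the standard argument and is fine.

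The genuine gap is in your derivation of the Lax equations. The power-counting claim for the Orlov--Schulman defects is false. From \eqref{calM}, the defect $A'_n:=\der\calM/\der t_n-\{\calB_n,\calM\}$ contains the term $n\calL^n-\xi\der_\xi\calB_n$ (from $\der_{t_n}\bigl(\sum_m mt_m\calL^m\bigr)$ and $\{\calB_n,s\}=\xi\der_\xi\calB_n$), which equals $(n\calL^n)_{\leq-1}+\bigl(n\calL^{n-1}(\calL-\xi\der_\xi\calL)\bigr)_{\geq 0}$ and has order $n-1$ with top coefficient $nu_{0,1}$; it also contains $\sum_m m^2t_m\calL^{m-1}A_n$, of unbounded positive order unless $A_n$ already vanishes --- which is the very thing being proved. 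Symmetrically, $\bar{A}'_n$ acquires arbitrarily negative powers of $\xi$ through terms proportional to $\tbar_m\calLbar^{-m-1}\bar{A}_n$ coming from the part $-\sum_m m\tbar_m\calLbar^{-m}$ of $\calMbar$. So only the pure Lax defects $A_n$, $\bar{A}_n$ lie in complementary sectors; the four-defect system does not. Worse, even granting sector control of the defects, the coefficients of your $2\times2$ system are $f_{0,X}(\calM,\calL)$, $\fbar_{0,X}(\calMbar,\calLbar)$, etc.\ --- compositions of \emph{arbitrary} twistor data --- and have no controlled Laurent order, so ``matching of complementary sectors'' cannot be applied to that system at all; your closing sentence concedes this difficulty but does not overcome it. The missing idea, which is what the cited proof exploits, is to eliminate the twistor data identically \emph{before} any order counting: $\{f_0,g_0\}=f_0$ is equivalent to $f_0^{-1}df_0\wedge dg_0=\xi^{-1}d\xi\wedge ds$, so the two equations \eqref{(f0,g0)(M,L)=(fbar0,gbar0)(Mbar,Lbar)} imply the data-free identity of canonical two-forms
\begin{equation*}
    \frac{d\calL}{\calL}\wedge d\calM
    =
    \frac{d\calLbar}{\calLbar}\wedge d\calMbar,
\end{equation*}
where $d$ is the total differential in $\xi,s,t,\tbar$ (equivalently, equality of Jacobian ratios in which the derivatives of $f_0,g_0,\fbar_0,\gbar_0$ cancel). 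It is to the coefficients of this identity, not to your linear system, that the $\xi=\infty$ versus $\xi=0$ analysis is applied to obtain \eqref{dtoda} and \eqref{lax:calM,Mbar}. Without this reduction your scheme does not close.
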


If $(f,g,\fbar,\gbar)$, $(L,\Lbar,M,\Mbar)$ are as in \propref{prop:RH},
then
$(f_0=\symh(f),g_0=\symh(g),\fbar_0=\symh(\fbar),\gbar_0=\symh(\gbar))$,
$(\calL=\symh(L),\calLbar=\symh(\Lbar),\calM=\symh(M),\calMbar=\symh(\Mbar))$
satisfy the conditions in \propref{prop:dRH}. In other words,
$(f,g,\fbar,\gbar)$ and $(L,\Lbar,M,\Mbar)$ are quantisation of
$(f_0,g_0,\fbar_0,\gbar_0)$ and $(\calL,\calM,\calLbar,\calMbar)$
respectively. (See, for example, \cite{sch:85} for quantised canonical
transformations.)

\section{Recursive construction of the dressing operator}
\label{sec:recursion}

In this section we prove that the solution of the $\hbar$-dependent Toda
lattice hierarchy corresponding to $(f,g,\fbar,\gbar)$ in
\propref{prop:RH} is recursively constructed from its leading term,
i.e., the solution of the dispersionless Toda hierarchy corresponding to
the Riemann-Hilbert data
$(\symh(f),\symh(g),\symh(\fbar),\symh(\gbar))$.

Given the quadruplet $(f,g,\fbar,\gbar)$, we have to construct the
dressing operator $W$ and $\Wbar$, or, in other words, $X$ in
\eqref{W=exp(X)} and $\phi$, $\Xbar$ in \eqref{Wbar=exp(phi)exp(Xbar)},
such that equations \eqref{(f,g)(M,L)=(fbar,gbar)(Mbar,Lbar)} hold, or
equivalently, the following equations hold:
\begin{equation}
 \begin{aligned}
    &\Ad \left(
         W \exp\left(\hbar^{-1} \zeta(t,e^{\hbar\ders}) \right)
         \right) f(\hbar,s,e^{\hbar\ders})
\\
    ={}
    &\Ad \left(
         \Wbar \exp\left(\hbar^{-1} \zeta(\tbar,e^{-\hbar\ders}) \right)
         \right) \fbar(\hbar,s,e^{\hbar\ders}),
\\
    &\Ad \left(
         W \exp\left(\hbar^{-1} \zeta(t,e^{\hbar\ders}) \right)
         \right) g(\hbar,s,e^{\hbar\ders})
\\  ={}
    &\Ad \left(
         \Wbar \exp\left(\hbar^{-1} \zeta(\tbar,e^{-\hbar\ders}) \right)
         \right) \gbar(\hbar,s,e^{\hbar\ders}).
 \end{aligned}
\label{Ad(W)(f,g)=Ad(Wbar)(fbar,gbar)}
\end{equation}
Let us expand $X$, $\Xbar$ and $\phi$ with respect to the $\hbar$-order
as follows:
\begin{align}
    X &= \sum_{n=0}^\infty \hbar^n X_n,\quad
    X_n = X_n(t,\tbar,s,e^{\hbar\ders}) =
    \alpha_n \hbar\frac{\der}{\der s}
    +
    \sum_{k=1}^\infty \chi_{n,k}(t,\tbar,s) e^{-k\hbar\ders},
\label{X:h-expansion}
\\
    \Xbar &= \sum_{n=0}^\infty \hbar^n \Xbar_n,\quad
    \Xbar_n = \Xbar_n(t,\tbar,s,e^{\hbar\ders}) =
    \alphabar_n \hbar\frac{\der}{\der s}
    +
    \sum_{k=1}^\infty \chibar_{n,k}(t,\tbar,s) e^{ k\hbar\ders},
\label{Xbar:h-expansion}
\\
    \phi &= \sum_{n=0}^\infty \hbar^n \phi_n(t,\tbar,s), \quad
    \phi_n = \phi_n(t,\tbar,s),
\label{phi:h-expansion}
\end{align}
where $\alpha_n$, $\alphabar_n$, $\chi_{n,k}$ and $\chibar_{n,k}$ do not
depend on $\hbar$, and hence $\alpha$, $\chi_k$ in \eqref{X}
and $\alphabar$, $\chibar_k$ in \eqref{Xbar} are expanded as
$\alpha=\sum_{n=0}^\infty\hbar^n\alpha_n$,
$\chi_k=\sum_{n=0}^\infty\hbar^n\chi_{n,k}$,
$\alphabar=\sum_{n=0}^\infty\hbar^n\alphabar_n$ and
$\chibar_k=\sum_{n=0}^\infty\hbar^n\chibar_{n,k}$.

Assume that a solution of the dispersionless Toda hierarchy
corresponding to $(\symh(f),\symh(g),\symh(\fbar),\symh(\gbar))$ is
given. In other words, assume that symbols
$X_0=\alpha_0\log\xi+\sum_{k=1}^\infty\chi_{0,k}(t,\tbar,s) \xi^{-k}$,
$\Xbar_0=\alphabar_0\log\xi+\sum_{k=1}^\infty\chibar_{0,k}(t,\tbar,s)
\xi^{k}$ and $\phi_0=\phi_0(t,\tbar,s)$ are given such that
\begin{equation*}
    \symh(f)(\calM,\calL) = \symh(\fbar)(\calMbar,\calLbar),
\end{equation*}
namely,
\begin{equation*}
 \begin{aligned}
    &\exp \bigl( \ad_{\{,\}} X_0 \bigr)
    \exp \bigl( \ad_{\{,\}} \zeta(t,\xi) \bigr)
    \symh(f)(s,\xi)
\\
    ={}&
    \exp \bigl( \ad_{\{,\}} \phi_0 \bigr)
    \exp \bigl( \ad_{\{,\}} \Xbar_0 \bigr)
    \exp \bigl( \ad_{\{,\}} \zeta(\tbar,\xi^{-1}) \bigr)
    \symh(\fbar)(s,\xi),
 \end{aligned}
\end{equation*}
and
\begin{equation*}
    \symh(g)(\calM,\calL) = \symh(\gbar)(\calMbar,\calLbar),
\end{equation*}
namely,
\begin{equation}
 \begin{aligned}
    &\exp \bigl( \ad_{\{,\}} X_0 \bigr)
    \exp \bigl( \ad_{\{,\}} \zeta(t,\xi) \bigr)
    \symh(g)(s,\xi)
\\
    ={}&
    \exp \bigl( \ad_{\{,\}} \phi_0 \bigr)
    \exp \bigl( \ad_{\{,\}} \Xbar_0 \bigr)
    \exp \bigl( \ad_{\{,\}} \zeta(\tbar,\xi^{-1}) \bigr)
    \symh(\gbar)(s,\xi).
 \end{aligned}
\label{sym(f,g)(M,L)=sym(fbar,gbar)(Mbar,Lbar)}
\end{equation}
(See \propref{prop:dRH}.)

We are to construct $X_n$, $\Xbar_n$ and $\phi_n$ recursively, starting
from $X_0$, $\Xbar_0$ and $\phi_0$. For this purpose expand both sides
of equations \eqref{Ad(W)(f,g)=Ad(Wbar)(fbar,gbar)} as follows:
\begin{align}
    P&:= 
    \Ad \left( \exp (\hbar^{-1} X) \right) f_t 
    = \sum_{k=0}^\infty \hbar^k P_k,
\label{P}
\\
    Q&:= 
    \Ad \left( \exp (\hbar^{-1} X) \right) g_t 
    = \sum_{k=0}^\infty \hbar^k Q_k,
\label{Q}
\\
    \Pbar&:= 
    \Ad \left( \exp (\hbar^{-1} \phi) \exp (\hbar^{-1} \Xbar) \right)
    \fbar_\tbar 
    = \sum_{k=0}^\infty \hbar^k \Pbar_k,
\label{Pbar}
\\
    \Qbar&:= 
    \Ad \left( \exp (\hbar^{-1} \phi) \exp (\hbar^{-1} \Xbar) \right)
    \gbar_\tbar
    = \sum_{k=0}^\infty \hbar^k \Qbar_k,
\label{Qbar}
\end{align}
where 
\begin{align}
    f_t &:=
    \Ad \left( e^{\hbar^{-1} \zeta(t,e^{\hbar\ders})} \right) f, &
    g_t &:=
    \Ad \left( e^{\hbar^{-1} \zeta(t,e^{\hbar\ders})} \right) g,
\label{ft,gt}
\\
    \fbar_\tbar &:=
    \Ad \left( e^{\hbar^{-1} \zeta(\tbar,e^{-\hbar\ders})} \right) 
    \fbar, &
    \gbar_\tbar &:=
    \Ad \left( e^{\hbar^{-1} \zeta(\tbar,e^{-\hbar\ders})} \right) 
    \gbar,
\label{fbart,gbart}
\end{align}
and $P_i$'s, $Q_i$'s, $\Pbar_i$'s and $\Qbar_i$'s are difference
operators of the $\hbar$-order $0$: 
\begin{equation*}
    \ordh P_i=\ordh Q_i=\ordh\Pbar_i=\ordh \Qbar_i=0.
\end{equation*}
Suppose that we have chosen $X_0,\dotsc,X_{i-1}$,
$\Xbar_0,\dotsc,\Xbar_{i-1}$ and $\phi_0,\dotsc,\phi_{i-1}$ so that
$P_j=\Pbar_j$ ($0\leqq j \leqq i-1$) and $Q_j=\Qbar_j$ ($0\leqq j \leqq
i-1$). If operators $X_i$, $\Xbar_i$ and a function $\phi_i$ are
constructed from these given $X_j$, $\Xbar_j$ and $\phi_j$ ($0\leqq j
\leqq i-1$) so that equations $P_i=\Pbar_i$ and $Q_i=\Qbar_i$ hold, this
procedure gives recursive construction of $X$, $\Xbar$ and $\phi$ in
question.

We can construct such $X_i$, $\Xbar_i$ and $\phi_i$ as follows. (Details
and meaning shall be explained in the proof of
\thmref{thm:recursion:X,Xbar}.): 
\begin{itemize}
 \item (Step 0) Assume $X_j$, $\Xbar_j$ and $\phi_j$ ($0\leqq j \leqq
       i-1$) are given and set
\begin{equation}
 \begin{aligned}
    X^{(i-1)} &:= \sum_{n=0}^{i-1} \hbar^n X_n, & &
\\
    \Xbar^{(i-1)} &:= \sum_{n=0}^{i-1} \hbar^n \Xbar_n, \qquad &
    \phi^{(i-1)} &:= \sum_{n=0}^{i-1} \hbar^n \phi_n.
\end{aligned}
\label{def:X,Xbar,phi(i-1)}
\end{equation}
 \item (Step 1) Set
\begin{align}
    P^{(i-1)} 
    &:=
    \Ad \left(\exp \hbar^{-1} X^{(i-1)}\right) f_t,
\label{def:Pi-1}
\\
    Q^{(i-1)}
    &:=
    \Ad \left(\exp \hbar^{-1} X^{(i-1)}\right) g_t,
\label{def:Qi-1}
\\
    \Pbar^{(i-1)} 
    &:=
    \Ad \left(\exp \hbar^{-1} \phi^{(i-1)}\right)
    \Ad \left(\exp \hbar^{-1} \Xbar^{(i-1)}\right) \fbar_\tbar,
\label{def:Pbari-1}
\\
    \Qbar^{(i-1)}
    &:=
    \Ad \left(\exp \hbar^{-1} \phi^{(i-1)}\right)
    \Ad \left(\exp \hbar^{-1} \Xbar^{(i-1)}\right) \gbar_\tbar,
\label{def:Qbari-1}
\end{align}
       and expand $P^{(i-1)}$ and $Q^{(i-1)}$ with respect to the
       $\hbar$-order as
\begin{align}
    P^{(i-1)} &= \sum_{k=0}^\infty \hbar^k P^{(i-1)}_k,&
    Q^{(i-1)} &= \sum_{k=0}^\infty \hbar^k Q^{(i-1)}_k,&
\label{Pi-1,Qi-1:h-expand}
\\
    \Pbar^{(i-1)} &= \sum_{k=0}^\infty \hbar^k \Pbar^{(i-1)}_k,&
    \Qbar^{(i-1)} &= \sum_{k=0}^\infty \hbar^k \Qbar^{(i-1)}_k.&
\label{Pbari-1,Qbari-1:h-expand}
\end{align}
        ($\ordh P^{(i-1)}_k=\ordh
       Q^{(i-1)}_k=\ordh\Pbar^{(i-1)}_k=\ordh\Qbar^{(i-1)}_k=0$.)
 \item (Step 2) Put
\begin{align*}
    \calP_0 &:=\symh(P^{(i-1)}_0), &
    \calQ_0 &:=\symh(Q^{(i-1)}_0), 
\\
    \calP^{(i-1)}_i &:=\symh(P^{(i-1)}_i), &
    \calQ^{(i-1)}_i &:=\symh(Q^{(i-1)}_i), 
\\
    \calPbar_0 &:=\symh(\Pbar^{(i-1)}_0), &
    \calQbar_0 &:=\symh(\Qbar^{(i-1)}_0), 
\\
    \calPbar^{(i-1)}_i &:=\symh(\Pbar^{(i-1)}_i), &
    \calQbar^{(i-1)}_i &:=\symh(\Qbar^{(i-1)}_i) 
\end{align*}
       and define series
$
    \tilde\calX_i(t,\tbar,s,\xi)
    =
    \alpha_i\log\xi +
    \sum_{k=1}^\infty \tilde\chi_{i,k}(t,\tbar,s)\xi^{-k}
$,  
$
    \tilde\calXbar_i(t,\tbar,s,\xi)
    =
    \alphabar_i\log\xi +
    \sum_{k=1}^\infty \tilde\chibar_{i,k}(t,\tbar,s)\xi^{-k}
$
       and a function $\phi_i(t,\tbar,s)$ by one of the following
       integrals. (The integrand of the first integral in the right
       hand side of each equation is considered as a series of $\xi$
       around $\xi=\infty$ and the integrand of the second integral
       is considered as a series around $\xi=0$.)
\begin{align}
    -\tilde\calX_i + \phi_i + \tilde\calXbar_i
    &=
    \int^\xi \calP_0^{-1} 
    \left(
    -
    \dfrac{\der \calQ_0}{\der \xi} \calP^{(i-1)}_i
    +
    \dfrac{\der \calP_0}{\der \xi} \calQ^{(i-1)}_i
    \right)
    d\xi
\label{tildeXi=intxi}
\\
    &-
    \int^\xi \calPbar_0^{-1} 
    \left(
    -
    \dfrac{\der \calQbar_0}{\der \xi} \calPbar^{(i-1)}_i
    +
    \dfrac{\der \calPbar_0}{\der \xi} \calQbar^{(i-1)}_i
    \right)
    d\xi,
\nonumber
\\
    -\tilde\calX_i + \phi_i + \tilde\calXbar_i 
    &=
    \int^s \calP_0^{-1} 
    \left(
    -
    \dfrac{\der \calQ_0}{\der s} \calP^{(i-1)}_i
    +
    \dfrac{\der \calP_0}{\der s} \calQ^{(i-1)}_i
    \right)
    ds
\label{tildeXi=ints}
\\
    &-
    \int^s \calPbar_0^{-1} 
    \left(
    -
    \dfrac{\der \calQbar_0}{\der s} \calPbar^{(i-1)}_i
    +
    \dfrac{\der \calPbar_0}{\der s} \calQbar^{(i-1)}_i
    \right)
    ds,
\nonumber
\end{align}
       In fact they give the same $\tilde\calX_i$, $\phi_i$ and
       $\tilde\calXbar_i$. Exactly speaking, the coefficients of $\xi^n$
       ($n\in\Integer$, $n\neq0$) and $\log\xi$ are determined by the
       first equation \eqref{tildeXi=intxi} and its integral constant
       $\phi_i$ is determined by the second equation
       \eqref{tildeXi=ints} up to an arbitrary additive constant.

       Equations \eqref{tildeXi=intxi} and \eqref{tildeXi=ints}
       determine the combination $-\alpha_i+\alphabar_i$ as the
       coefficient of $\log\xi$ but not $\alpha_i$ nor $\alphabar_i$
       separately, which can be chosen arbitrarily so far as
       $-\alpha_i+\alphabar_i$ is fixed.
 \item (Step 3) Define a series
$
    \calX_i(t,\tbar,s,\xi)
    =
    \alpha_i\log\xi
    +
    \sum_{k=1}^\infty \chi_{i,k}(t,\tbar,s)\xi^{-k} 
$
and 
$
    \calXbar_i(t,\tbar,s,\xi)
    =
    \alphabar_i\log\xi
    +
    \sum_{k=1}^\infty \chibar_{i,k}(t,\tbar,s)\xi^{-k} 
$
       by
\begin{equation}
 \begin{split}
    \calX_i
    &= \tilde \calX_i
    - \frac{1}{2} \{\symh(X_0),\tilde\calX_i\}
    + \sum_{p=1}^\infty K_{2p}
      (\ad_{\{,\}} (\symh(X_0)))^{2p} \tilde\calX_i,
\\
    \calXbar_i
    &= \tilde \calXbar'_i
    - \frac{1}{2} \{\symh(\Xbar_0),\tilde\calXbar'_i\}
    + \sum_{p=1}^\infty K_{2p}
      (\ad_{\{,\}} (\symh(\Xbar_0)))^{2p} \tilde\calXbar'_i,
\\
    \tilde\calXbar'_i &:= e^{-\ad_{\{,\}} \phi_0} \tilde\calXbar_i
 \end{split}
\label{tildeXi,Xbari->Xi,Xbari:symbol}
\end{equation}
       Here $K_{2p}$ is determined by the generating function
\begin{equation}
    \frac{z}{e^z-1}
    = 1 - \frac{z}{2} + \sum_{p=1}^\infty K_{2p} z^{2p},
\label{def:K2p}
\end{equation}
       i.e., $K_{2p}= B_{2p}/(2p)!$, where $B_{2p}$'s are the Bernoulli
       numbers. 
 \item (Step 4) The operators $X_i(t,\tbar,s,e^{\hbar\ders})$ and
       $\Xbar_i(t,\tbar,s,e^{\hbar\ders})$ are defined as the
       operators with the principal symbols $\calX_i$ and $\calXbar_i$:
\begin{equation}
    X_i = \sum_{k=1}^\infty\chi_{i,k}(t,\tbar,s) e^{-k\hbar\ders}, \qquad
    \Xbar_i = \sum_{k=1}^\infty\chibar_{i,k}(t,\tbar,s) e^{k\hbar\ders}.
\label{def:Xi,Xbari}
\end{equation}
\end{itemize}

The main theorem is the following:
\begin{thm}
\label{thm:recursion:X,Xbar}
 Assume that $X_0$, $\Xbar_0$ and $\phi_0$ satisfy
 \eqref{sym(f,g)(M,L)=sym(fbar,gbar)(Mbar,Lbar)} and construct $X_i$'s,
 $\Xbar_i's$ and $\phi_i$'s by the above procedure recursively. Then
 $X$, $\Xbar$ and $\phi$ defined by \eqref{X:h-expansion},
 \eqref{Xbar:h-expansion} and \eqref{phi:h-expansion} satisfy
 \eqref{(f,g)(M,L)=(fbar,gbar)(Mbar,Lbar)}. Namely
 $W=\exp(X/\hbar)$ and $\Wbar=\exp(\phi/\hbar) \exp(\Xbar/\hbar)$ are
 dressing operators of the $\hbar$-dependent Toda hierarchy
 corresponding to the data $(f,g,\fbar,\gbar)$.
%
\end{thm}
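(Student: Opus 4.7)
\emph{Proof proposal.} The plan is to establish $P_j^{(i)} = \Pbar_j^{(i)}$ and $Q_j^{(i)} = \Qbar_j^{(i)}$ for all $j\le i$ by strong induction on $i$, which in the limit gives $P = \Pbar$, $Q = \Qbar$, i.e.\ \eqref{Ad(W)(f,g)=Ad(Wbar)(fbar,gbar)}. The base case $i=0$ is precisely \eqref{sym(f,g)(M,L)=sym(fbar,gbar)(Mbar,Lbar)} after identifying the principal symbols via \eqref{calM=exp(adX)s} and \eqref{calMbar=exp(adXbar)s}. For the inductive step, the key preliminary observation is that augmenting $X^{(i-1)},\Xbar^{(i-1)},\phi^{(i-1)}$ by $\hbar^i X_i,\hbar^i \Xbar_i,\hbar^i \phi_i$ only affects $P$ and $\Pbar$ at $\hbar$-orders $\le -i$, since the perturbations $\hbar^{i-1}X_i,\hbar^{i-1}\Xbar_i,\hbar^{i-1}\phi_i$ have $\ordh = -(i-1)$ and each commutator drops $\ordh$ by one. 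In particular all second-order perturbations lie at $\ordh \le -2i$, so the lower matchings are preserved automatically and it suffices to match the $\hbar^i$-coefficients using only the first variation.

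The main computation is the first-variation formula
\[
\Ad(e^{A+\epsilon Y})(B) = \Ad(e^A)(B) + \epsilon\,\bigl[\,\textstyle\int_0^1 e^{(1-s)\ad_A}Y\,ds,\;\Ad(e^A)(B)\bigr] + O(\epsilon^2),
\]
applied with $A = \hbar^{-1}X^{(i-1)}$ and $\epsilon Y = \hbar^{i-1}X_i$. At principal-symbol level the averaging $\int_0^1 e^{(1-s)\ad_A}\,ds$ acts as $F(\ad_{\{,\}}\symh(X_0))$ with $F(z) = (e^z-1)/z$, so the change in $\calP_i$ is $\{F(\ad_{\{,\}}\symh(X_0))\calX_i,\calP_0\}$. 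Because Step~3 sets $\calX_i = G(\ad_{\{,\}}\symh(X_0))\tilde\calX_i$ with $G(z)=z/(e^z-1) = 1/F(z)$, this collapses to $\{\tilde\calX_i,\calP_0\}$. On the barred side the analogous variation, accounting for the outer conjugation by $e^{\phi/\hbar}$ (which acts on symbols as $e^{\ad_{\{,\}}\phi_0}$) and using that $\{\phi_0,\phi_i\} = 0$ since $\phi_i$ is $\xi$-independent, yields the change $\{\phi_i + \tilde\calXbar_i,\calPbar_0\}$ in $\calPbar_i$. Invoking $\calP_0 = \calPbar_0$ from the $i=0$ hypothesis, the required equality $\calP_i^{(i)} = \calPbar_i^{(i)}$ reduces to the Poisson equation
\[
\bigl\{\,-\tilde\calX_i + \phi_i + \tilde\calXbar_i,\;\calP_0\,\bigr\} = \calP_i^{(i-1)} - \calPbar_i^{(i-1)},
\]
and similarly with $\calP_0,\calPbar_0$ replaced by $\calQ_0,\calQbar_0$.

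These two Poisson equations in the single unknown $\calY := -\tilde\calX_i + \phi_i + \tilde\calXbar_i$ are exactly what \eqref{tildeXi=intxi} and \eqref{tildeXi=ints} solve; the splitting of $\calY$ into the three pieces is then read off from the Laurent decomposition in $\xi$ around $\xi = \infty$ (negative powers and $-\alpha_i\log\xi$ give $\tilde\calX_i$) and around $\xi = 0$ (positive powers and $\alphabar_i\log\xi$ give $\tilde\calXbar_i$), with the $\xi$-independent constant term—pinned down only by the $s$-integral—yielding $\phi_i$. Step~4 lifts the symbols $\calX_i,\calXbar_i$ back to difference operators bijectively via $\xi^m \leftrightarrow e^{m\hbar\ders}$, closing the induction.

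The main obstacle is the pair of compatibility conditions that this scheme demands. First, solvability of the Poisson system requires that the data $\calP_i^{(i-1)} - \calPbar_i^{(i-1)}$ and $\calQ_i^{(i-1)} - \calQbar_i^{(i-1)}$ satisfy the integrability relation imposed by $\{\calQ_0,\calP_0\} = -\calP_0$—the quasi-classical image of $[g,f] = -\hbar f$, preserved along the induction because $X^{(i-1)},\phi^{(i-1)},\Xbar^{(i-1)}$ generate canonical transformations at the symbolic level. This should follow by expanding $[f,g]=\hbar f$ and $[\fbar,\gbar]=\hbar\fbar$ at $\hbar$-order $-i$ and combining with the earlier matchings $P_j^{(i-1)} = \Pbar_j^{(i-1)}$, $Q_j^{(i-1)} = \Qbar_j^{(i-1)}$ for $j<i$ together with the Jacobi identity. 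Second, the two integral representations \eqref{tildeXi=intxi} and \eqref{tildeXi=ints} must yield the same $\calY$ up to the constant captured by $\phi_i$; this is equality of mixed partials in the Darboux-type coordinates $(\calQ_0,-\log\calP_0)$ and again reduces to the first compatibility. Controlling how the higher-order quantum corrections in the $\circ$-product (now featuring $\xi\der_\xi$ in place of $\der_\xi$, as stressed in the introduction) propagate through the induction without disturbing lower-order matchings is the technical heart of the argument.
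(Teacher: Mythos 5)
Your proposal is correct and follows essentially the same route as the paper's proof: induction on the $\hbar$-order, a Campbell--Hausdorff/first-variation factorisation whose symbol-level averaging is $F(\ad_{\{,\}}\symh(X_0))$ with $F(z)=(e^z-1)/z$ (the paper's $\tilde X_i$, inverted in Step 3 by the Bernoulli series), reduction to the pair of Poisson equations for $-\tilde\calX_i+\phi_i+\tilde\calXbar_i$ solved by the $\xi$- and $s$-integrals, and solvability/closedness obtained from the canonical commutation relations $[P^{(i-1)},Q^{(i-1)}]=\hbar P^{(i-1)}$, $[\Pbar^{(i-1)},\Qbar^{(i-1)}]=\hbar\Pbar^{(i-1)}$ combined with the induction hypothesis (the paper's Lemma~\ref{lem:compatibility}). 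The only cosmetic differences are that the paper writes the first variation as an explicit factorisation $W=\exp(\hbar^{i-1}\tilde X_i+\hbar^i X_{>i})\exp(\hbar^{-1}X^{(i-1)})$ and verifies the compatibility \eqref{compatibility} by direct computation at $\hbar$-order $-i-1$ (no Jacobi identity needed), and that your closing worry about $\circ$-product corrections is moot here, since that calculus enters only in Section~\ref{sec:wave-function}.
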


\bigskip
The rest of this section is the proof of \thmref{thm:recursion:X,Xbar}
by induction. The essential idea of the proof is almost the same as the
proof of Theorem 2.1 in \cite{tak-tak:09}.

Let us denote the ``known'' part of $X$, $\Xbar$ and $\phi$ by
$X^{(i-1)}$, $\Xbar^{(i-1)}$ and $\phi^{(i-1)}$ as in
\eqref{def:X,Xbar,phi(i-1)} and, as intermediate objects, consider
$P^{(i-1)}$, $Q^{(i-1)}$, $\Pbar^{(i-1)}$ and $\Qbar^{(i-1)}$ defined by
\eqref{def:Pi-1} and \eqref{def:Qi-1}, which are expanded as
\eqref{Pi-1,Qi-1:h-expand} and \eqref{Pbari-1,Qbari-1:h-expand}.

If $X$, $\Xbar$ and $\phi$ are expanded as \eqref{X:h-expansion},
\eqref{Xbar:h-expansion} and \eqref{phi:h-expansion}, the dressing
operators $W=\exp(X/\hbar)$ and
$\Wbar=\exp(\phi/\hbar)\exp(\Xbar/\hbar)$ are factorised as follows by
the Campbell-Hausdorff theorem:
\begin{align}
    W &=
    \exp\left(
     \hbar^{i-1} \tilde X_i + \hbar^i X_{>i}
    \right) 
    \exp\left( \hbar^{-1} X^{(i-1)} \right),
\label{W=exp(Xi)exp(Xi-1)}
\\
    \Wbar &=
    \exp\left(\hbar^{i-1} \phi_i + \hbar^i \phi_{>i} \right)
    \exp\left(
     \hbar^{i-1} \tilde\Xbar_i + \hbar^i \Xbar_{>i}
    \right)  \times
\label{W=exp(Xbari)exp(Xbari-1)}
\\
    &\ \ \ \times
    \exp\left(\hbar^{-1} \phi^{(i-1)} \right)
    \exp\left( \hbar^{-1} \Xbar^{(i-1)} \right),
\nonumber
\end{align}
where $\tilde X_i$, $X_{>i}$, $\phi_i$, $\phi_{>i}$, $\tilde\Xbar_i$ and
$\Xbar_{>i}$ have $\hbar$-order not more than $0$ and the principal
symbols of $\tilde X_i$ and $\tilde\Xbar_i$ are defined by
\begin{align}
    \symh(\tilde X_i){}(s,\xi)
    &=
    \sum_{n=1}^\infty \frac{(\ad_{\{,\}} \symh(X_0) )^{n-1}}{n!}
    \symh(X_i)
\label{def:tildeX}
\\
    \symh(\tilde \Xbar_i){}(s,\xi)
    &=
    e^{\ad_{\{,\}} \phi_0} \left(
    \sum_{n=1}^\infty \frac{(\ad_{\{,\}} \symh(\Xbar_0) )^{n-1}}{n!}
    \symh(\Xbar_i)
    \right)
\label{def:tildeXbar}
\end{align}
Note that the log terms in \eqref{def:tildeX} and \eqref{def:tildeXbar}
are $\alpha_i \log\xi$ and $\alphabar_i \log\xi$ respectively. The other
terms in \eqref{def:tildeX} (resp.\ \eqref{def:tildeXbar}) are negative
(resp.\ positive) powers of $\xi$. The principal symbol of $X_i$ is
recovered from $\tilde X_i$ by the formula
\begin{equation}
 \begin{split}
    \symh(X_i) 
    &= \symh(\tilde X_i)
    - \frac{1}{2} \{\symh(X_0),\symh(\tilde X_i)\}
    + \sum_{p=1}^\infty K_{2p}
      (\ad_{\{,\}} (\symh(X_0)))^{2p} \symh(\tilde X_i), 
 \end{split}
\label{tildeXi->Xi}
\end{equation}
Here coefficients $K_{2p}$ are defined by \eqref{def:K2p}. Similarly the
principal symbol of $\Xbar_i$ is recovered from $\tilde\Xbar_i$ by
\begin{equation}
 \begin{split}
    \symh(\Xbar_i) 
    &= \symh(\tilde \Xbar'_i)
    - \frac{1}{2} \{\symh(X_0),\symh(\tilde\Xbar'_i)\}
    + \sum_{p=1}^\infty K_{2p}
      (\ad_{\{,\}} (\symh(X_0)))^{2p} \symh(\tilde\Xbar'_i), 
 \end{split}
\label{tildeXbari->Xbari}
\end{equation}
where
$\symh(\tilde\Xbar'_i):=e^{-\ad_{\{,\}}\phi_0}\symh(\tilde\Xbar_i)$.

These inversion relations are the origin of
\eqref{tildeXi,Xbari->Xi,Xbari:symbol}. (Note that the principal symbol
determines the operators $X_i$ and $\Xbar_i$, since they are homogeneous
terms in the expansions \eqref{X:h-expansion} and
\eqref{Xbar:h-expansion}.) The factorisation formula
\eqref{W=exp(Xi)exp(Xi-1)} and the inversion formula \eqref{tildeXi->Xi}
are proved in Appendix A of \cite{tak-tak:09}. The formulae
\eqref{W=exp(Xbari)exp(Xbari-1)} and \eqref{tildeXbari->Xbari} are
derived in the same way.

The factorisation \eqref{W=exp(Xi)exp(Xi-1)} implies
\begin{equation*}
 \begin{split}
    P =& 
    \Ad\Bigl( 
        \exp\bigl( \hbar^{i-1} \tilde X_i + \hbar^i X_{>i} \bigr) 
       \Bigr) P^{(i-1)}
\\
    =&
    P^{(i-1)} 
    + \hbar^{i-1}
    [\tilde X_i+\hbar X_{>i}, P^{(i-1)}]
    + (\text{terms of $\hbar$-order $<-i$}).
 \end{split}
\end{equation*}
Thus, substituting the expansion \eqref{Pi-1,Qi-1:h-expand} in the Step
1, we have
\begin{equation}
 \begin{split}
    P =& 
    P^{(i-1)}_0 + \hbar P^{(i-1)}_1 + \cdots + \hbar^i P^{(i-1)}_i +
    \cdots
\\
    &+ \hbar^{i-1} [\tilde X_i, P^{(i-1)}_0]
     + (\text{terms of $\hbar$-order $<-i$}).
 \end{split}
\label{P=Ad()Pi-1}
\end{equation}
Comparing this with the $\hbar$-expansion \eqref{P} of $P$, we can
express $P_j$'s in terms of $P^{(i-1)}_j$'s and $\tilde X_i$
as follows:
\begin{align}
    P_j &= P^{(i-1)}_j \qquad  (j=0,\dots,i-1),
\label{P0-Pi-1}
\\
    \sigma_0 (P_i) &= \sigma_0 (P^{(i-1)}_i 
    + \hbar^{-1}[\tilde X_i, P^{(i-1)}_0]).
\label{Pi<-Pi-1i}
\end{align}
Similar equations for $Q$ are obtained in the same way. For the
operators $\Pbar$ the corresponding equations are
\begin{align}
    \Pbar_j &= \Pbar^{(i-1)}_j \qquad  (j=0,\dots,i-1),
\label{Pbar0-Pbari-1}
\\
    \sigma_0 (\Pbar_i) &= \sigma_0 (\Pbar^{(i-1)}_i 
    + \hbar^{-1}[\phi_i, \Pbar^{(i-1)}_0]
    + \hbar^{-1}[\tilde \Xbar_i, \Pbar^{(i-1)}_0]).
\label{Pbari<-Pbari-1i}
\end{align}
The corresponding equations for $\Qbar$ are the same.

The equations \eqref{P0-Pi-1}, \eqref{Pbar0-Pbari-1} and corresponding
equations for $Q$ and $\Qbar$ show that the terms of $\hbar$-order
greater than $-i$ in \eqref{P} are already fixed by $X_0,\dots,X_{i-1}$,
which justifies the inductive procedure. That is to say, we are assuming
that $X_0,\dots,X_{i-1}$ have been already determined so that
$P_j=P^{(i-1)}_j$ and $Q_j=Q^{(i-1)}_j$ for $j=0,\dots,i-1$ coincide
with $\Pbar_j=\Pbar^{(i-1)}_j$ and $\Qbar_j=\Qbar^{(i-1)}_j$
respectively.

The operators $X_i$, $\Xbar_i$ and the function $\phi_i$ should be
chosen so that the right hand sides of \eqref{Pi<-Pi-1i} and
\eqref{Pbari<-Pbari-1i} coincide and the corresponding expressions for
$Q$ and $\Qbar$ coincide. Taking equations $P^{(i-1)}_0=P_0$,
$Q^{(i-1)}_0=Q_0$, $\Pbar^{(i-1)}_0=\Pbar_0$ and
$\Qbar^{(i-1)}_0=\Qbar_0$ into account, we define
\begin{equation}
 \begin{split}
    \tilde P^{(i)}_i &:= P^{(i-1)}_i 
    + \hbar^{-1} [\tilde X_i, P_0],
\\
    \tilde Q^{(i)}_i &:= Q^{(i-1)}_i 
    + \hbar^{-1} [\tilde X_i, Q_0],
\\
    \tilde \Pbar^{(i)}_i &:= \Pbar^{(i-1)}_i 
    + \hbar^{-1}[\phi_i, \Pbar_0]
    + \hbar^{-1}[\tilde \Xbar_i, \Pbar_0]
\\
    \tilde \Qbar^{(i)}_i &:= \Qbar^{(i-1)}_i 
    + \hbar^{-1}[\phi_i, \Qbar_0]
    + \hbar^{-1}[\tilde \Xbar_i, \Qbar_0]
 \end{split}
\label{Pii,Qii,Pbarii,Qbarii}
\end{equation}
Then the condition for $X_i$, $\Xbar_i$ and $\phi_i$ is written in the
following form of equations for symbols:
\begin{equation}
    \symh_0(\tilde P^{(i)}_i) = \symh_0(\tilde \Pbar^{(i)}_i), \qquad
    \symh_0(\tilde Q^{(i)}_i) = \symh_0(\tilde \Qbar^{(i)}_i)
\label{Pii=Pbarii,Qii=Qbarii}
\end{equation}
(The parts of $\hbar$-order less than $-1$ should be determined in the
next step of the induction.) To simplify notations, we denote the
symbols $\symh_0(\tilde P^{(i)}_i)$, $\symh_0(P^{(i-1)}_i)$ and so on by
the corresponding calligraphic letters as $\tilde\calP^{(i)}_i$,
$\calP^{(i-1)}_i$ etc. By this notation we can rewrite the equations
\eqref{Pii=Pbarii,Qii=Qbarii} in the following form:
\begin{equation}
 \begin{aligned}
    \tilde \calP^{(i)}_i &= \tilde \calPbar^{(i)}_i,&
    \tilde \calQ^{(i)}_i &= \tilde \calQbar^{(i)}_i,
\\
    \tilde \calP^{(i)}_i &:= \calP^{(i-1)}_i 
    + \{ \tilde \calX_i, \calP_0\},&
    \tilde \calQ^{(i)}_i &:= \calQ^{(i-1)}_i 
    + \{\tilde \calX_i, \calQ_0\},
\\
    \tilde \calPbar^{(i)}_i &:= \calPbar^{(i-1)}_i 
    + \{\phi_i, \calPbar_0\}
    + \{ \tilde \calXbar_i, \calPbar_0\},&
    \tilde \calQbar^{(i)}_i &:= \calQbar^{(i-1)}_i 
    + \{ \phi_i, \calQbar_0\}
    + \{ \tilde \calXbar_i, \calQbar_0\}.
  \end{aligned}
\label{Pii,Qii:symbol}
\end{equation}
In the matrix form, these equations are encapsulated in the following
equation. 
\begin{equation}
 \begin{split}
    &
    \begin{pmatrix} 
     \calP^{(i-1)}_i \\\\ \calQ^{(i-1)}_i
    \end{pmatrix}
    +
    \xi
    \begin{pmatrix}
     \dfrac{\der \calP_0}{\der s} & - \dfrac{\der \calP_0}{\der \xi} \\ \\
     \dfrac{\der \calQ_0}{\der s} & - \dfrac{\der \calQ_0}{\der \xi} 
    \end{pmatrix}
     \begin{pmatrix}
     \dfrac{\der}{\der \xi} \tilde\calX_i\\ \\
     \dfrac{\der}{\der s} \tilde\calX_i
     \end{pmatrix}
\\
    ={}&
    \begin{pmatrix} 
     \calPbar^{(i-1)}_i \\\\ \calQbar^{(i-1)}_i
    \end{pmatrix}
    +
    \xi
    \begin{pmatrix}
     \dfrac{\der \calPbar_0}{\der s} 
                           & - \dfrac{\der \calPbar_0}{\der \xi} \\ \\
     \dfrac{\der \calQbar_0}{\der s} 
                           & - \dfrac{\der \calQbar_0}{\der \xi} 
    \end{pmatrix}
     \begin{pmatrix}
     \dfrac{\der}{\der \xi} (\phi_i + \tilde\calXbar_i)\\ \\
     \dfrac{\der}{\der s}   (\phi_i + \tilde\calXbar_i)
     \end{pmatrix}
\end{split}
\label{Pii,Qii:mat}
\end{equation}
Recall that operators $P^{(i-1)}$ and $Q^{(i-1)}$ are defined by acting
adjoint operation to the canonically commuting pair $(f,g)$ in
\eqref{def:Pi-1}, \eqref{def:Qi-1} and \eqref{ft,gt}. Hence they also
satisfy the canonical commutation relation: $[P^{(i-1)},Q^{(i-1)}]=\hbar
P^{(i-1)}$. The principal symbol of this relation gives
\begin{equation}
   \{\calP^{(i-1)}_0,\calQ^{(i-1)}_0\}=\{\calP_0,\calQ_0\}
   =
   \calP_0,
\label{CCR(P0,Q0)} 
\end{equation}
which means that the determinants of the $2\times2$ matrices in both
sides of \eqref{Pii,Qii:mat} are equal to $\xi^{-1}\calP_0$. (Recall
that those matrices are equal because of the induction hypothesis,
$\calP_0=\calPbar_0$, $\calQ_0=\calQbar_0$.) Hence
its inverse matrix is easily computed and we have
\begin{equation}
 \begin{split}
    &\calP_0^{-1}
    \begin{pmatrix}
    - \dfrac{\der \calQ_0}{\der \xi}  & \dfrac{\der \calP_0}{\der \xi}
    \\ \\
    - \dfrac{\der \calQ_0}{\der s}    & \dfrac{\der \calP_0}{\der s}
    \end{pmatrix}
    \begin{pmatrix} 
     \calP^{(i-1)}_i \\\\ \calQ^{(i-1)}_i
    \end{pmatrix}
    +
    \begin{pmatrix}
     \dfrac{\der}{\der \xi} \tilde\calX_i\\ \\
     \dfrac{\der}{\der s} \tilde\calX_i
    \end{pmatrix}
\\
    ={}&
    \calPbar_0^{-1}
    \begin{pmatrix}
    - \dfrac{\der \calQbar_0}{\der \xi}
                     & \dfrac{\der \calPbar_0}{\der \xi}
    \\ \\
    - \dfrac{\der \calQbar_0}{\der s}
                     & \dfrac{\der \calPbar_0}{\der s}
    \end{pmatrix}
    \begin{pmatrix} 
     \calPbar^{(i-1)}_i \\\\ \calQbar^{(i-1)}_i
    \end{pmatrix}
    +
     \begin{pmatrix}
     \dfrac{\der}{\der \xi} (\phi_i + \tilde\calXbar_i)\\ \\
     \dfrac{\der}{\der s}   (\phi_i + \tilde\calXbar_i)
     \end{pmatrix}
 \end{split}
\label{P+Xi'=Pbar+phii'+Xbari':mat}
\end{equation}
Note that the left hand side (the first line) is a series of $\xi$
around $\xi=\infty$, while the right hand side (the second line) is a
series around $\xi=0$. Equation \eqref{P+Xi'=Pbar+phii'+Xbari':mat} is
rewritten as
\begin{multline}
    \begin{pmatrix}
    \dfrac{\der}{\der \xi}(-\tilde\calX_i+\phi_i+\tilde\calXbar_i) \\ \\
    \dfrac{\der}{\der s}  (-\tilde\calX_i+\phi_i+\tilde\calXbar_i) 
    \end{pmatrix}
\\
    =
    \calP_0^{-1}
    \begin{pmatrix}
    - \dfrac{\der \calQ_0}{\der \xi}  & \dfrac{\der \calP_0}{\der \xi}
    \\ \\
    - \dfrac{\der \calQ_0}{\der s}    & \dfrac{\der \calP_0}{\der s}
    \end{pmatrix}
    \begin{pmatrix}
    \calP^{(i-1)}_i \\ \\
    \calQ^{(i-1)}_i
    \end{pmatrix}
    -
    \calPbar_0^{-1}
    \begin{pmatrix}
    - \dfrac{\der \calQbar_0}{\der \xi}
                       & \dfrac{\der \calPbar_0}{\der \xi}
    \\ \\
    - \dfrac{\der \calQbar_0}{\der s}
                       & \dfrac{\der \calPbar_0}{\der s}
    \end{pmatrix}
    \begin{pmatrix}
    \calPbar^{(i-1)}_i \\ \\
    \calQbar^{(i-1)}_i
    \end{pmatrix},
\label{-Xi+phii+Xbari:mat}
\end{multline}
which determines $-\tilde\calX_i+\phi_i+\tilde\calXbar_i$. According to
the above remark, the first term in the right hand side is a series of
$\xi$ around $\xi=\infty$ and the second term is a series of $\xi$
around $\xi=0$.

The system \eqref{-Xi+phii+Xbari:mat} is solvable thanks to
\lemref{lem:compatibility} below. Hence, integrating the first element
of the right hand side with respect to $\xi$, we obtain
$-\tilde\calX_i+\phi_0+\tilde\calXbar_i$ up to an integration constant
which does not depend on $\xi$. Integrating the second element of the
above equation, we can determine this integration constant up to a
constant which does not depend on $s$. This is Step 2,
\eqref{tildeXi=intxi} and \eqref{tildeXi=ints}, which determine the
symbols $\tilde\calX_i$, $\tilde\calXbar_i$ and the function
$\phi_i$. (The ambiguity of $\phi_i$ is harmless.)

In the end, the principal symbols of $X_i$ and $\Xbar_i$ are determined
by \eqref{tildeXi->Xi}, \eqref{tildeXbari->Xbari} or
\eqref{tildeXi,Xbari->Xi,Xbari:symbol} in Step 3. Operators $X_i$ and
$\Xbar_i$ are defined as in Step 4. This completes the construction of
$X_i$, $\Xbar_i$ and $\phi_i$ and the proof of the theorem.
%

\begin{lem}
\label{lem:compatibility}
 The system \eqref{-Xi+phii+Xbari:mat} is compatible.
\end{lem}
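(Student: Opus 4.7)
The plan is to reformulate the compatibility of \eqref{-Xi+phii+Xbari:mat} as a closed $1$-form condition in $(\xi,s)$ and then derive it from the canonical commutation relations satisfied by the pairs $(P^{(i-1)},Q^{(i-1)})$ and $(\Pbar^{(i-1)},\Qbar^{(i-1)})$. Setting $\Phi := -\tilde\calX_i + \phi_i + \tilde\calXbar_i$ and using the inductive identities $\calP_0=\calPbar_0$, $\calQ_0=\calQbar_0$, the two scalar equations \eqref{Pii,Qii:symbol} can be recast in the invariant form
\begin{equation*}
    \{\Phi, \calP_0\} = \calP^{(i-1)}_i - \calPbar^{(i-1)}_i,\qquad
    \{\Phi, \calQ_0\} = \calQ^{(i-1)}_i - \calQbar^{(i-1)}_i.
\end{equation*}
Since $\{\calP_0,\calQ_0\}=\calP_0$, the map $(s,\xi)\mapsto(\calP_0,\calQ_0)$ is a local change of coordinates (Jacobian $-\calP_0/\xi$), and the Hamiltonian vector fields $\{\cdot,\calP_0\}$ and $\{\cdot,\calQ_0\}$ satisfy $[\{\cdot,\calP_0\},\{\cdot,\calQ_0\}] = -\{\cdot,\calP_0\}$ by the Jacobi identity. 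A short calculation using this commutator reduces the integrability of the above pair of first-order PDEs for $\Phi$ to the single identity
\begin{equation*}
    \{\alpha, \calQ_0\} + \{\calP_0, \beta\} = \alpha,\qquad
    \alpha := \calP^{(i-1)}_i - \calPbar^{(i-1)}_i,\quad
    \beta  := \calQ^{(i-1)}_i - \calQbar^{(i-1)}_i.
\end{equation*}

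To establish this identity I would use the canonical commutation relations $[P^{(i-1)},Q^{(i-1)}] = \hbar P^{(i-1)}$ and $[\Pbar^{(i-1)},\Qbar^{(i-1)}] = \hbar \Pbar^{(i-1)}$, which are inherited from $[f,g]=\hbar f$ and $[\fbar,\gbar]=\hbar\fbar$ by $\Ad$-invariance. The standard symbol-of-commutator formula
\begin{equation*}
    \symh_{-n}([A,B]) = \tfrac{1}{n!}\bigl((\xi\der_\xi)^n\symh(A)\,\der_s^n\symh(B) - (\xi\der_\xi)^n\symh(B)\,\der_s^n\symh(A)\bigr)
\end{equation*}
for $\hbar$-order zero difference operators $A,B$ lets one compute the symbol of both sides of the unbarred CCR at $\hbar$-order $-(i+1)$, producing an identity of the form
\begin{equation*}
    \{\calP^{(i-1)}_i,\calQ_0\} + \{\calP_0,\calQ^{(i-1)}_i\} - \calP^{(i-1)}_i = R,
\end{equation*}
where $R$ is an explicit expression collecting the middle Poisson brackets $\{\calP^{(i-1)}_n,\calQ^{(i-1)}_{i-n}\}$ for $0<n<i$ and the subleading symbol contributions coming from commutators $[P^{(i-1)}_n,Q^{(i-1)}_m]$ with $n+m<i$. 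Crucially, $R$ depends only on the lower-order symbols $\calP^{(i-1)}_j,\calQ^{(i-1)}_j$ with $0\le j\le i-1$ and their $s$- and $\xi$-derivatives.

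The analogous expansion of the barred CCR yields the same formula with every quantity barred. By the induction hypothesis $\calP^{(i-1)}_j = \calPbar^{(i-1)}_j$ and $\calQ^{(i-1)}_j = \calQbar^{(i-1)}_j$ for $0\le j\le i-1$, so $R$ coincides with its barred analogue. Subtracting the two CCR expansions produces exactly the identity $\{\alpha,\calQ_0\} + \{\calP_0,\beta\} = \alpha$ required in the first step. The main obstacle is the honest identification of the correction term $R$: one must verify that no symbol of index $\ge i$ appears in it, which boils down to careful bookkeeping with the symbol-of-commutator formula. Once that is in hand, the induction hypothesis cancels $R$ against its barred counterpart and the lemma follows.
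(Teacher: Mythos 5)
Your proposal is correct and follows essentially the same route as the paper's proof: both reduce the cross-derivative compatibility condition, via $\{\calP_0,\calQ_0\}=\calP_0$, to the single identity $\{\calP^{(i-1)}_i,\calQ_0\}+\{\calP_0,\calQ^{(i-1)}_i\}-\calP^{(i-1)}_i = \{\calPbar^{(i-1)}_i,\calQbar_0\}+\{\calPbar_0,\calQbar^{(i-1)}_i\}-\calPbar^{(i-1)}_i$ (your $\{\alpha,\calQ_0\}+\{\calP_0,\beta\}=\alpha$), and both then extract it as the $\hbar$-order $-(i+1)$ part of the inherited canonical commutation relations $[P^{(i-1)},Q^{(i-1)}]=\hbar P^{(i-1)}$, $[\Pbar^{(i-1)},\Qbar^{(i-1)}]=\hbar\Pbar^{(i-1)}$, using the induction hypothesis to cancel all contributions involving only indices $j\le i-1$. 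Your packaging of the first step through the coordinates $(\calP_0,\calQ_0)$ and the commutator of Hamiltonian vector fields is a cosmetic variant of the paper's direct computation, and your ``bookkeeping'' concern about $R$ is settled exactly as in the paper by the grading argument: a term of $\hbar$-order $-(i+1)$ can involve $P^{(i-1)}_i$ or $Q^{(i-1)}_i$ only through $\hbar^i[P^{(i-1)}_i,Q_0]$, $\hbar^i[P_0,Q^{(i-1)}_i]$ and $-\hbar^{i+1}P^{(i-1)}_i$.
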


\begin{proof}
 We check the compatibility condition,
\begin{equation}
 \begin{split}
    &\frac{\der}{\der s} \left(
    \calP_0^{-1}
    \left(
    -
    \dfrac{\der \calQ_0}{\der \xi} \calP^{(i-1)}_i
    +
    \dfrac{\der \calP_0}{\der \xi} \calQ^{(i-1)}_i
    \right)
    \right)
\\
    -
    &\frac{\der}{\der s} \left(
    \calPbar_0^{-1}
    \left(
    -
    \dfrac{\der \calQbar_0}{\der \xi} \calPbar^{(i-1)}_i
    +
    \dfrac{\der \calPbar_0}{\der \xi} \calQbar^{(i-1)}_i
    \right)
    \right)
\\
    =&
    \frac{\der}{\der \xi} \left(
    \calP_0^{-1}
    \left(
    -
    \dfrac{\der \calQ_0}{\der s} \calP^{(i-1)}_i
    +
    \dfrac{\der \calP_0}{\der s} \calQ^{(i-1)}_i
    \right)
    \right)
\\
    -&
    \frac{\der}{\der \xi}\left(
    \calPbar_0^{-1}
    \left(
    -
    \dfrac{\der \calQbar_0}{\der s} \calPbar^{(i-1)}_i
    +
    \dfrac{\der \calPbar_0}{\der s} \calQbar^{(i-1)}_i
    \right)
    \right).
 \end{split}
\label{compatibility}
\end{equation}
 Using the relation $\{\calP_0,\calQ_0\}=\calP_0$ \eqref{CCR(P0,Q0)},
 this equation reduces to
\begin{equation}
 \begin{split}
     &\calP_0^{-1} \xi^{-1}
    \bigl(
     - \calP^{(i-1)}_i
     + \{ \calP^{(i-1)}_i, \calQ_0 \}
     + \{ \calP_0, \calQ^{(i-1)}_i\}
    \bigr)
 \\
    =
    &\calPbar_0^{-1} \xi^{-1}
    \bigl(
     - \calPbar^{(i-1)}_i
     + \{ \calPbar^{(i-1)}_i, \calQbar_0 \}
     + \{ \calPbar_0, \calQbar^{(i-1)}_i \}
    \bigr).
 \end{split}
\label{compatibility2}
\end{equation}
 Defined from canonically commuting pair $(f,g)$ by adjoint action
 \eqref{ft,gt}, \eqref{def:Pi-1} and \eqref{def:Qi-1}, the pair of
 operators $(P^{(i-1)}, Q^{(i-1)})$ is canonically commuting:
 $[P^{(i-1)},Q^{(i-1)}]=\hbar P^{(i-1)}$. Similarly we have
 $[\Pbar^{(i-1)},\Qbar^{(i-1)}]=\hbar \Pbar^{(i-1)}$ and thus
\begin{equation}
    [P^{(i-1)},     Q^{(i-1)}    ] - \hbar P^{(i-1)}
    =
    [\Pbar^{(i-1)}, \Qbar^{(i-1)}] - \hbar \Pbar^{(i-1)}.
\label{ccr-ccrbar}
\end{equation}
 Substituting the expansions \eqref{Pi-1,Qi-1:h-expand} and
 \eqref{Pbari-1,Qbari-1:h-expand} in it and noting that
 $P^{(i-1)}_j=\Pbar^{(i-1)}_j$ and $Q^{(i-1)}_j=\Qbar^{(i-1)}_j$ for
 $j=0,\dots,i-1$ by the induction hypothesis, the terms of $\hbar$-order
 higher than $-i-1$ in \eqref{ccr-ccrbar} cancel. Thus
 \eqref{ccr-ccrbar} becomes 
\begin{equation*}
 \begin{split}
    &[\hbar^i P^{(i-1)}_i, Q_0] + [P_0, \hbar^i Q^{(i-1)}_i]
     -
     \hbar^{i+1} P^{(i-1)}_i
     + (\text{$\hbar$-order $< -i-1$})
\\
    ={}&
     [\hbar^i \Pbar^{(i-1)}_i, \Qbar_0]
     +
     [\Pbar_0, \hbar^i \Qbar^{(i-1)}_i]
     - 
     \hbar^{i+1} \Pbar^{(i-1)}_i
     + (\text{$\hbar$-order $< -i-1$}).
 \end{split}
\end{equation*}
 Taking the symbol of $\hbar$-order $-i-1$ of this equation, we have
 \eqref{compatibility2} because $\calP_0=\calPbar_0$. 
\end{proof}

\section{Asymptotics of the wave function}
\label{sec:wave-function}

In this section we prove that the dressing operator of the form
\eqref{W=exp(X)} or \eqref{W=exp(Xcirc)exp(alpha)} (with $\alpha=0$),
i.e.,
\begin{align}
    W(\hbar,t,\tbar,s, e^{\hbar\ders})
    &=
    \exp(X(\hbar,t,\tbar,s,e^{\hbar\ders})/\hbar),
\label{W=exp(X):alpha=0}
\\
    X(\hbar,t,\tbar,s,e^{\hbar\ders})
    &=
    \sum_{k=1}^\infty \chi_k   (\hbar,t,\tbar,s) e^{-k\hbar\ders},
    \qquad 
    \ordh X \leqq 0,
\label{X:alpha=0}
\end{align}
gives the wave function of the WKB form
\begin{gather}
    \Psi(\hbar,t,\tbar,s;z) 
    = W z^{s/\hbar} e^{\zeta(t,z)/\hbar} 
    = e^{S(\hbar,t,\tbar,s,z)/\hbar} z^{s/\hbar},
    \qquad \ordh S \leqq 0,
\label{wave-func}
\\
    S(\hbar,t,\tbar,s;z)
    = \sum_{n=0}^\infty \hbar^n S_n(t,\tbar,s;z) + \zeta(t,z), \qquad
    \zeta(t,z) := \sum_{n=1}^\infty t_n z^n,
\label{S}
\end{gather}
and vice versa. As the factor $e^{\hbar^{-1}\alpha(\hbar)(\hbar\ders)}$
in \eqref{W=exp(Xcirc)exp(alpha)} becomes a constant factor
$z^{\alpha(\hbar)/\hbar}$ when it is applied to $z^{s/\hbar}$, we omit
it here.

By changing the sign of $s$ and replacing $z$ by $\zbar^{-1}$, we can
deduce the formula for the wave function $\Psibar$ corresponding to the
dressing operator $\Wbar$ of the form \eqref{Wbar=exp(phi)exp(Xbar)} or
\eqref{Wbar=exp(phi)exp(Xbarcirc)exp(alpha)} (with $\alphabar=0$)
immediately from the above results: if $\Wbar$ has the form
\begin{equation}
 \begin{aligned}
    \Wbar(\hbar,t,\tbar,s, e^{\hbar\ders})
    &=
    \exp(\phi(\hbar,t,\tbar,s)/\hbar)
    \exp(\Xbar(\hbar,t,\tbar,e^{\hbar\ders})/\hbar),
\\
    \Xbar(\hbar,t,\tbar,s,e^{\hbar\ders})
    &=
    \sum_{k=1}^\infty \chibar_k   (\hbar,t,\tbar,s) e^{k\hbar\ders},
    \qquad 
    \ordh \phi, \ordh \Xbar \leqq 0,
 \end{aligned}
\label{Wbar=exp(Xbar):alphabar=0}
\end{equation}
gives the wave function of the WKB form
\begin{gather}
    \Psibar(\hbar,t,\tbar,s;\zbar) 
    = \Wbar  \zbar^{s/\hbar} e^{\zeta(\tbar,\zbar^{-1})/\hbar} 
    = e^{\Sbar(\hbar,t,\tbar,s,\zbar)/\hbar} \zbar^{s/\hbar},
    \qquad \ordh S \leqq 0,
\label{wave-func:bar}
\\
    \Sbar(\hbar,t,\tbar,s;\zbar)
    = \sum_{n=0}^\infty \hbar^n \Sbar_n(t,\tbar,s;\zbar) 
    + \zeta(\tbar,\zbar^{-1}), \qquad
    \zeta(\tbar,\zbar^{-1}) := \sum_{n=1}^\infty \tbar_n \zbar^{-n}.
\label{Sbar}
\end{gather}
Since the time variables $t_n$ and $\tbar_n$ do not play any role in
this section, we set them to zero.

Let $A(\hbar,s,e^{\hbar\ders})=\sum_n a_n(\hbar,s) e^{n\hbar\ders}$ be a
difference operator. The {\em total symbol} of $A$ is a power
series of $\xi$ defined by
\begin{equation}
    \totsym(A)(\hbar,s,\xi):= \sum_n a_n(\hbar,x) \xi^n.
\label{tot-symbol}
\end{equation}
Actually, this is the factor which appears when the operator $A$ is
applied to $z^{s/\hbar}$:
\begin{equation}
    A z^{s/\hbar} = \totsym(A)(\hbar,s,z) z^{s/\hbar}.
\label{tot-symbol:Laplace}
\end{equation}
Using this terminology, what we show in this section is that a operator
of the form $e^{X/\hbar}$ has a total symbol of the form $e^{S/\hbar}$
and that an operator with total symbol $e^{S/\hbar}$ has a form
$e^{X/\hbar}$. Exactly speaking, the main results in this section are
the following two propositions.
\begin{prop}
\label{prop:exp(:X:)=:exp(S):}
 Let $X=X(\hbar,s,e^{\hbar\ders})$ be a difference operator of the form
 \eqref{X:alpha=0}, which has the $\hbar$-order $0$: $\ordh X = 0$. Then
 the total symbol of $e^{X/\hbar}$ has such a form as
\begin{equation}
    \totsym(\exp(\hbar^{-1} X(\hbar,s,e^{\hbar\ders})))
    = e^{S(\hbar,s,\xi)/\hbar},
\label{exp(:X:)=:exp(S):}
\end{equation}
 where $S(\hbar,s,\xi)$ is a power series of $\xi^{-1}$ without
 non-negative powers of $\xi$ and has an $\hbar$-expansion
\begin{equation*}
    S(\hbar,s,\xi)=\sum_{n=0}^\infty \hbar^n S_n(s,\xi).
\end{equation*}

 Moreover, the coefficient $S_n$ is determined by $X_0,\dots,X_n$ in the
 $\hbar$-expansion \eqref{X:h-expansion} of $X=\sum_{n=0}^\infty \hbar^n
 X_n$.
\end{prop}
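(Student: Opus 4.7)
My plan is to adapt Aoki's exponential calculus to the difference-operator setting via the $\circ$-product introduced earlier. First, I would verify the multiplicativity $\totsym(AB)=\totsym(A)\circ\totsym(B)$: applying $AB$ to $z^{s/\hbar}$ via \eqref{tot-symbol:Laplace}, using $e^{n\hbar\ders}f(s)=f(s+n\hbar)$ and Taylor-expanding in $s$, reproduces precisely the $\circ$-product formula from the introduction. Consequently
\begin{equation*}
\totsym(e^{X/\hbar}) \;=\; \sum_{m=0}^\infty \frac{1}{m!}\,\totsym(X/\hbar)^{\circ m}.
\end{equation*}
To organise this as a single object, I introduce a formal parameter $\tau$ and set $F(\tau)$ to be this sum with $X$ replaced by $\tau X$; then $F$ is the unique formal-in-$\tau$ solution of
\begin{equation*}
\partial_\tau F \;=\; F\circ(\totsym(X)/\hbar), \qquad F(0)=1,
\end{equation*}
and the target is $F(1)$.

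Making the ansatz $F=e^{S(\tau)/\hbar}$ (ordinary exponential, not a $\circ$-exponential) with $S(0)=0$ and using the Fa\`{a} di Bruno / Bell-polynomial identity $e^{-S/\hbar}(\xi\der_\xi)^n e^{S/\hbar} = B_n(\xi\der_\xi S/\hbar,\ldots,(\xi\der_\xi)^n S/\hbar)$ converts the $\circ$-ODE into a scalar formal ODE for $S$,
\begin{equation*}
\partial_\tau S \;=\; \sum_{n=0}^\infty \frac{\hbar^n}{n!}\, B_n\!\left(\frac{\xi\der_\xi S}{\hbar},\,\ldots,\,\frac{(\xi\der_\xi)^n S}{\hbar}\right)\der_s^n\,\totsym(X).
\end{equation*}
The crucial cancellation is that $B_n$ has weight $n$ and at most $n$ factors of $S/\hbar$, so the prefactor $\hbar^n$ absorbs every negative power of $\hbar$ arising from the Bell polynomial. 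The right-hand side is therefore a formal power series in $\hbar$ with only non-negative powers, of the schematic form $G_0[S_0,X_0]+\hbar\,G_1[S_0,S_1,X_0,X_1]+\cdots$.

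I would then solve this ODE order-by-order in $\hbar$. Writing $S=\sum_n \hbar^n S_n$, the $\hbar^m$-coefficient yields $\partial_\tau S_m = G_m(S_0,\ldots,S_m,X_0,\ldots,X_m)$; for $m=0$ this is a quasilinear first-order PDE (the dispersionless equation) determining $S_0$ from $X_0$ by formal power-series expansion in $\tau$, while for $m\ge 1$ its top-order part is linear in $S_m$ with coefficients fixed by lower-order data, so each $S_m$ is uniquely determined as a formal $\tau$-series vanishing at $\tau=0$. Induction on $m$ shows $S_m$ depends only on $X_0,\ldots,X_m$. The property ``no non-negative powers of $\xi$'' propagates automatically: $\totsym(X)$ has only strictly negative powers of $\xi$ by hypothesis; both $\xi\der_\xi$ (which annihilates any constant term) and $\der_s$ preserve this subspace; and the subspace is closed under pointwise multiplication, so $G$ lies in the subspace whenever $S$ does, and $\tau$-integration from $0$ preserves the property. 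Evaluating at $\tau=1$ gives the proposition.

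The main obstacle, flagged already in the introduction, is that $\xi\der_\xi$ in the $\circ$-product does not lower the $\xi$-order, unlike $\der_\xi$ in the microdifferential KP case of \cite{tak-tak:09}; the infinite $\circ$-product sums defining $G$ therefore cannot be truncated by a $\xi$-filtration. The remedy is to switch filtrations: because each term in the $\circ$-product carries an explicit $\hbar^n$, only finitely many $n$ contribute to any fixed $\hbar$-order, so $G$ and the recursion for the $S_n$'s make sense as formal $\hbar$-series. Tracking how the Bell-polynomial cancellations land within this $\hbar$-filtration, and confirming that the recursion closes with only finitely many pieces of lower data at each step, is the technical heart that replaces the simpler $\xi$-order argument of \cite{tak-tak:09}.
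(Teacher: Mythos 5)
Your skeleton---introduce an auxiliary parameter, write the symbol of $e^{\tau X/\hbar}$ as an ordinary exponential $e^{S(\tau)/\hbar}$, derive a scalar evolution equation for $\partial_\tau S$, and solve it order by order---is essentially the paper's own strategy (the paper's recursion for the $Y_k$ in the proof of \lemref{lem:prod-exp-symbol} is exactly a bookkeeping of your Bell-polynomial expansion). The genuine gap is in your final paragraph, which is precisely where the real work lies. Your proposed remedy, that ``only finitely many $n$ contribute to any fixed $\hbar$-order,'' is false: the Bell-polynomial cancellation you yourself identify means the prefactor $\hbar^n$ is \emph{exactly} consumed by the $n$ factors of $S/\hbar$. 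For instance the monomial $x_1^n$ inside $B_n$ contributes $\frac{1}{n!}(\xi\der_\xi S_0)^n\,\der_s^n X_0$ at $\hbar$-order $0$ for \emph{every} $n$, so infinitely many $n$ contribute at every fixed $\hbar$-order, already at order zero. The $\hbar$-filtration therefore does not make your $G_m$ well defined, and switching to it cannot replace the $\xi$-order analysis.

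What actually saves the construction---and what the bulk of the paper's proof establishes by induction---is a refined $\xi$-order estimate that exploits the \emph{number of factors} rather than the action of $\xi\der_\xi$: each factor $\xi\der_\xi S$ or $\der_s S$ has $\ordx\leqq -1$, so a term with $r$ such factors has $\ordx\leqq -r-1$, and since $r$ grows with $n$ at fixed $\hbar$-order, the sum over $n$ converges $\xi$-adically. In the paper this is the pair of statements \eqref{Ylkn=0} ($Y^{(l)}_{k,n}=0$ for $k>l+n$, which makes the Bell-type sums finite) and \eqref{ordYlkn<=-l-1}, \eqref{ordS(l+1)<-l-1} ($\ordx Y^{(l)}_{k,n}\leqq -l-1$ and $\ordx S^{(l+1)}_n\leqq -l-1$). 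These same estimates are also what license your last step, ``evaluating at $\tau=1$'': you only construct each $S_m$ as a \emph{formal} power series in $\tau$, and such a series cannot be evaluated at $\tau=1$ unless its $\tau^l$-coefficients tend to zero in the $\xi$-adic topology---which is exactly the content of \eqref{ordS(l+1)<-l-1}. So the outline is right, but of the two convergence claims forming the technical heart of the proposition, one is wrong and the other is missing; supplying the inductive $\xi$-order estimates would complete your argument along the paper's own lines.
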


Explicitly, $S_n$ is determined as follows:
\begin{itemize}
 \item (Step 0) Assume that $X_0,\dots,X_n$ are given. Let $X_i(s,\xi)$
       be the total symbol $\totsym(X_i(s,e^{\hbar\ders}))$.

 \item (Step 1) Define $Y^{(l)}_{k,m}(s,s',\xi,\xi')$ and
       $S^{(l)}(s,\xi)$ by the following recursion relations:
\begin{align}
    &Y^{(l)}_{k,-1}=0
\\
    &S^{(0)}_m=0,
\\
    &Y^{(l)}_{0,m}(s,s',\xi,\xi') = \delta_{l,0} X_m(s,\xi)
\label{Yl0m}
\end{align}
for $l\geqq 0$, $m=0,\dots,n$,
\begin{multline}
    Y^{(l)}_{k+1,m}(s,s',\xi,\xi')
\\    =
    \frac{1}{k+1}
    \left(
    \xi\der_\xi \der_{s'} Y^{(l)}_{k,m-1}(s,s',\xi,\xi')
    +
    \sum_{\substack{0\leq l' \leq l-1 \\ 0\leq m' \leq m}}
    \xi\der_\xi Y^{(l')}_{k,m'}(s,s',\xi,\xi')
    \der_{s'} S^{(l-l')}_{m-m'}(s',\xi')
    \right)
\label{recursion:Y(l,k,m)}
\end{multline}
       for $k\geqq 0$, and
\begin{equation}
    S^{(l+1)}_m(s,\xi)
    = \frac{1}{l+1} \sum_{k=0}^{l+m} Y^{(l)}_{k,m}(s,s,\xi,\xi).
\label{recursion:S(l+1,m)}
\end{equation}
       (We shall prove that $Y^{(l)}_{k,m}=0$ if $k>l+m$.) Schematically
       this procedure goes as follows:
{\small
\begin{equation*}
 \begin{matrix}
                     &         & Y^{(l)}_{0,0}=\delta_{l,0}X_0
                     &         & Y^{(l)}_{0,1}=\delta_{l,0}X_1
                     &         & Y^{(l)}_{0,2}=\delta_{l,0}X_2
\\
                     &         & +
                     &\searrow & +
                     &\searrow & +
\\
    Y^{(l)}_{k,-1}=0 &\to      & Y^{(l)}_{k,0}
                     &\to      & Y^{(l)}_{k,1}
                     &\to      & Y^{(l)}_{k,2} & \cdots
\\
                     &         & \downarrow
                     &\nearrow & \downarrow
                     &\nearrow & \downarrow
\\
                     &         & S^{(l+1)}_0
                     &         & S^{(l+1)}_1
                     &         & S^{(l+1)}_2
 \end{matrix}
\end{equation*}
}

 \item (Step 2) $S_n(s,\xi)= \sum_{l=1}^\infty S^{(l)}_n(s,\xi)$. (The
       sum makes sense as a power series of $\xi$.)
\end{itemize}

\begin{prop}
\label{prop::exp(S):=exp(:X:)}
 Let $S(\hbar,s,\xi)=\sum_{n=0}^\infty \hbar^n S_n(s,\xi)$ be a power
 series of $\xi^{-1}$ without non-negative powers of $\xi$. Then there
 exists a difference operator $X(\hbar,s,e^{\hbar\ders})$ of the form
 \eqref{X:alpha=0} such that $\ordh X \leqq 0$ and
\begin{equation}
    \totsym(\exp(\hbar^{-1} X(\hbar,s,e^{\hbar\ders})))
    = e^{S(\hbar,s,\xi)/\hbar}.
\label{:exp(S):=exp(:X:)}
\end{equation}
 Moreover, the coefficient $X_n(s,\xi)$ in the $\hbar$-expansion
 $X=\sum_{n=0}^\infty \hbar^n X_n$ of the total symbol
 $X=X(\hbar,s,\xi)$ is determined by $S_0,\dots,S_n$ in the
 $\hbar$-expansion of $S$.
\end{prop}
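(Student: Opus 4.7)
The plan is to invert the recursive construction of \propref{prop:exp(:X:)=:exp(S):} by induction on $n$, exploiting a triangular structure of the $\xi^{-k}$ expansion. At stage $n$, assume that coefficients $X_0,\ldots,X_{n-1}$ have already been produced such that, when fed into the Step~0--Step~2 procedure of \propref{prop:exp(:X:)=:exp(S):}, the output series $\tilde S = \sum_{j\geq 0}\hbar^j \tilde S_j$ satisfies $\tilde S_j = S_j$ for $0 \le j \le n-1$. (The hypothesis is vacuous when $n=0$, so the base case and inductive step share the same structure.) We then seek $X_n$ so that, after adjoining $\hbar^n X_n$ to the expansion, the updated $\tilde S_n$ equals the prescribed $S_n$. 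Since \propref{prop:exp(:X:)=:exp(S):} guarantees that each $\tilde S_j$ depends only on $X_0,\ldots,X_j$, the earlier matchings are automatically preserved.

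The existence of $X_n$ reduces to a structural claim about how $X_n$ enters $\tilde S_n$. From inspection of \eqref{recursion:Y(l,k,m)} and \eqref{recursion:S(l+1,m)}, the dependence of $\tilde S_n$ on $X_n$ has the form $X_n + L_n[X_n]$, where $L_n$ is a linear differential functional whose coefficients depend only on $X_0,\ldots,X_{n-1}$. Write $X_n(s,\xi) = \sum_{k\ge1}\chi_{n,k}(s)\xi^{-k}$ and $S_n(s,\xi) = \sum_{k\ge1}\sigma_{n,k}(s)\xi^{-k}$. The key claim is that the $\xi^{-k}$ coefficient of $L_n[X_n]$ involves only $\chi_{n,1},\ldots,\chi_{n,k-1}$ and their $s$-derivatives. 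Granted this, one solves the triangular system
\[
  \sigma_{n,k} = \chi_{n,k} + R_{n,k}(\chi_{n,1},\ldots,\chi_{n,k-1};X_0,\ldots,X_{n-1})
\]
recursively in $k$, where $R_{n,k}$ is a differential polynomial in previously determined data, and recovers $X_n$ as $\sum_k \chi_{n,k}(s)e^{-k\hbar\der_s}$.

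The main obstacle is establishing the triangular structure. The direct linear term $X_n$ comes from $S^{(1)}_n = X_n$: indeed, $Y^{(0)}_{0,m} = X_m(s,\xi)$ has no $s'$-dependence, so $\der_{s'} Y^{(0)}_{0,m-1} = 0$ and, inductively on $k$, $Y^{(0)}_{k,m} = 0$ for $k\ge1$, leaving only the $k=0$ term in $S^{(1)}_n$. For $l \ge 2$, any occurrence of $X_n$ in $S^{(l)}_n$ enters via a product of the form $\xi\der_\xi Y^{(l')}_{k',m'} \cdot \der_{s'} S^{(l-l')}_{m-m'}$ from \eqref{recursion:Y(l,k,m)}. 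Each factor is, by induction on the recursion, a power series in $\xi^{-1}$ with no constant or positive power of $\xi$, hence starts at order $\xi^{-1}$. Consequently the $\xi^{-k}$ coefficient of any such product is a finite sum over pairs $(k_1,k_2)$ with $k_1 + k_2 = k$ and $k_1,k_2 \ge 1$, so every $\chi_{n,k_i}$ appearing satisfies $k_i \le k-1$. This yields the triangular structure and completes the inductive construction of $X$.
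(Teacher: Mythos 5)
Your proof is correct and is essentially the paper's own argument: the paper likewise inverts the recursion of \propref{prop:exp(:X:)=:exp(S):} by an outer induction on the $\hbar$-order $n$ combined with an inner triangular solve in the homogeneous degree in $\xi^{-1}$ (its relation \eqref{Y00mj=Smj-Ylkmj} is precisely your $\chi_{n,k}=\sigma_{n,k}-R_{n,k}$), justified by the same two structural facts you use, namely $S^{(1)}_n=X_n$ and that every other contribution is a product of series of $\xi$-order $\leqq -1$, hence raises the degree by at least one. One caveat: your description of $L_n$ as \emph{linear} in $X_n$ fails at the base case $n=0$, where the dependence of $\tilde S_0$ on $X_0$ is genuinely nonlinear (e.g.\ through $S^{(2)}_0$), but this is harmless because your triangularity argument and the solvability of the system $\sigma_{n,k}=\chi_{n,k}+R_{n,k}$ rest only on the degree-raising product structure, never on linearity.
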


Explicit procedure is as follows:
\begin{itemize}
 \item (Step 0) Assume that $S_0,\dots,S_n$ are given. Expand them into
       homogeneous terms with respect to powers of $\xi$:
       $S_n(s,\xi)=\sum_{j=1}^\infty S_{n,j}(s,\xi)$, where $S_{n,j}$ is
       a term of degree $-j$.
 \item (Step 1) Define $Y^{(l)}_{k,n,j}(s,s',\xi,\xi')$ as follows:
\begin{align}
    &Y^{(l)}_{k,-1,j}(s,s',\xi,\xi')=0,
\\
    &Y^{(l)}_{k,m,1}(s,s',\xi,\xi')
    =\delta_{l,0}\delta_{k,0} S_{m,1}(s,\xi)
\label{Ylkm1}
\end{align}
       for $m=0,\dots,n$, $k\geqq 0$, $l\geqq 0$ and
\begin{equation}
    Y^{(l)}_{0,m,j}=0
\end{equation}
       for $m=0,\dots,n$, $l>0$, $j\geqq 1$. For other $(l,k,m,j)$,
       $(l,k)\neq(0,0)$, $Y^{(l)}_{k,m,j}$ are determined by the
       recursion relation:
\begin{multline}
    Y^{(l)}_{k+1,m,j}(s,s',\xi,\xi')
    =
    \frac{1}{k+1}
    \Biggl(
    \xi\der_\xi \der_{s'} Y^{(l)}_{k,m-1,j}(s,s',\xi,\xi')+
\\
    +
    \sum_{\substack{0\leq l' \leq l-1\\
                    1\leq j' \leq j-1, 
                    0\leq m' \leq m \\
                    0\leq k'' \leq l-l'-1+m-m'}}
    \frac{1}{l-l'}
    \der_\xi Y^{(l')}_{k,m',j'}(s,s',\xi,\xi')
    \der_y Y^{(l-l'-1)}_{k'',m-m',j-j'}(s,s,\xi,\xi)
    \Biggr).
\label{recursion:Y(l,k,m)<-Y}
\end{multline}
       The remaining $Y^{(0)}_{0,m,j}$ is determined by:
\begin{equation}
    Y^{(0)}_{0,m,j}(s,s',\xi,\xi')
    =
    S_{m,j}(s,\xi)
    - \sum_{\substack{(l,k)\neq(0,0)\\
                       0 \leq l < j, 0 \leq k \leq l+m,}
           }
      \frac{1}{l+1} Y^{(l)}_{k,m,j}(s,s,\xi,\xi).
\label{Y00mj=Smj-Ylkmj}
\end{equation}
       (We shall show that $Y^{(l)}_{k,m,j}=0$ for $k>l+m$ or $j \leqq
       l$.)  Schematically this procedure goes as follows:
\begin{equation*}
 \begin{matrix}
    &  & 
    Y^{(l)}_{k,m,1}=\delta_{l,0}\delta_{k,0}S_{m,1}
\\
    &  & 
    \downarrow\hskip 1.5cm 
\\
    Y^{(l')}_{k',m',1}(m'<m)      & \to  & 
    Y^{(l)}_{k,m,2}\ (k,l\neq 0)  & \to  &
    Y^{(0)}_{0,m,2}               & \leftarrow &
    S_{m,2}
\\
    &  & 
    \downarrow\hskip 1.5cm & \swarrow
\\
    Y^{(l')}_{k',m',1}, Y^{(l')}_{k',m',2}(m'<m)      & \to  & 
    Y^{(l)}_{k,m,3}\ (k,l\neq 0)  & \to  &
    Y^{(0)}_{0,m,3}               & \leftarrow &
    S_{m,3}
\\
    &  & 
    \vdots\hskip 1.5cm
 \end{matrix}
\end{equation*}

\item (Step 2) $X_n(s,\xi)=\sum_{j=1}^\infty
       Y^{(0)}_{0,n,j}(s,s,\xi,\xi)$.  (The infinite sum is the
       homogeneous expansion in terms of powers of $\xi$.)
\end{itemize}

Combining these propositions (and the corresponding statements for
$\Xbar$ and $\Sbar$ \eqref{Sbar}) with the results in
\secref{sec:recursion}, we can, in principle, make a recursion formula
for $S_n$ and $\Sbar_n$ ($n=0,1,2,\dots$) of the wave functions of the
solution of the Toda lattice hierarchy corresponding to
$(f,g,\fbar,\gbar)$ by \propref{prop:RH} (i) as follows: let
$S_0,\dotsc,S_{i-1}$, $\Sbar_0,\dotsc,\Sbar_{i-1}$ and
$\phi_0,\dotsc,\phi_{i-1}$ be given.
\begin{enumerate}
 \item By \propref{prop::exp(S):=exp(:X:)} and its variant with the
       opposite sign of $s$ we have $X_0,\dotsc,X_{i-1}$ and
       $\Xbar_0,\dotsc,\Xbar_{i-1}$.
 \item We have a recursion formula for $X_i$, $\Xbar_i$ and $\phi_i$ by
       \thmref{thm:recursion:X,Xbar}. 
 \item \propref{prop:exp(:X:)=:exp(S):} (with its variant) gives a
       formula for $S_i$, $\Sbar_i$.
\end{enumerate}
If we take the factor $e^{\hbar^{-1}\alpha(\hbar)(\hbar\ders)}$ into
account, this process becomes a little bit more complicated, but
essentially the same.

\bigskip
The rest of this section is devoted to the proof of
\propref{prop:exp(:X:)=:exp(S):} and \propref{prop::exp(S):=exp(:X:)}. 

To avoid confusion, the commutative multiplication of total symbols
$a(\hbar,s,\xi)$ and $b(\hbar,s,\xi)$ as power series is denoted by
$a(\hbar,s,\xi)\, b(\hbar,s,\xi)$ and the non-commutative multiplication
corresponding to the operator product is denoted by $a(\hbar,s,\xi)
\circ b(\hbar,s,\xi)$. Recall that the latter multiplication is
expressed (or defined) as follows:
\begin{equation}
 \begin{split}
    a(\hbar,s,\xi) \circ b(\hbar,s,\xi)
    &=
    e^{\hbar\, \xi\der_\xi \der_{s'}} 
    a(\hbar,s,\xi) b(\hbar,s',\xi')|_{s'=s,\xi'=\xi}
\\
    &=
    \sum_{n=0}^\infty\frac{\hbar^n}{n!}
    (\xi\der_\xi)^n a(\hbar,s,\xi) 
        \der_{s'}^n b(\hbar,s',\xi')|_{s'=s,\xi'=\xi}.
\label{def:symbol-prod}
 \end{split}
\end{equation}
(This corresponds to Equation (3.21) of \cite{tak-tak:09} for
microdifferential operators.)  The order of the symbol $a(\hbar,s,\xi)$
(the order with respect to $\xi$ as a power series of $\xi$) is denoted
by $\ordx a(\hbar,s,\xi)$:
\begin{equation*}
    \ordx \left( \sum a_{m}(\hbar,s) \xi^{m} \right)
    \defeq
    \max \left\{ m \,\left|\, 
                 a_m(\hbar,s) \neq 0 
    \right.\right\}.
\end{equation*}
The $\hbar$-order is the same as that of 
operators: $\ordh s = \ordh \xi = 0$, $\ordh \hbar = -1$.

The main idea of proof of propositions is the same as those in \S3 of
\cite{tak-tak:09}, which is a formal version of Aoki's exponential
calculus of microdifferential operators, \cite{aok:86}. Since the Euler
operator $\xi\der_\xi$ does not lower the order with respect to
$\xi$ in contrast to the differential operator $\der_\xi$, proof of
convergence of series like \eqref{def:symbol-prod} as a formal power
series is different from that in \cite{tak-tak:09}.

\bigskip
First, we prove the following lemma.
\begin{lem}
\label{lem:prod-exp-symbol}
 Let $a(\hbar,s,\xi)$, $b(\hbar,s,\xi)$, $p(\hbar,s,\xi)$ and
 $q(\hbar,s,\xi)$ be symbols such that $\ordx a(\hbar,s,\xi) = M$,
 $\ordh a(\hbar,s,\xi)\leqq 0$, $\ordx b(\hbar,s,\xi) = N$, $\ordh
 b(\hbar,s,\xi)\leqq 0$, $\ordx p(\hbar,s,\xi)\leqq -1$, $\ordx
 q(\hbar,s,\xi)\leqq -1$, $\ordh p(\hbar,s,\xi)\leqq 0$, $\ordh
 q(\hbar,s,\xi)\leqq 0$.

 Then there exist symbols $c(\hbar,s,\xi)$ ($\ordx c(\hbar,s,\xi) =
 N+M$, $\ordh c(\hbar,s,\xi) \leqq 0$) and $r(\hbar,s,\xi)$ ($\ordx
 r(\hbar,s,\xi) \leqq 0$, $\ordh r(\hbar,s,\xi) \leqq 0$) such that
\begin{equation}
    \bigl(a(\hbar,s,\xi) e^{p(\hbar,s,\xi)/\hbar}\bigr)
    \circ
    \bigl(b(\hbar,s,\xi) e^{q(\hbar,s,\xi)/\hbar}\bigr)
    =
    c(\hbar,s,\xi) e^{r(\hbar,s,\xi)/\hbar}.
\label{prod-exp-symbol}
\end{equation}
\end{lem}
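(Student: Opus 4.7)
The plan is to mirror the proof of the analogous statement for microdifferential operators in Aoki's exponential calculus (as used in \cite{tak-tak:09}), with the modifications needed because here the relevant derivation in $a \circ b$ is $\xi\der_\xi$ rather than $\der_\xi$. The scheme is: expand the $\circ$-product by Faà di Bruno, factor out the naive exponential $e^{(p+q)/\hbar}$, check that the remaining formal sum is coefficient-wise well defined via an $\ordx$/$\ordh$ count, and finally reorganise the result into the WKB form $c\,e^{r/\hbar}$ by matching in $\hbar$-order.

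First I apply the definition \eqref{def:symbol-prod} and use Faà di Bruno on each factor $(\xi\der_\xi)^n(a\,e^{p/\hbar})$ and $\der_{s'}^n(b\,e^{q/\hbar})$. Every application of $\xi\der_\xi$ or $\der_{s'}$ that hits the exponential brings down a factor $\hbar^{-1}(\xi\der_\xi)^k p$ or $\hbar^{-1}\der_s^k q$, while the remaining derivatives act on $a$ or $b$. Collecting the exponentials yields
\begin{equation*}
    (a\,e^{p/\hbar}) \circ (b\,e^{q/\hbar}) = e^{(p+q)/\hbar}\,\Phi(\hbar,s,\xi),
\end{equation*}
where $\Phi$ is a formal sum indexed by $n$ and by set partitions $\pi,\pi'$ describing how the derivatives distribute. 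A term in which $r$ of the $n$ derivatives fall on $p$ and $r'$ of the $n$ derivatives fall on $q$ carries a factor $\hbar^{n-r-r'}/n!$ times a product of derivatives of $a, b, p, q$.

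Next I verify that $\Phi$ is a well-defined formal Laurent series in $\xi$ with formal $\hbar$-coefficients. Because $\xi\der_\xi$ and $\der_s$ both preserve $\ordx$, and $\ordx p, \ordx q \leq -1$, every exponent-derivative factor has $\ordx \leq -1$. A generic term of $\Phi$ therefore has $\ordx \leq M+N-r-r'$ and total $\hbar$-power $\geq n-r-r'$ (the intrinsic $\hbar$-expansions of $a, b, p, q$ only shift this non-negatively). For a fixed target monomial $\hbar^k\xi^\ell$, these bounds give $r+r' \leq M+N-\ell$ and $n \leq k+M+N-\ell$, so only finitely many tuples $(n, r, r', \pi, \pi')$ contribute. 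This finiteness is precisely the replacement, in the present setting, for the convergence argument of \cite{tak-tak:09}: although $\xi\der_\xi$ does not lower $\ordx$ on its own, the hypothesis $\ordx p, \ordx q \leq -1$ forces each exponent-derivative factor to carry $\xi^{-1}$ of decay.

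Finally I extract $c$ and $r$ by matching $\Phi = c\,\exp((r-p-q)/\hbar)$ order by order. Since $\Phi$ has terms of both non-negative and negative $\hbar$-power, the classical exponent $p+q$ must be corrected: writing $r = r_0 + \hbar r_1 + \cdots$ with $r_0 - (p+q)$ allowed to be nonzero, one absorbs the negative-$\hbar$ contributions of $\Phi$ into $\exp((r_0-p-q)/\hbar)$ and the regular part into $\exp(r_1 + \hbar r_2 + \cdots)$. Comparing in $\hbar$-order then produces an upper-triangular recursion: the most singular ($\hbar^{-\text{large}}$) part of $\Phi$ fixes $r_0$ and $c_0$, and each further pair $(r_j, c_j)$ is determined by the $\hbar^j$-part of $\Phi$ together with the data from previous stages. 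The main obstacle is showing that this recursion respects the order conditions, namely that the pieces assigned to $r_j$ have $\ordx \leq 0$ and those assigned to $c_j$ have $\ordx \leq M+N$. Both follow from the $\xi$-order bound established above: every correction to $r$ comes from terms containing at least one $p$- or $q$-derivative and hence at least one extra $\xi^{-1}$, while the positive-$\xi$-order pieces at each order of $\Phi$ naturally accumulate into $c$. The $\ordh \leq 0$ conditions on $c$ and $r$ follow because $a, b, p, q$ are regular in $\hbar$, so no negative $\hbar$-powers survive outside the absorbed exponential.
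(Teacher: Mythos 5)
Your first two steps are sound: the Fa\`a di Bruno expansion of \eqref{def:symbol-prod} and the observation that each exponent-derivative factor keeps $\ordx\leqq -1$ (because $\xi\der_\xi$ and $\der_{s'}$ preserve $\xi$-order) do give coefficient-wise finiteness of $\Phi$. One bookkeeping slip: the factors $\hbar^{-1}$ and the $\xi^{-1}$-decay come one per \emph{block} of the partition (per factor $(\xi\der_\xi)^{k}p$ or $\der_{s'}^{k}q$ brought down), not one per derivative falling on the exponent; since the number of blocks is at most the number of such derivatives, your finiteness count survives once the two inequalities are stated consistently in terms of blocks.

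The genuine gap is the last step, the re-exponentiation $\Phi = c\,\exp\bigl((r-p-q)/\hbar\bigr)$. First, your recursion has no starting point: $\Phi$ contains arbitrarily negative powers of $\hbar$ (terms where every derivative hits an exponent), so there is no ``most singular part of $\Phi$'' to fix $r_0$ and $c_0$; any matching scheme would have to be organised by $\xi$-homogeneity jointly with $\hbar$-order, which you do not set up. Second, and more fundamentally, the matching system is overdetermined --- $\Phi$ has independent coefficients at $\hbar^{-K}$ for every $K\geqq 1$, while the unknowns $c_j, r_j$ are indexed by $j\geqq 0$ --- and its solvability is \emph{not} a consequence of finiteness plus your order bounds. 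For instance (take $M=N=0$) the series $\Phi = 1+\hbar^{-1}\xi^{-2}$ satisfies the bound $\ordx \Phi_{-K}\leqq -2K$, yet $\hbar\log\Phi = \xi^{-2}-\hbar^{-1}\xi^{-4}/2+\cdots$ is not regular in $\hbar$, so no factorisation $c\,e^{r/\hbar}$ with $c,r$ regular exists. Thus the existence of the WKB factorisation is exactly the content of the lemma and cannot be obtained by ``comparing in $\hbar$-order''; it must come from the multiplicative structure of the $\circ$-product itself. The paper secures it by Aoki's device: introduce the parameter $t$ as in \eqref{def:pi(t)}, note that the product is the unique solution of the initial value problem \eqref{diff-eq:pi}, and \emph{build in} the form $\pi=\psi\,e^{w/\hbar}$ by solving the eikonal-type equation \eqref{diff-eq:w} for $w$ and the transport equation \eqref{diff-eq:psi} for $\psi$; uniqueness of \eqref{diff-eq:pi} then forces the product to equal $\psi(1)e^{w(1)/\hbar}$, and the only thing left to check is that the $t$-power series make sense at $t=1$, which is where the order estimates (your finiteness-type argument) are actually used. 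To repair your proof you would need either this ODE/uniqueness device or a genuine combinatorial exponentiation (linked-cluster) argument showing that the connected part of your expansion has the form (regular)$/\hbar$; as written, the claim is asserted rather than proved.
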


In the proof of \propref{prop:exp(:X:)=:exp(S):} and
\propref{prop::exp(S):=exp(:X:)}, we use the construction of $c$ and $r$
in the proof of \lemref{lem:prod-exp-symbol}.

\begin{proof}
 Following \cite{aok:86}, we introduce a parameter $t$ and consider
\begin{equation}
    \pi(t) = \pi(t;\hbar,s,s',\xi,\xi'):=
    e^{\hbar t \xi\der_\xi \der_{s'}}
    a(\hbar,s,\xi) b(\hbar,s',\xi')
    e^{\bigl(p(\hbar,s,\xi) + q(\hbar,s',\xi')\bigr)/\hbar}.
\label{def:pi(t)}
\end{equation}
 If we set $t=1$, $s'=s$ and $\xi'=\xi$, this reduces to the operator
 product of \eqref{def:symbol-prod}. The series $\pi(t)$ is the unique
 solution of an initial value problem:
\begin{equation}
    \der_t \pi = \hbar \xi\der_\xi\der_{s'} \pi, \qquad
    \pi(0) = a(\hbar,s,\xi) b(\hbar,s',\xi') 
    e^{\bigl(p(\hbar,s,\xi) + q(\hbar,s',\xi')\bigr)/\hbar}.
\label{diff-eq:pi}
\end{equation}
 We construct its solution in the following form:
\begin{equation}
 \begin{aligned}
    \pi(t) &= \psi(t) e^{w(t)/\hbar},
\\
    \psi(t) &= \psi(t;\hbar,s,s',\xi,\xi')
             = \sum_{n=0}^\infty \psi_n t^n,
\\
    w(t) &= w(t;\hbar,s,s',\xi,\xi') = \sum_{k=0}^\infty w_k t^k.
 \end{aligned}
\label{pi=psi*ew}
\end{equation}
 Later we set $t=1$ and prove that $\psi(1)$ and $w(1)$ are meaningful
 as a formal power series of $\xi$ and $\xi'$. The differential equation
 \eqref{diff-eq:pi} is rewritten as
\begin{equation}
 \begin{split}
    &\frac{\der\psi}{\der t} + \hbar^{-1} \psi \frac{\der w}{\der t}
\\
    =&
    \hbar \xi\der_\xi \der_{s'} \psi 
    + \xi\der_\xi \psi \der_{s'} w + \xi\der_\xi w \der_{s'} \psi 
    +
    \psi 
    \left(              \xi\der_\xi \der_{s'} w 
         + \hbar^{-1} \xi\der_\xi w \der_{s'} w
    \right).
 \end{split}
\label{diff-eq:psi,w}
\end{equation}
 Hence it is sufficient to construct
 $\psi(t)=\psi(t;\hbar,s,s',\xi,\xi')$ and $w(t)=w(t;\hbar,s,s',\xi,\xi')$
 which satisfy 
\begin{align}
    \frac{\der w}{\der t} &= \hbar \xi\der_\xi \der_{s'} w
    + \xi\der_\xi w \der_{s'} w,
\label{diff-eq:w}
\\
    \frac{\der \psi}{\der t}
    &= \hbar \xi\der_\xi \der_{s'} \psi 
     + \xi\der_\xi \psi \der_{s'} w + \xi\der_\xi w \der_{s'} \psi.
\label{diff-eq:psi}
\end{align}
 (This is a {\em sufficient} condition but not a necessary condition for
 $\pi = \psi e^{w/\hbar}$ to be a solution of \eqref{diff-eq:pi}. The
 solution of \eqref{diff-eq:pi} is unique, but $\psi$ and $w$ satisfying
 \eqref{diff-eq:psi,w} are not unique at all.)

 To begin with, we solve \eqref{diff-eq:w} and determine
 $w(t)$. Expanding it as $w(t) = \sum_{k=0}^\infty w_k t^k$, we have
 a recursion relation and the initial condition
\begin{equation}
 \begin{split}
    w_{k+1} &= \frac{1}{k+1} \left(
    \hbar \xi\der_\xi \der_{s'} w_k +
    \sum_{\nu=0}^k \xi\der_\xi w_\nu \der_{s'} w_{k-\nu}
    \right),
\\
    w_0 &= p(s,\xi) + q(s',\xi'),
 \end{split}
\label{recursion:w}
\end{equation}
 which determine $w_k=w_k(\hbar,s,s',\xi,\xi')$ inductively. In order to
 show that $\sum_{k=0}^\infty w_k$ converges as a formal power series,
 let us expand each $w_k$ as follows:
\begin{equation}
    w_k(\hbar,s,s',\xi,\xi') 
    =
    \sum_{n=0}^\infty \hbar^n w_{k,n}(s,s',\xi,\xi').
\label{wk:h-expansion}
\end{equation}
 Expanding \eqref{recursion:w} as a series of $\hbar$, we obtain a
 recursion relation of $w_{k,n}$ and the initial condition
\begin{equation}
 \begin{split}
    w_{k+1,n} &= \frac{1}{k+1} \left(
    \xi\der_\xi \der_{s'} w_{k,n-1} +
    \sum_{\substack{k'+k''=k\\n'+n''=n}}
     \xi\der_\xi w_{k',n'} \der_{s'} w_{k'',n''}
    \right),
\\
    w_0 &= p(s,\xi) + q(s',\xi').
 \end{split}
\label{recursion:wkn}
\end{equation}
 Because of the assumption $\ordx p \leqq -1$ and $\ordx q \leqq -1$,
 $w_0$ also has the order $\leqq -1$ and consequently
\begin{equation}
     \ordx w_{0,n} \leqq -1.
\label{ordx(w0n)=-1}
\end{equation}
 (Here $\ordx$ means the order with respect to both $\xi$ and $\xi'$:
$
    \ordx \left( \sum a_{m,n}(\hbar,s) \xi^{m}\xi'^{n} \right)
\allowbreak
    \defeq
    \max \left\{ m+n \,\left|\, 
                 a_{m,n}(\hbar,s) \neq 0 
    \right.\right\}.
$)
 We show 
\begin{equation}
    \ordx w_{k,n} \leqq \min(-1,-k+n-1)
\label{ord(wkn)=-k+n-1}
\end{equation}
 by induction on $k$. 
\begin{itemize}
 \item First, when $k=0$, \eqref{ord(wkn)=-k+n-1} holds for any $n\geqq
       0$ because of \eqref{ordx(w0n)=-1}.

 \item Assume that \eqref{ord(wkn)=-k+n-1} holds for any pair $(k,n)$
       with $n\geqq 0$ and $k=0,\dots,k_0$. Then the right hand side of
       \eqref{recursion:wkn} with $k=k_0$ has the order (with respect to
       $\xi$ and $\xi'$) less than or equal to $-1$ and 
       $-k_0+(n-1)-1=(-k'+n'-1)+(-k''+n''-1)=-k_0+n-2$, since
       $\xi\der_\xi$ does not change the order.

 \item Hence \eqref{ord(wkn)=-k+n-1} is true for $k=k_0+1$.
\end{itemize}
 Thus the estimate \eqref{ord(wkn)=-k+n-1} has been proved for all $k$
 and $n$.

 This shows that $w(1) = \sum_{k=0}^\infty w_k = \sum_{k=0}^\infty
 \sum_{n=0}^\infty \hbar^n w_{k,n}$ makes sense as a formal series of
 $\hbar$, $\xi$ and $\xi'$. Moreover it is obvious that $w_k$ and $w(1)$
 are formally regular with respect to $\hbar$.

 As a next step, we expand $\psi(t)$ as $\psi(t) = \sum_{k=0}^\infty
 \psi_k t^k$ and rewrite \eqref{diff-eq:psi} into a recursion relation
 and the initial condition:
\begin{equation}
 \begin{split}
    \psi_{k+1} &= \frac{1}{k+1}
    \left(
     \hbar \xi\der_\xi \der_{s'} \psi_k +
     \sum_{\nu=0}^k (\xi\der_\xi \psi_\nu \der_{s'} w_{k-\nu}
                   + \xi\der_\xi w_{k-\nu} \der_{s'} \psi_\nu)
    \right),
\\
    \psi_0 &= a(s,\xi) b(s',\xi')
 \end{split}
\label{recursion:psi}
\end{equation}
 To prove the convergence as a formal power series, we expand $\psi_k$
 as 
\begin{equation}
    \psi_k(\hbar,s,s',\xi,\xi') 
    =
    \sum_{n=0}^\infty \hbar^n \psi_{k,n}(s,s',\xi,\xi'),
\label{psik:h-expansion}
\end{equation}
 and rewrite the recursion relation as follows.
\begin{equation}
 \begin{split}
    \psi_{k+1,n} &= \frac{1}{k+1}
    \left(
     \xi\der_\xi \der_{s'} \psi_{k,n-1} +
     \sum_{\substack{k'+k''=k\\n'+n''=n}}
       (\xi\der_\xi \psi_{k',n'} \cdot \der_{s'} w_{k'',n''}
     + \der_{s'} \psi_{k',n''} \cdot \xi\der_\xi w_{k'',n''})
    \right),
\\
    \psi_0 &= a(s,\xi) b(s',\xi')
 \end{split}
\label{recursion:psi(kn)}
\end{equation}
 Our assumption being $\ordx a(s,\xi)=M$ and $\ordx b(s',\xi')=N$, we
 have 
\begin{equation}
     \ordx \psi_{0,n} \leqq M+N.
\label{ordx(psi0n)=M+N}
\end{equation}
 As in the estimate of $w_{k,n}$, we prove 
\begin{equation}
    \ordx \psi_{k,n} \leqq \min(M+N, M+N-k+n)
\label{ord(psikn)=M+N-k+n}
\end{equation}
 by induction. 
\begin{itemize}
 \item First, \eqref{ord(psikn)=M+N-k+n} holds for $k=0$ and any
       $n\geqq0$ because of \eqref{ordx(psi0n)=M+N}.
 \item Assume that \eqref{ord(psikn)=M+N-k+n} holds for $k=0,\dots,k_0$
       and $n\geqq 0$. The right hand side of \eqref{recursion:psi(kn)}
       with $k=k_0$ has the order with respect to $\xi$ and $\xi'$ not
       more than $M+N-k_0+(n-1)=(M+N-k'+n')+(-k''+n''-1)=M+N-k_0+n-1$
       nor $M+N$ because of the induction hypothesis and
       \eqref{ord(wkn)=-k+n-1}.
 \item This proves \eqref{ord(psikn)=M+N-k+n} for $k=k_0+1$.
\end{itemize}
 Thus we have proved \eqref{ord(psikn)=M+N-k+n} for any $k$ and $n$,
 which shows that the inifinite sum $\psi(1) = \sum_{k=0}^\infty \psi_k
 = \sum_{k=0}^\infty \sum_{n=0}^\infty \hbar^n \psi_{k,n}$
 makes sense. The regularity of $\psi_k$ and $\psi(1)$ is also obvious.

 We have constructed $\pi(t) = \pi(t;\hbar,s,s',\xi,\xi') =
 \psi(t;\hbar,s,s',\xi,\xi') e^{w(t;\hbar,s,s',\xi,\xi')}$, which is
 meaningful also at $t=1$. Hence the product $a(\hbar,s,\xi) \circ
 b(\hbar,s,\xi) = \pi(1; \hbar,s,s,\xi,\xi)$ is expressed in the form
 $c(\hbar,s,\xi) e^{r(\hbar,s,\xi)/\hbar}$, where
 $c(\hbar,s,\xi)=\psi(1;\hbar,s,s,\xi,\xi)$, $r(\hbar,s,\xi)=
 w(1;\hbar,s,s,\xi,\xi)$.
\end{proof}

\begin{proof}[Proof of \propref{prop:exp(:X:)=:exp(S):}]
 We make use of differential equations satisfied by the operator
\begin{equation}
    E(t)=E(t;\hbar,s,e^{\hbar\ders}) :=
    \exp\left(\frac{t}{\hbar}X(\hbar,s,e^{\hbar\ders})\right),
\label{def:E(t)}
\end{equation}
 depending on a parameter\footnote{Of course this parameter $t$ does not
 have any relation with the time variables of the Toda lattice
 hierarchy. It is not the same $t$ in the proof of
 \lemref{lem:prod-exp-symbol}, either.} $t$. The total symbol of $E(t)$ is
 defined as
\begin{equation}
    E(t;\hbar,s,\xi) 
    = \sum_{k=0}^\infty \frac{t^k}{\hbar^k k!} X^{(k)}(\hbar,s,\xi),
    \qquad
    X^{(0)}=1, \qquad X^{(k+1)} = X \circ X^{(k)}.
\label{E:symbol}
\end{equation}
 Taking the logarithm (as a function, not as an operator) of this, we
 can define $S(t)=S(t;\hbar,s,\xi)$ by
\begin{equation}
    E(t;\hbar,s,\xi) =
    e^{\hbar^{-1} S(t;\hbar,s,\xi)}
\label{E=eS}
\end{equation}
 What we are to prove is that $S(t)$, constructed as a series, makes
 sense at $t=1$ and formally regular with respect to $\hbar$.

 Differentiating \eqref{E=eS}, we have
\begin{equation}
    X(\hbar,s,\xi) \circ E(t;\hbar,s,\xi)
    =
    \frac{\der S}{\der t} e^{S(t;\hbar,s,\xi)/\hbar}
\label{XE=dS/dtE}
\end{equation}
 By \lemref{lem:prod-exp-symbol} ($a \mapsto X$, $b \mapsto 1$, $p
 \mapsto 0$, $q \mapsto S$) and the technique in its proof, we can
 rewrite the left hand side as follows. (Hereafter we sometimes omit the
 argument $\hbar$ of functions for brevity.): 
\begin{equation}
    X(s,\xi) \circ E(t;s,\xi)
    =
    Y(t;s,s,\xi,\xi) e^{S(t;s,\xi)/\hbar}
\label{XE=Y*eq}
\end{equation}
 where $Y(t;s,s',\xi,\xi') = \sum_{k=0}^\infty Y_k$ and
 $Y_k(t;s,s',\xi,\xi')$ are defined by
\begin{equation}
 \begin{split}
    Y_{k+1}(t;s,s',\xi,\xi') &=
    \frac{1}{k+1} (\hbar \xi\der_\xi \der_{s'} Y_k(t;s,s',\xi,\xi')
    + \xi\der_\xi Y_k(t;s,s',\xi,\xi') \der_{s'} S(t;s',\xi')),
\\
    Y_0 (t;s,s',\xi,\xi') &= X(s,\xi).
 \end{split}
\label{recursion:Y:1}
\end{equation}
 $Y_k(t)$ corresponds to $\psi_k$ in the proof of
 \lemref{lem:prod-exp-symbol}, while $w_k$ there is $\delta_{k,0}
 S(t)$. (Recall that the role of $t$ is different. The parameter $t$ in
 the proof of \lemref{lem:prod-exp-symbol} is set to $1$ here.) On the
 other hand, substituting \eqref{XE=Y*eq} into the left hand side of
 \eqref{XE=dS/dtE}, we have
\begin{equation}
    \frac{\der S}{\der t}(t;s,\xi)
    =
    Y(t;s,s,\xi,\xi)
\label{dS/dt=Y}
\end{equation}

 We rewrite the system \eqref{recursion:Y:1} and \eqref{dS/dt=Y} in
 terms of expansion of $S(t;s,\xi)=S(t;\hbar,s,\xi)$ and
 $Y_k(t;s,s',\xi,\xi')=Y_k(t;\hbar,s,s',\xi,\xi')$ in powers of $t$ and
 $\hbar$: 
\begin{equation}
 \begin{split}
    S(t;\hbar,s,\xi) &= \sum_{l=0}^\infty S^{(l)}(\hbar,s,\xi) t^l
    = \sum_{l=0}^\infty \sum_{n=0}^\infty S^{(l)}_n (s,\xi) \hbar^n t^l,
\\
    Y_k(t;\hbar,s,s',\xi,\xi') 
    &= \sum_{l=0}^\infty Y^{(l)}_k(\hbar,s,s',\xi,\xi') t^l
    = \sum_{l=0}^\infty \sum_{n=0}^\infty 
    Y^{(l)}_{k,n} (s,s',\xi,\xi') \hbar^n t^l,
 \end{split}
\label{expansion:S,Y}
\end{equation}
 The coefficient of $\hbar^n t^l$ in the recursion relation
 \eqref{recursion:Y:1} is
\begin{multline}
    Y^{(l)}_{k+1,n}(s,s',\xi,\xi')
\\    =
    \frac{1}{k+1}
    \left(
    \xi\der_\xi \der_{s'} Y^{(l)}_{k,n-1}(s,s',\xi,\xi')
    +
    \sum_{\substack{l'+l''=l \\ n'+n''=n}}
    \xi\der_\xi Y^{(l')}_{k,n'}(s,s',\xi,\xi')
    \der_{s'} S^{(l'')}_{n''}(s',\xi')
    \right)
\label{recursion:Y(l,k,n)}
\end{multline}
 ($Y^{(l)}_{k,-1}=0$) while \eqref{dS/dt=Y} gives
\begin{equation}
    S^{(l+1)}_n(s,\xi)
    = \frac{1}{l+1} \sum_{k=0}^\infty Y^{(l)}_{k,n}(s,s,\xi,\xi)
\label{recursion:S(l+1,n)}
\end{equation}
 We first show that these recursion relations consistently determine
 $Y^{(l)}_{k,n}$ and $S^{(l)}_n$. Then we prove that the infinite sum
 in \eqref{recursion:S(l+1,n)} is finite.

 Fix $n\geqq 0$ and assume that $Y^{(l)}_{k,0},\dotsc,Y^{(l)}_{k,n-1}$
 and $S^{(l)}_0,\dotsc,S^{(l)}_{n-1}$ have been determined for all
 $(l,k)$. (When $n=0$, $Y^{(l)}_{k,-1}=0$ as mentioned above and
 $S^{(l)}_{-1}$ can be ignored as it does not appear in the induction.)

\begin{enumerate}
 \item Since $E(t=0)=1$ by the definition \eqref{def:E(t)}, we have
       $S^{(0)}=0$. Hence
\begin{equation}
    S^{(0)}_n=0.
\label{S0n=0}
\end{equation}

 \item From the initial condition in \eqref{recursion:Y:1} we have
\begin{equation}
    Y^{(l)}_{0,n}(s,s',\xi,\xi') = \delta_{l,0} X_n(s,\xi).
\label{Yl0n}
\end{equation}
       It follows from this equation and the assumption
       \eqref{X:alpha=0} that
\begin{equation}
    \ordx Y^{(0)}_{0,n} \leqq -1.
\label{ordY00n<=-1}
\end{equation}

\item When $l=0$, the second sum in the right hand side of the recursion
      relation \eqref{recursion:Y(l,k,n)} is absent because of
      \eqref{S0n=0}. Hence if $n\geqq k+1$, we have
\[
    Y^{(0)}_{k+1,n} = \frac{1}{k+1}\xi\der_\xi\der_{s'} Y^{(0)}_{k,n-1}
    = \dots 
    = \frac{1}{(k+1)!}(\xi\der_\xi\der_{s'})^{k+1} Y^{(0)}_{0,n-k-1} = 0 
\]
      since $Y^{(0)}_{0,n-k-1}$  does not depend on $s'$ thanks to
      \eqref{Yl0n}. If $n<k+1$, the above expression becomes zero
      by $Y^{(0)}_{k-n+1,-1}=0$. Hence together with \eqref{Yl0n}, we
      obtain 
\begin{equation}
    Y^{(0)}_{k,n}=\delta_{k,0} X_n.
\label{Y0kn}
\end{equation}

 \item By \eqref{recursion:S(l+1,n)} we can determine $S^{(1)}_n$:
\begin{equation}
    S^{(1)}_n = \sum_{k=0}^\infty Y^{(0)}_{k,n} = Y^{(0)}_{0,n} = X_n.
\label{S(1,n)=Xn}
\end{equation}
       In particular, 
\begin{equation}
    \ordx \der_{s'} S^{(1)}_n = \ordx \der_{s'} X_n \leqq -1.
\label{ordx(dS1n)<=-1}
\end{equation}

 \item Fix $l_0\geqq 1$ and assume that for all $l=0,\dotsc,l_0-1$ and
       for all $k=0,1,2,\dotsc$, we have determined $Y^{(l)}_{k,n}$ and
       that for all $l=0,\dots,l_0$ we have determined
       $S^{(l)}_{n}$. (The steps (3) and (4) are for $l_0=1$.)

       Since $S^{(0)}_{n''}=0$ by \eqref{S0n=0}, the index $l'$ in the
       right hand side of the recursion relation
       \eqref{recursion:Y(l,k,n)} (with $l=l_0$) runs essentially from
       $0$ to $l_0-1$. Hence this relation determines
       $Y^{(l_0)}_{k+1,n}$ from known quantities for all $k\geqq 0$.

       Because of the initial condition $Y_0(t;s,s',\xi,\xi')=X(s,\xi)$
       (cf.\ \eqref{recursion:Y:1}) $Y_0$ does not depend on $t$, which
       means that its Taylor coefficients $Y^{(l_0)}_{0,n}$ vanish for
       all $l_0\geqq 1$:
\begin{equation}
    Y^{(l_0)}_{0,n} = 0.
\end{equation}
       Thus we have determined all $Y^{(l_0)}_{k,n}$ ($k=0,1,2,\dotsc$).
 
 \item We shall prove below that $Y^{(l_0+1)}_{k,n}=0$ if $k>l_0+n+1$.
       Hence the sum in \eqref{recursion:S(l+1,n)} is finite and
       $S^{(l_0+1)}_n$ is determined. The induction proceeds by
       incrementing $l_0$ by one.
\end{enumerate}
 In this way induction proceeds and all $Y^{(l)}_{k,n}$ and $S^{(l)}_n$
 are determined.

\medskip
 Let us prove that $Y^{(l)}_{k,n}$'s determined above satisfy
\begin{gather}
    Y^{(l)}_{k,n}=0, \qquad\text{if\ }k>l+n,
\label{Ylkn=0}
\\
    \ordx Y^{(l)}_{k,n} \leqq -l-1, \qquad\text{if\ }0\leqq k \leqq l+n,
\label{ordYlkn<=-l-1}
\end{gather}
 (We define that $\ordx 0 = -\infty$.) In particular, the sum in
 \eqref{recursion:S(l+1,n)} is well-defined and
\begin{equation}
    \ordx S^{(l+1)}_n \leqq -l-1.
\label{ordS(l+1)<-l-1}
\end{equation} 

 If $n=-1$, both \eqref{Ylkn=0} and \eqref{ordYlkn<=-l-1} are
 obvious. Fix $n_0\geqq 0$ and assume that we have proved \eqref{Ylkn=0}
 and \eqref{ordYlkn<=-l-1} for $n<n_0$ and all $(l,k)$.

 When $n=n_0$ and $l=0$, \eqref{Ylkn=0} and \eqref{ordYlkn<=-l-1} are
 true for all $k$ because of \eqref{Y0kn} and \eqref{ordY00n<=-1}. 

 Fix $l_0\geqq 0$ and assume that we have proved \eqref{Ylkn=0} and
 \eqref{ordYlkn<=-l-1} for $n=n_0$, $l\leqq l_0$ and all $k$. As a
 result \eqref{ordS(l+1)<-l-1} is true for $l\leqq l_0$.

 For $(n,l,k)=(n_0,l_0+1,0)$ \eqref{Ylkn=0} is void and
 \eqref{ordYlkn<=-l-1} is true because of \eqref{Yl0n} and $\ordx
 X_{n_0}(s,\xi)\leqq -1$.

 Put $n=n_0$ and $l=l_0+1$ in \eqref{recursion:Y(l,k,n)} and assume that
 $k+1>(l_0+1)+n_0$. Then $k>(l_0+1)+(n_0-1)$, which guarantees that
 $Y^{(l)}_{k,n-1}=Y^{(l_0+1)}_{k,n_0-1}=0$ by the induction hypothesis
 on $n$. As we mentioned in the step (5) above, $l'$ in the right hand
 side of \eqref{recursion:Y(l,k,n)} runs from $0$ to $l-1=l_0$. Hence,
 as we are assuming that $k>l_0+n$, we have $k>l'+n'$, which leads to
 $Y^{(l')}_{k,n'}=0$ by the induction hypothesis on $l$ and
 $n$. Therefore all terms in the right hand side of
 \eqref{recursion:Y(l,k,n)} vanish and we have
 $Y^{(l_0+1)}_{k+1,n_0}=0$. The induction on $k$ for \eqref{Ylkn=0} is
 completed, namely it is proved for $n=n_0$, $l=l_0+1$ and $k\geqq 1$.

 The estimate \eqref{ordYlkn<=-l-1} is easy to check for $n=n_0$,
 $l=l_0+1$ and $k\geqq 1$ by the recursion relation
 \eqref{recursion:Y(l,k,n)}. (Recall once again that $\xi\der_\xi$ does
 not change the order.)

 The step $l=l_0+1$ being proved, the induction proceeds with respect to
 $l$ and consequently with respect to $n$.

\bigskip
 In summary we have constructed $Y(t;s,s',\xi,\xi')$ and $S(t;s,\xi)$
 satisfying \eqref{XE=Y*eq} and \eqref{dS/dt=Y}. Moreover, thanks to
 \eqref{Ylkn=0} the sum over $k$ for each fixed $(n,l)$ in 
\begin{equation}
    Y(1;s,s',\xi,\xi') =
    \sum_{n=0}^\infty \sum_{l=0}^\infty \sum_{k=0}^\infty 
    Y^{(l)}_{k,n}(s,s',\xi,\xi') \hbar^n,
\end{equation}
 is finite and the sum over $l$ is meaningful as a power series of $\xi$
 because of \eqref{ordYlkn<=-l-1}. The series 
\begin{equation}
    S(1;s,\xi) =
    \sum_{n=0}^\infty \sum_{l=0}^\infty S^{(l)}_n(s,\xi)\hbar^n,
\end{equation}
 is also meaningful as a power series of $\xi$ thanks to
 \eqref{ordS(l+1)<-l-1}.

 Thus \propref{prop:exp(:X:)=:exp(S):} is proved.
\end{proof}

\begin{proof}[Proof of \propref{prop::exp(S):=exp(:X:)}]
 We reverse the order of the previous proof. Namely, given
 $S(\hbar,s,\xi)$, we shall construct $X(\hbar,s,\xi)$ such that the
 corresponding $S(1;\hbar,s,\xi)$ in the above proof coincides with it.

 Suppose we have such $X(\hbar,s,\xi)$. Then the above procedure
 determine $Y^{(l)}_{k,n}$ and $S^{(l)}_n$. We expand them as follows:
\begin{align*}
    S(\hbar,s,\xi) &= \sum_{n=0}^\infty S_n(s,\xi) \hbar^n 
    = \sum_{n=0}^\infty \sum_{j=1}^\infty S_{n,j}(s,\xi) \hbar^n,
\\
    X(\hbar,s,\xi) &= \sum_{n=0}^\infty X_n(s,\xi) \hbar^n 
    = \sum_{n=0}^\infty \sum_{j=1}^\infty X_{n,j}(s,\xi) \hbar^n,
\\
    S(t;\hbar,s,\xi)& 
    = \sum_{l=0}^\infty \sum_{n=0}^\infty S^{(l)}_n(s,\xi) \hbar^n t^l
    = \sum_{l=0}^\infty \sum_{n=0}^\infty \sum_{j=1}^\infty 
    S^{(l)}_{n,j}(s,\xi) \hbar^n t^l,
\\
    Y_k(t;\hbar,s,s',\xi,\xi')
    &= \sum_{l=0}^\infty  \sum_{n=0}^\infty 
    Y^{(l)}_{k,n}(s,s',\xi,\xi') \hbar^n t^l
\\
    &= \sum_{l=0}^\infty \sum_{n=0}^\infty \sum_{j=1}^\infty 
    Y^{(l)}_{k,n,j}(s,s',\xi,\xi') \hbar^n t^l
\end{align*}
 Here terms with index $j$ are homogeneous terms of degree $-j$ with
 respect to $\xi$ and $\xi'$.

 At the end of this proof we shall determine $X_n$ by \eqref{Yl0n},
\begin{equation}
    X_n(s,\xi) = Y^{(0)}_{0,n}(s,s',\xi,\xi').
\label{Xn=Y(0,0,n)}
\end{equation}
 (In particular, $Y^{(0)}_{0,n}(s,s',\xi,\xi')$ should not depend on
 $s'$ and $\xi'$.) For this purpose, $Y^{(0)}_{0,n}$ should be
 determined by
\begin{equation}
    Y^{(0)}_{0,n}(s,s',\xi,\xi')
    =
    S_n(s,\xi) 
    - \sum_{\substack{(l,k)\neq(0,0)\\l,k\geq 0}}
      \frac{1}{l+1} Y^{(l)}_{k,n}(s,s,\xi,\xi)
\label{Y00n=Sn-Ylkn}
\end{equation}
 because of \eqref{recursion:S(l+1,n)} and
 $S_n(s,\xi)=S_n(t=1;s,\xi)=\sum_{l=0}^\infty S^{(l)}_n(s,\xi)$.

 Since $\ordx Y^{(l)}_{k,n}$ should be not more than $-l-1$ (cf.\
 \eqref{ordYlkn<=-l-1}), we expect $Y^{(l)}_{k,n,1}=0$ for $l>0$. For
 $l=0$ and $k>0$ $Y^{(0)}_{k,n,1}= 0$ follows from \eqref{Y0kn}. Hence
 picking up homogeneous terms of degree $-1$ with respect to $\xi$ from
 \eqref{Y00n=Sn-Ylkn}, the following equation should hold:
\begin{equation}
    Y^{(l)}_{k,n,1}=\delta_{l,0} \delta_{k,0} S_{n,1}
\label{Y00n1=Sn1}
\end{equation}
 All $Y^{(l)}_{k,n,1}$ are determined by this condition. Note also that
\begin{equation}
    Y^{(l)}_{0,n,j}=0 \ \text{for\ }l\neq 0
\label{Yl0nj=0}
\end{equation}
 because $Y_0$ should not depend on $s$ because of \eqref{Yl0n}.

 Having determined initial conditions in this way, we shall determine
 $Y^{(l)}_{k,n,j}$ inductively. To this end we rewrite the recursion
 relation \eqref{recursion:Y(l,k,n)} by \eqref{recursion:S(l+1,n)} and
 pick up homogeneous terms of degree $-j$:
\begin{multline}
    Y^{(l)}_{k+1,n,j}(s,s',\xi,\xi')
    =
    \frac{1}{k+1}
    \Biggl(
    \xi\der_\xi \der_{s'} Y^{(l)}_{k,n-1,j}(s,s',\xi,\xi')
\\
    +
    \sum_{\substack{l'+l''=l,\ l''\geqq 1,\ 
                    j'+j''=j,\ j',j''\geqq 1\\ 
                    n'+n''=n,\ 0\leqq k''}}
    \frac{1}{l''}
    \xi\der_\xi Y^{(l')}_{k,n',j'}(s,s',\xi,\xi')
    \der_{s'} Y^{(l''-1)}_{k'',n'',j''}(s',s',\xi',\xi')
    \Biggr)
\label{recursion:Y(l,k,n)<-Y}
\end{multline}
 (As before, terms like $Y^{(l)}_{k,-1,j-1}$ appearing the above
 equation for $n=0$ can be ignored.)

 Fix $n_0\geqq 0$ and assume that $Y^{(l)}_{k,0,j}, \dotsc,
 Y^{(l)}_{k,n_0-1,j}$ are determined for all $(l,k,j)$.
\begin{enumerate}
 \item First we determine $Y^{(l)}_{k,n_0,1}$ for all $(l,k)$ by
       \eqref{Y00n1=Sn1}. (This is consistent with the recursion
       relation \eqref{recursion:Y(l,k,n)<-Y}.)

 \item Fix $j_0\geqq 2$ and assume that $Y^{(l)}_{k,n_0,j}$ are
       determined for $j=1,\dots,j_0-1$ and all $(l,k)$. (The above
       step is for $j_0=2$.)

       Since all the quantities in the right hand side of the recursion
       relation \eqref{recursion:Y(l,k,n)<-Y} with $j=j_0$ are known by
       the induction hypothesis, we can determine
       $Y^{(l)}_{k,n_0,j_0}$ for $l=0,1,2,\dots$ and $k=1,2,\dots$.

 \item Together with \eqref{Yl0nj=0}, $Y^{(l)}_{0,n_0,j_0}=0$
       for $l=1,2,\dots$, we have determined all $Y^{(l)}_{k,n_0,j_0}$
       except for the case $(l,k)=(0,0)$.

 \item It follows from \eqref{recursion:Y(l,k,n)<-Y} and
       \eqref{Y00n1=Sn1} by induction that all
       $Y^{(l)}_{k,n_0,j}$ determined in (1), (2) and (3) satisfy the
       following properties:
\begin{itemize}
 \item if $k>l+n$, then 
\begin{equation}
    Y^{(l)}_{k,n,j} = 0,
\label{Ylknj=0:k>l+n}
\end{equation}
       which corresponds to \eqref{Ylkn=0} in the proof of
       \propref{prop:exp(:X:)=:exp(S):}; 
 \item if $0\leqq k \leqq l+n$ and $j \leqq l$, then 
\begin{equation}
    Y^{(l)}_{k,n,j} = 0,
\label{Ylknj=0:j<=l}
\end{equation}
       which corresponds to \eqref{ordYlkn<=-l-1} in the proof of
       \propref{prop:exp(:X:)=:exp(S):}.
\end{itemize}

 \item We determine $Y^{(0)}_{0,n_0,j_0}$ by
\begin{equation}
    Y^{(0)}_{0,n_0,j_0}
    =
    S_{n_0,j_0}
    - \sum_{\substack{(l,k)\neq(0,0)\\l,k\geq 0}}
      \frac{1}{l+1} Y^{(l)}_{k,n_0,j_0}(s,s,\xi,\xi)
\label{Y00nj=Snj-Ylknj}
\end{equation}
       which is the homogeneous part of degree $-j_0$ in
       \eqref{Y00n=Sn-Ylkn}. The sum in the right hand side is finite
       because of \eqref{Ylknj=0:k>l+n} and \eqref{Ylknj=0:j<=l}.

 \item The induction with respect to $j$ proceeds by incrementing $j_0$.
\end{enumerate}
 Thus all $Y^{(l)}_{k,n_0,j}$ are determined and $X_{n_0}$ is determined
 by \eqref{Xn=Y(0,0,n)}, namely, $X_{n_0}(x,\xi)=\sum_{j=1}^\infty
 Y^{(0)}_{0,n_0,j}$ (cf.\ \eqref{Xn=Y(0,0,n)}), which completes the
 proof of \propref{prop::exp(S):=exp(:X:)}.
\end{proof}

\section{Asymptotics of the tau function}
\label{sec:tau-function}

In this section we derive an $\hbar$-expansion 
\begin{equation}
    \log\tau(\hbar,t,\tbar,s) = \sum_{n=0}^\infty \hbar^{n-2} F_n(t,\tbar,s)
\label{tau}
\end{equation}
of the tau function (cf.\ \eqref{tau/tau}) from the $\hbar$-expansion
of the $S$-functions $S(\hbar,t,\tbar,s;z)$ \eqref{S} and
$\Sbar(\hbar,t,\tbar,s;\zbar)$ \eqref{Sbar}.

Let us recall the fundamental relations \eqref{tau/tau} between the wave
functions and the tau function again:
\begin{equation}
 \begin{aligned}
    \Psi(\hbar,t,\tbar;z) &=
    \frac{ \tau(\hbar,t-\hbar [z^{-1}] ,\tbar,s) }
         { \tau(\hbar,t,\tbar,s) }
    z^{s/\hbar} e^{\zeta(t,z)/\hbar},
\\
    \Psibar(\hbar,t,\tbar;\zbar) &=
    \frac{ \tau(\hbar,t,\tbar-\hbar [\zbar] ,s+\hbar) }
         { \tau(\hbar,t,\tbar,s) }
    \zbar^{s/\hbar} 
    e^{\zeta(\tbar,\zbar^{-1})/\hbar}
 \end{aligned}
\end{equation}
where $[z^{-1}] =(1/z,1/2z^2, 1/3z^3,\dots)$,
$\zeta(t,z)=\sum_{n=1}^\infty t_n z^n$ etc. (Here we again omit
inessential constants, $\alpha(\hbar)$ and $\alphabar(\hbar)$.) This
implies that
\begin{align}
    \hbar^{-1} \hat S(\hbar,t,\tbar,s;z)
    &=
    \left(
    e^{-\hbar D(z)} - 1
    \right) \log\tau(\hbar,t,\tbar,s),
\label{tau->Shat}
\\
    \hbar^{-1} \hat \Sbar(\hbar,t,\tbar,s;\zbar)
    &=
    \left(
    e^{-\hbar \Dbar(\zbar)} e^{\hbar\ders} - 1
    \right) \log\tau(\hbar,t,\tbar,s),
\label{tau->Shatbar}
\end{align}
where
\begin{equation}
 \begin{aligned}
    \hat S(\hbar,t,\tbar,s;z) &= S(\hbar,t,\tbar,s;z)- \zeta(t,z),
\\    
    \hat \Sbar(\hbar,t,\tbar,s;\zbar)
    &=
    \Sbar(\hbar,t,\tbar,s;\zbar) - \zeta(\tbar,\zbar^{-1}),
 \end{aligned}
\label{def:Shat,Shatbar}
\end{equation}
and 
\begin{equation}
    D(z) =\sum_{j=1}^\infty \frac{z^{-j}}{j} \frac{\der}{\der t_j},
    \qquad
    \Dbar(\zbar) =\sum_{j=1}^\infty 
              \frac{\zbar^j}{j} \frac{\der}{\der \tbar_j}.
\end{equation}
Differentiating \eqref{tau->Shat} with respect to $z$, we have
\begin{equation}
 \begin{split}
    \hbar^{-1} \frac{\der}{\der z} \hat S(\hbar,t,\tbar,s;z)
    =&
    - \hbar D'(z) e^{-\hbar D(z)} \log\tau(\hbar,t,\tbar,s)
\\
    =&
    - \hbar D'(z) (\hbar^{-1} \hat S(\hbar,t,\tbar,s;z) 
    + \log\tau(\hbar,t,\tbar,s)),
 \end{split}
\label{(tau->Shat)'}
\end{equation}
where $D'(z):=\frac{\der}{\der z} D(z) = - \sum_{j=1}^\infty z^{-j-1}
\frac{\der}{\der t_j}$. Hence
\begin{equation}
    -\hbar D'(z) \log\tau(\hbar,t,\tbar,s)
    =
    \hbar^{-1} \left(\frac{\der}{\der z} + \hbar D'(z)\right)
    \hat S(\hbar,t,\tbar,s;z)
\label{Shat->tau}
\end{equation}
Multiplying $z^n$ to this equation and taking the residue, we obtain a
system of differential equations
\begin{equation}
    \hbar\frac{\der}{\der t_n} \log \tau(\hbar,t,\tbar,s)
    =
    \hbar^{-1} \Res_{z=\infty} z^n 
    \left(\frac{\der}{\der z} + \hbar D'(z)\right) 
     \hat S(\hbar,t,\tbar,s;z)\, dz
\label{dtau/dtn}
\end{equation}
for $n=1,2,\dotsc$. In the same way we have
\begin{equation}
    -\hbar \Dbar'(\zbar) \log\tau(\hbar,t,\tbar,s)
    =
    \hbar^{-1} \left(\frac{\der}{\der \zbar} + \hbar \Dbar'(\zbar)\right)
    \hat \Sbar(\hbar,t,\tbar,s;\zbar)
\label{Shatbar->tau}
\end{equation}
and
\begin{equation}
    \hbar\frac{\der}{\der \tbar_n} \log \tau(\hbar,t,\tbar,s)
    =
    -\hbar^{-1} \Res_{\zbar=0} \zbar^{-n} 
    \left(\frac{\der}{\der \zbar} + \hbar \Dbar'(\zbar)\right)
     \hat \Sbar(\hbar,t,\tbar,s;\zbar)\, d\zbar,
\label{dtau/dtbarn}
\end{equation}
for $n=1,2,\dotsc$, from \eqref{tau->Shatbar}. By putting $\zbar=0$ in
\eqref{tau->Shatbar} we have a difference equation for the tau function:
\begin{equation*}
    (e^{\hbar\ders} - 1) \log\tau(\hbar,t,\tbar,s)
    =
    \hbar^{-1} \hat\Sbar(\hbar,t,\tbar,s;0).
\end{equation*}
In fact, it follows from \eqref{Wbar=exp(phi)exp(Xbar)},
\eqref{def:wave-func} and \eqref{Sbar} that
$\hat\Sbar(\hbar,t,\tbar,s;0)=\phi(\hbar,t,\tbar)$. Hence we have
\begin{equation}
    \hbar (e^{\hbar\ders} - 1) \log\tau(\hbar,t,\tbar,s)
    =
    \phi(\hbar,t,\tbar,s).
\label{tau(s+hbar)}
\end{equation}
As is shown in \cite{uen-tak:84}, the system \eqref{dtau/dtn},
\eqref{dtau/dtbarn} and \eqref{tau(s+hbar)} is compatible and determines
the tau function up to multiplicative constant.

By substituting the $\hbar$-expansions
\begin{align}
    \log\tau(\hbar,t,\tbar,s) 
    &= \sum_{n\in\Integer} \hbar^{n-2} F_n(t,\tbar,s),
\label{logtau}
\\
    \hat S(\hbar,t,\tbar,s;z)
    &= \sum_{n=0}^\infty \hbar^n S_n(t,\tbar,s;z),
\label{logS}
\\
    \hat \Sbar(\hbar,t,\tbar,s;\zbar)
    &= \sum_{n=0}^\infty \hbar^n \Sbar_n(t,\tbar,s;\zbar)
\label{logSbar}
\end{align}
and \eqref{phi:h-expansion} into \eqref{Shat->tau}, \eqref{Shatbar->tau}
and \eqref{tau(s+hbar)}, we have
\begin{align}
    \sum_{j=1}^\infty \sum_{n\in\Integer} z^{-j-1} \hbar^{n-1}
    \frac{\der F_n}{\der t_j}
    &=
    \sum_{n=0}^\infty \left(
    \hbar^{n-1} \frac{\der S_n}{\der z}
    - 
    \sum_{j=1}^\infty z^{-j-1} \hbar^n \frac{\der S_n}{\der t_j}
    \right).
\label{S->F}
\\
    -\sum_{j=1}^\infty \sum_{n\in\Integer} \zbar^{j-1} \hbar^{n-1}
    \frac{\der F_n}{\der \tbar_j}
    &=
    \sum_{n=0}^\infty \left(
    \hbar^{n-1} \frac{\der \Sbar_n}{\der \zbar}
    +
    \sum_{j=1}^\infty \zbar^{j-1} \hbar^n \frac{\der \Sbar_n}{\der \tbar_j}
    \right).
\label{Sbar->F}
\\
    \sum_{n\in\Integer} \hbar^{n-1} \left(
     \sum_{m=1}^n \frac{1}{m!} \frac{\der^m F_{n-m}}{\der s^m}
    \right)
    &=
    \sum_{n=0}^\infty \hbar^n \phi_n.
\label{phi->F}
\end{align}
It is obvious from these equations that $F_n= \text{const.}$ for
$n<0$. Therefore we can conclude that $\log\tau$ has the expansion
\eqref{tau}. 

Let us expand $S_n(t;z)$ and $\Sbar_n(t;\zbar)$ into a power series of
$z^{-1}$ and $\zbar$: 
\begin{equation}
    S_n(t;z) = - \sum_{k=1}^\infty \frac{z^{-k}}{k} v_{n,k},
    \qquad
    \Sbar_n(t;\zbar) 
    = \phi_n+\sum_{k=1}^\infty \frac{\zbar^k}{k} \vbar_{n,k}.
\label{Sn,Sbar:expand} 
\end{equation}
(The notation is chosen so that it is consistent with our previous work,
e.g., \cite{tak-tak:95}.) Comparing the coefficients of $z^{-j-1}
\hbar^{n-1}$ in \eqref{S->F} and the coefficients of $\zbar^{j-1}
\hbar^{n-1}$ in \eqref{Sbar->F}, we have the equations
\begin{align}
    \frac{\der F_n}{\der t_j}
    &=
    v_{n,j} 
    + \sum_{\substack{k+l=j\\ k\geq 1, l\geq 1}}
      \frac{1}{l}\, \frac{\der v_{n-1,l}}{\der t_k}
    \qquad (v_{-1,j}=0), 
\label{dFn/dtj}
\\
    -\frac{\der F_n}{\der \tbar_j}
    &=
    \vbar_{n,j} + \frac{\der\phi_n}{\der \tbar_j}
    + \sum_{\substack{k+l=j\\ k\geq 1, l\geq 1}}
      \frac{1}{l}\, \frac{\der \vbar_{n-1,l}}{\der \tbar_k}
    \qquad (\vbar_{-1,j}=0),
\label{dFn/dtbarj}
\end{align}
for $n=0,1,2,\dotsc$.

{}From the equation \eqref{phi->F} it is easy to see that $\der F_n/\der
s$ is determined recursively. Let us rewrite it in more explicit
way. First arrange the coefficients of $\hbar^{n-1}$ in \eqref{phi->F}
in the vector form:
\begin{equation}
    \begin{pmatrix}
     \ders &  &  &  &
\\ \\
     \frac{1}{2!}\ders^2 & \ders & & &
\\ \\
     \frac{1}{3!}\ders^3 & \frac{1}{2!}\ders^2 & \ders & &
\\
     & \ddots & \ddots & \ddots & 
    \end{pmatrix}
    \begin{pmatrix} F_0 \\ F_1 \\ F_2 \\ \vdots \end{pmatrix}
    =
    \begin{pmatrix} \phi_0 \\ \phi_1 \\ \phi_2 \\ \vdots \end{pmatrix}.
\label{phi->F:matrix}
\end{equation}
The matrix in the left hand side is 
\begin{equation*}
    \sum_{n=0}^\infty \frac{\ders^{n+1}}{(n+1)!} \Lambda^{-n}
    =
    \left.\frac{e^T-1}{T}\right|_{T=\ders \Lambda^{-1}} \ders
\end{equation*}
where $\Lambda^{-n}$ is the shift matrix
$(\delta_{i-n,j})_{i,j=1}^\infty$. Hence, applying the matrix
\begin{equation*}
    \left.\frac{T}{e^T-1}\right|_{T=\ders \Lambda^{-1}}
\end{equation*}
to \eqref{phi->F:matrix}, we have 
\begin{equation}
    \frac{\der}{\der s}
    \begin{pmatrix} F_0 \\ F_1 \\ F_2 \\ \vdots \end{pmatrix}
    =
    \left.\frac{T}{e^T-1}\right|_{T=\ders \Lambda^{-1}}
    \begin{pmatrix} \phi_0 \\ \phi_1 \\ \phi_2 \\ \vdots \end{pmatrix},
\label{dFn/ds:matrix}
\end{equation}
or equivalently, 
\begin{equation}
    \frac{\der F_n}{\der s}
    =
    \phi_n - \frac{\phi_{n-1}}{2}
    +
    \sum_{p=1}^{[n/2]} K_{2p} \phi_{n-2p},
\label{dFn/ds}
\end{equation}
where $K_{2p}$ is determined by \eqref{def:K2p}. The system of first
order differential equations \eqref{dFn/dtj}, \eqref{dFn/dtbarj} and
\eqref{dFn/ds} may be understood as defining equations of
$F_n(t,\tbar,s)$. This system is integrable and determines $F_n$ up to
integration constants, because the system \eqref{dtau/dtn},
\eqref{dtau/dtbarn} and \eqref{tau(s+hbar)} are compatible.

\begin{rem}
\label{rem:genus-exp}
Tau functions in string theory and random matrices 
are known to have a {\em genus expansion} of the form
\begin{equation}
    \log \tau = \sum_{g=0} \hbar^{2g-2} \calF_g,
\label{genus-exp}
\end{equation}
where $\calF_g$ is the contribution from Riemann surfaces of genus $g$.
In contrast, general tau functions of the $\hbar$-dependent Toda hierarchy
is not of this form, namely, odd powers of $\hbar$ can appear in the
$\hbar$-expansion of $\log\tau$. To exclude odd powers therein, we need
to impose conditions
\begin{equation*}
 \begin{split}
    0 &=
    v_{2m+1,j} 
    + \sum_{\substack{k+l=j\\ k\geq 1, l\geq 1}}
      \frac{1}{l}\, \frac{\der v_{2m,l}}{\der t_k}
    =
    \vbar_{2m+1,j} + \frac{\der\phi_{2m+1}}{\der \tbar_j}
    + \sum_{\substack{k+l=j\\ k\geq 1, l\geq 1}}
      \frac{1}{l}\, \frac{\der \vbar_{2m,l}}{\der \tbar_k}
\\
    &=
    \phi_{2m+1} - \frac{\phi_{2m}}{2}
    +
    \sum_{p=1}^{m} K_{2p} \phi_{2m+1-2p},
 \end{split}
\end{equation*}
 on $v_{n,j}$, $\vbar_{n,j}$ and $\phi_n$ or
\begin{equation}
 \begin{split}
    0 &=
    \frac{\der S_{2m+1}}{\der z}
    - 
    \sum_{j=1}^\infty z^{-j-1} \frac{\der S_{2m}}{\der t_j}
    =
    \frac{\der \Sbar_{2m+1}}{\der \zbar}
    +
    \sum_{j=1}^\infty \zbar^{j-1} \frac{\der \Sbar_{2m}}{\der \tbar_j}
\\
    &=
    \phi_{2m+1} - \frac{\phi_{2m}}{2}
    +
    \sum_{p=1}^{m} K_{2p} \phi_{2m+1-2p},
 \end{split}
\end{equation}
 on $S_n$, $\Sbar_n$ and $\phi_n$.
%
\end{rem}

\appendix
\commentout{
\section{Proof of formulae \eqref{W=exp(Xi)exp(Xi-1)} and
 \eqref{tildeXi->Xi}}
\label{app:proof-campbell-hausdorff}

In this appendix we prove the factorisation of $W$
\eqref{W=exp(Xi)exp(Xi-1)} and an auxiliary formula
\eqref{tildeXi->Xi}. 

The main tool in this appendix is the Campbell-Hausdorff theorem: 
\begin{equation}
    \exp(X) \exp(Y)
    =
    \exp\left(
     \sum_{n=1}^\infty c_n(X, Y)
    \right),
\label{CH}
\end{equation}
where $c_n(X,Y)$ is determined recursively: 
\begin{equation}
 \begin{split}
    &c_1(X,Y) = X+Y,
\\
    &c_{n+1}(X,Y)
    =
    \frac{1}{n+1} \Biggl( 
    \frac{1}{2} [X-Y, c_n] +
\\
    &+
    \sum_{p\geq 1, 2p \leq n}
     K_{2p} \sum_{\substack{(k_1,\dots,k_{2p})\\ k_1+\cdots+k_{2p}=n}}
     [c_{k_1},[\cdots, [ c_{k_{2p}}, X+Y]\cdots]]
    \Biggr).
 \end{split}
\label{def:cn}
\end{equation}
The coefficients $K_{2p}$ are defined by \eqref{def:K2p}. See, for
example, \cite{bourbaki}.

First we prove
\begin{multline}
    \exp( \hbar^{-1} X(t,\tbar,s,\hbar\der) )
\\
    =
    \exp\left(
     \hbar^{i-1} \tilde X'_i + 
     (\text{terms of $\hbar$-order $<-i+1$})
    \right)    
    \exp\left( \hbar^{-1} X^{(i-1)} \right),
\label{exp(X)=exp(Xi)exp(Xi-1)}
\end{multline}
where the principal symbol of $\tilde X'_i$ is
\begin{equation}
    \symh(\tilde X'_i) := 
    \sum_{n=1}^\infty \frac{(\ad_{\{,\}} \symh(X_0) )^{n-1}}{n!}
    \symh(X_i),
\label{def:tildeX'i}
\end{equation}
as is defined in \eqref{tildeXi->Xi}. For simplicity, let us denote
\begin{equation}
    A := \frac{1}{\hbar} X^{(i-1)}
       = \frac{1}{\hbar} \sum_{j=0}^{i-1} \hbar^j X_j, \qquad
    B := \frac{1}{\hbar} \sum_{j=i}^\infty \hbar^j X_j.
\label{def:A,B}
\end{equation}
Note that $A+B=X/\hbar$ and $\ordh A\leqq 1$, $\ordh B\leqq -i+1$. We
prove the following by induction:
\begin{equation}
    C_n := c_n(A+B,-A) =
    \frac{(\ad A)^{n-1}}{n!} (B)
    +
    (\text{terms of $\hbar$-order $<-i+1$}).
\label{cn(A+B,-A)}
\end{equation}
This is obvious for $n=1$ since $C_1=(A+B)+(-A)=B$. Assume that
\eqref{cn(A+B,-A)} is true for $n=1,\dots,N$. This means, in particular,
$\ordh C_n \leqq \ordh B \leqq 0$ ($1\leqq n \leqq N$), which implies
that for any operator $Z$ $\ordh [C_n,Z]$ is less than $\ordh Z$ by more
than one. Hence the term of the highest $\hbar$-order in the recursive
definition \eqref{def:cn} with $X=A+B$, $Y=-A$ is the first term. More
precisely, it is decomposed as
\begin{equation*}
    \frac{1}{N+1} \cdot \frac{1}{2} [(A+B)-(-A), C_N]
    =
    \frac{1}{N+1}[A,C_N] + \frac{1}{2(N+1)}[B,C_N],
\end{equation*}
and the first term in the right hand side has the highest
$\hbar$-order. By the induction hypothesis and $\ordh A \leqq 1$, we
have 
\begin{equation}
 \begin{split}
    \frac{1}{N+1}[A, C_N] &=
    \frac{1}{N+1}\left[ A,
    \frac{(\ad A)^{N-1}}{N!} (B)
    +
    (\text{terms of $\hbar$ order $<-i+1$})
    \right]
\\
    &=
    \frac{(\ad A)^N}{(N+1)!} (B)
    +
    (\text{terms of $\hbar$-order $<-i+1$}).
 \end{split}
\end{equation}
This proves \eqref{cn(A+B,-A)} for all $n$. Taking its symbol of order
$-i+1$, we have
\begin{equation}
    \symh( c_n(A+B,-A)) =
    \frac{(\ad_{\{,\}} \symh(A))^{n-1}}{n!} \symh(B),
\label{sigma(cn(A+B,-A))}
\end{equation}
which gives the terms of \eqref{def:tildeX'i}. Substituting this into
the Campbell-Hausdorff formula \eqref{CH}, we have
\eqref{exp(X)=exp(Xi)exp(Xi-1)}. 

By factorisation \eqref{exp(X)=exp(Xi)exp(Xi-1)}, we can factorise
$W=\exp(X/\hbar) (\hbar\der)^\alpha$ as follows
($\alpha^{(i-1)}:=\sum_{j=0}^{i-1}\hbar^j \alpha_j$):
\begin{equation}
 \begin{split}
   &\exp( \hbar^{-1} X(t,\tbar,s,\hbar\der) ) (\hbar\der)^\alpha
\\
    =&
    \exp\left(
     \hbar^{i-1} \tilde X'_i + 
     (\text{terms of $\hbar$-order $<-i+1$})
    \right)    
    \exp\left( \hbar^{-1} X^{(i-1)} \right) \times
\\
    &\times
    \exp\left(
     \hbar^{i-1} \alpha_i \log(\hbar\der) + 
     (\text{terms of $\hbar$-order $<-i+1$})
    \right) \times
\\
    &\times
    \exp\left( \hbar^{-1} \alpha^{(i-1)} \log(\hbar\der)\right)
\\
    =&
    \exp\left(
     \hbar^{i-1} \tilde X'_i + 
     (\text{terms of $\hbar$-order $<-i+1$})
    \right)    
    \times
\\
    &\times
    \exp\left(
    e^{\ad (\hbar^{-1} X^{(i-1)})}
     \bigl(
       \hbar^{i-1} \alpha_i \log(\hbar\der) + 
      (\text{terms of $\hbar$-order $<-i+1$})
     \bigr)
    \right) \times
\\
    &\times
    \exp\left( \hbar^{-1} X^{(i-1)} \right)
    \exp\left( \hbar^{-1} \alpha^{(i-1)} \log(\hbar\der)\right).
 \end{split}
\label{W=exp(Xi)exp(Xi-1):temp}
\end{equation}
Since the symbol of order $-i+1$ of 
$
    e^{\ad (\hbar^{-1} X^{(i-1)})}
     \bigl(
       \hbar^{i-1} \alpha_i \log(\hbar\der)
     \bigr)
$
is $e^{\ad_{\{,\}} \symh(X_0)}(\alpha_i \log\xi)$,
\eqref{W=exp(Xi)exp(Xi-1):temp} is rewritten as
\eqref{W=exp(Xi)exp(Xi-1)} by using the Campbell-Hausdorff formula
\eqref{CH} once again.

\bigskip
In order to recover $X_i$ from $\tilde X_i$ (or $\tilde X'_i$), we have
only to invert the definition \eqref{def:tildeX'i}. In the definition
\eqref{def:tildeX'i} of the map $X_i\mapsto \tilde X_i'$ we substitute
$\ad_{\{,\}}(\symh(X_0))$ in the equation
\begin{equation*}
    \frac{e^t-1}{t} = \sum_{n=1}^\infty \frac{t^{n-1}}{n!}.
\end{equation*}
Hence substitution $t=\ad_{\{,\}}(\symh(X_0))$ in its inverse
\begin{equation*}
    \frac{t}{e^t-1} 
    = 1 - \frac{t}{2} + \sum_{p=1}^\infty K_{2p} t^{2p}
\end{equation*}
gives the inverse map $\tilde X'_i \mapsto X_i$. Here the coefficients
$K_{2p}$ are defined in \eqref{def:K2p}. Hence equation
\eqref{tildeXi->Xi}:
\begin{equation*}
    \symh(X_i) 
    = \symh(\tilde X'_i) 
    - \frac{1}{2} \{\symh(X_0),\symh(\tilde X'_i)\}
    + \sum_{p=1}^\infty K_{2p}
      (\ad_{\{,\}} (\symh(X_0)))^{2p} \symh(\tilde X'_i)
\end{equation*}
gives the symbol of $X_i$.
%
}

\section{Example ($c=1$ string theory)}
\label{sec:kontsevich-model}

In this appendix, we apply our algorithm to the compactified $c=1$
string theory at a self-dual radius, following the formulation in
\cite{tak:96}. We use the notations in \secref{sec:recursion}.

According to (4.10) in \cite{tak:96} ($\beta=1$), the
string equation for this case is
\begin{equation}
 \begin{split}
    L &= (-\Mbar - \hbar + 1)\Lbar,
\\
    \Lbar^{-1} &= (-M + 1) L^{-1}.
 \end{split}
\label{string-eq}
\end{equation}
Multiplying the left and right hand side of the second equation to the
right and left hand side of the first equation from the right, we have
\begin{equation*}
    L(-M+1)L^{-1} = -\Mbar-\hbar+1,
\end{equation*}
and using the canonical commutation relation $[L,M]=\hbar L$, we have
$M=\Mbar$, namely, 
\begin{equation}
    L = (1 - \Mbar - \hbar) \Lbar, \qquad
    M = \Mbar.
\label{string-eq:canonical}
\end{equation}
Hence the data $(f,g,\fbar,\gbar)$ for \propref{prop:RH} in this case are
\begin{equation}
 \begin{aligned}
    f(\hbar,s,e^{\hbar\ders}) &= e^{\hbar\ders}, &\qquad
    g(\hbar,s,e^{\hbar\ders}) &= s,
\\
    \fbar(\hbar,s,e^{\hbar\ders}) &= (1-s-\hbar)e^{\hbar\ders}, &
    \gbar(\hbar,s,e^{\hbar\ders}) &= s.
\end{aligned}
\label{f,g,fbar,gbar:string}
\end{equation}
The corresponding dispersionless data $(f_0,g_0,\fbar_0,\gbar_0)$ for
\propref{prop:dRH} are
\begin{equation}
 \begin{aligned}
    f_0(s,\xi) &= \xi, & \qquad
    g_0(s,\xi) &= s, 
\\
    \fbar_0(s,\xi) &= (1-s)\xi, &
    \gbar_0(s,\xi) &= s.
 \end{aligned}
\label{f0,g0,fbar0,gbar0:string}
\end{equation}
For the sake of simplicity, we fix the time variables
$\tbar_n$ ($n=1,2,\dotsc$) to $0$, which makes it possible to determine
all $X_n$'s explicitly, \eqref{Xn,Xbarn,phin:string}. If we turn on
$\tbar_n$'s, we need to proceed perturbatively.

\bigskip
To begin with, let us determine the leading terms of $X$, $\Xbar$ and
$\phi$ with respect to the $\hbar$-order, namely $X_0$, $\Xbar_0$ and
$\phi_0$ in \eqref{X:h-expansion}, \eqref{Xbar:h-expansion} and
\eqref{phi:h-expansion}. 

The Riemann-Hilbert type problem for $(\calL,\calM,\calLbar, \calMbar)$
\eqref{(f0,g0)(M,L)=(fbar0,gbar0)(Mbar,Lbar)} is
\begin{equation}
    \calL = (1-\calMbar) \calLbar, \qquad \calM=\calMbar.
\label{dRH:string}
\end{equation}
Recall that $\calL$, $\calM$, $\calLbar$ and $\calMbar$ have the
following form by \eqref{calL}, \eqref{calLbar}, \eqref{calM} and
\eqref{calMbar} when $\tbar=0$.
\begin{align*}
    \calL &=
    \xi
    + \sum_{n=0}^\infty u_{0,n+1} \xi^{-n},
\\
    \calLbar &=
    \sum_{n=0}^\infty \utilde_{0,n} \xi^{n+1},
\\
    \calM &= \sum_{n=1}^\infty nt_n \calL^n + s
    + \alpha_0
    + \sum_{n=1}^\infty v_{0,n} \calL^{-n},
\\ 
   \calMbar &= 
    s + \alphabar_0 
    + \sum_{n=1}^\infty \vbar_{0,n} \calLbar^n,
\end{align*}
Therefore $(1-\calMbar)\calLbar$ is a Taylor series with positive powers
of $\xi$, while $\calL$ does not have a positive power of $\xi$ except
for the first term, $\xi$. Therefore the first equation in
\eqref{dRH:string} implies that
\begin{equation}
    \calL = \xi.
\label{calL=xi:string}
\end{equation}
{}From this and the second equation $\calM=\calMbar$ in
\eqref{dRH:string}, it follows that $\calM$ and $\calMbar$ do not have
negative powers of $\xi$ and $\alpha_0=\alphabar_0$. Hence we may assume
that $\alpha_0=\alphabar_0=0$ and
\begin{equation}
    \calM = \calMbar = s + \sum_{n=1}^\infty n t_n \xi^n.
\label{calM=calMbar:string}
\end{equation}
Substituting this into the first equation of \eqref{dRH:string}, we have
\begin{equation}
    \calLbar = \xi \left(1-s-\sum_{n=1}^\infty n t_n \xi^n\right)^{-1},
    \text{\ or\ }
    \calLbar^{-1}
    = \xi^{-1} \left(1-s-\sum_{n=1}^\infty n t_n \xi^n\right).
\label{calLbar:string}
\end{equation}

Next, let us determine the leading terms $X_0$, $\Xbar_0$ and $\phi_0$
of the dressing operators $X$, $\Xbar$ and $\phi$. We denote the symbols
of $X_0$ and $\Xbar_0$ by $\calX_0=\calX_0(t,s,\xi)$ and
$\calXbar_0=\calXbar_0(t,s,\xi)$. 

Since $\calL = \exp(\ad_{\{,\}} \calX_0) \xi = \xi$, $\calX_0$ does not
depend on $s$. On the other hand, since $\calM =
\exp(\ad_{\{,\}}\calX_0)\left(s + \sum_{n} n t_n \xi^n\right) = s +
\sum_n n t_n \xi^n$, $\calX_0$ does not depend on $\xi$, either, which
means that $\calX_0=0$. 

Note that $\ad_{\{,\}} \phi_0(s)$ does not change the degree of
homogeneous terms with respect to $\xi$, since $\phi_0(s)$ does not
contain $\xi$. Hence $\calLbar$ has the following asymptotic behaviour
around $\xi=0$.
\begin{equation*}
    \calLbar = e^{\ad_{\{,\}}\phi_0} \xi 
    + e^{\ad_{\{,\}}\phi_0} \sum_{N=1}^\infty \frac{1}{N!}
    \bigl(\ad_{\{,\}}\calXbar_0\bigr)^N \xi
    =  e^{\ad_{\{,\}}\phi_0} \xi + O(\xi^2),
\end{equation*}
because $\calXbar_0$ is a Taylor series of $\xi$ with positive
powers. Comparing this expansion with \eqref{calLbar:string}, we have
\begin{equation}
    e^{\ad_{\{,\}}\phi_0} \xi = (1-s)^{-1} \xi.
\label{e(ad(phi0))xi}
\end{equation}
It is easy to see that the left hand side is equal to
$e^{-\phi_0'(s)}\xi$, where ${}'$ denotes the derivation by $s$. Thus we
obtain
\begin{equation}
    \phi_0(s) = \int^s \log(1-s) \, ds
    = -(1-s) \log(1-s) + (1-s).
\label{phi0:string}
\end{equation}
It remains to determine $\calXbar_0$. Operating $e^{-\ad_{\{,\}}\phi_0}$
to $\calLbar^{-1}$ \eqref{calLbar:string} and $\calMbar$
\eqref{calM=calMbar:string} and using the formula
\begin{equation}
    e^{-\ad_{\{,\}}\phi_0} \xi = (1-s) \xi,
\label{e(-ad(phi0))xi}
\end{equation}
which follows directly from \eqref{e(ad(phi0))xi}, we have two equations
characterising $\calXbar_0$:
\begin{equation}
 \begin{aligned}
    e^{\ad_{\{,\}}\calXbar_0} \xi^{-1}
    &=
    \xi^{-1} - \sum_{n=1}^\infty n t_n (1-s)^{n-1} \xi^{n-1},
\\
    e^{\ad_{\{,\}}\calXbar_0} s
    &=
    s + \sum_{n=1}^\infty n t_n (1-s)^n \xi^n.
 \end{aligned}
\label{e(ad(Xbar0))}
\end{equation}
In fact we can determine $\calXbar_0$ explicitly as follows.
\begin{equation}
    \calXbar_0 = \sum_{n=1}^\infty t_n (1-s)^n \xi^n.
\label{Xbar0:string}
\end{equation}
Indeed, since
\begin{equation*}
    \{\calXbar_0, \xi^{-1}\}
    =
    - \sum_{n=1}^\infty n t_n (1-s)^{n-1} \xi^{n-1}
    \text{\ and\ }
    \{\calXbar_0, s\}
    =
    \sum_{n=1}^\infty n t_n (1-s)^n \xi^n
\end{equation*}
commute with $\calXbar_0$ itself (this is a direct consequence of a
trivial fact $\{(1-s)^k \xi^k , (1-s)^l\xi^l\} = 0$), the exponentials
in $e^{\ad_{\{,\}}\calXbar_0} \xi^{-1}$ and
$e^{\ad_{\{,\}}\calXbar_0} s$ are truncated up to the first order, 
namely 
\begin{equation}
 \begin{aligned}
    e^{\ad_{\{,\}}\calXbar_0} \xi^{-1}
    &=
    \xi^{-1} + 
    \{\calXbar_0, \xi^{-1}\}
    =
    \xi^{-1} - \sum_{n=1}^\infty n t_n (1-s)^{n-1} \xi^{n-1},
\\
    e^{\ad_{\{,\}}\calXbar_0} s
    &=
    s + \{\calXbar_0, s\}
    =
    s + \sum_{n=1}^\infty n t_n (1-s)^n \xi^n,
 \end{aligned}
\end{equation}
which proves that $\calXbar_0$ in \eqref{Xbar0:string} satisfies
\eqref{e(ad(Xbar0))}. 

Thus we have determined the leading terms of $X$, $\Xbar$ and $\phi$ as
follows: 
\begin{equation}
    X_0 = 0, \quad
    \Xbar_0 = \sum_{n=1}^\infty t_n (1-s)^n e^{n\hbar\ders}, \quad
    \phi_0 = -(1-s) \log(1-s) + (1-s).
\label{X0,Xbar0,phi0:string}
\end{equation}

\bigskip
Having determined $X_0$, $\Xbar_0$ and $\phi_0$, we can start the algorithm
discussed in \secref{sec:recursion}. Following the procedure
by straightforward computation (actually, not so straightfoward, as we
shall see later), we obtain as the first and the second steps,
\begin{equation}
    X_1 = 0, \quad
    \Xbar_1 = - \sum_{n=1}^\infty 
              t_n \frac{n(n+1)}{2} (1-s)^{n-1} e^{n\hbar\ders}, \quad
    \phi_1 = \frac{1}{2} \log(1-s),
\label{X1,Xbar1,phi1:string}
\end{equation}
and 
\begin{equation}
    X_2 = 0, \quad
    \Xbar_2 = - \sum_{n=1}^\infty 
              t_n \frac{n(n^2-1)(3n+2)}{24} (1-s)^{n-2} e^{n\hbar\ders}, 
    \quad
    \phi_2 = -\frac{1}{12} (1-s)^{-1}.
\label{X2,Xbar2,phi2:string}
\end{equation}
{}From these results we can infer the Ansatz for all $n\geqq 2$:
\begin{equation}
    X_n = 0, \quad
    \Xbar_n = \sum_{m=1}^\infty 
              t_m c_{n,m} (1-s)^{m-n} e^{m\hbar\ders}, 
    \quad
    \phi_n = c_{n,0} (1-s)^{-n+1},
\label{Xn,Xbarn,phin:string}
\end{equation}
with suitable constants $c_{n,m}$ ($n,m\geqq 1$) and $c_{n,0}$ ($n\geqq
2$).  Eventually these constants are determined recursively as follows:
\begin{align}
    c_{n,m} &= \frac{1}{n}
    \sum_{j=0}^{n-1} (-1)^{n-j} \binom{m-j+1}{k-j+1} c_{j,m},
\label{cnm}
\\
    c_{n,0} &= \frac{1}{-n+1}
    \left(\frac{1}{n+1} - \frac{c_{1,0}}{n} -
    \sum_{j=2}^{n-1} (-1)^{n-j} \binom{-j+1}{k-j+1} c_{j,0}
    \right),
\label{cn0}
\end{align}
with the initial values $c_{0,m}=1$ and $c_{1,0}=1/2$.

Let us prove that the above Ansatz is true. To do this, we have only to
check that it is consistent with the algorithm in
\secref{sec:recursion}. 

It is easy to compute the intermediate objects $P^{(i-1)}$
\eqref{def:Pi-1} and $Q^{(i-1)}$ \eqref{def:Qi-1}, since the operators
$X_0,\dotsc,X_{i-1}$ are zero. Using the notations \eqref{ft,gt} and
\eqref{def:X,Xbar,phi(i-1)}, we have
\begin{equation}
 \begin{split}
    P^{(i-1)} &=
    \exp\left(\frac{X^{(i-1)}}{\hbar}\right) f_t
    = f_t
    = e^{\hbar\ders},
\\
    Q^{(i-1)} &=
    \exp\left(\frac{X^{(i-1)}}{\hbar}\right) g_t
    = g_t
    = s + \sum_{n=1}^{\infty} n t_n e^{n\hbar\ders}.
 \end{split}
\label{P,Q(i-1):string}
\end{equation}
Hence the terms in the expansion \eqref{Pi-1,Qi-1:h-expand} vanish
except for $P^{(i-1)}_0$ and $Q^{(i-1)}_0$:
\begin{align}
    P^{(i-1)}_0 &= e^{\hbar\ders}, &
    P^{(i-1)}_1 &= P^{(i-1)}_2 = \dotsb = P^{(i-1)}_i = 0,
\label{P(i-1)k:string}
\\
    Q^{(i-1)}_0 &= s + \sum_{n=1}^{\infty} n t_n e^{n\hbar\ders}, &
    Q^{(i-1)}_1 &= Q^{(i-1)}_2 = \dotsb = Q^{(i-1)}_i = 0.
\label{Q(i-1)k:string}
\end{align}
Their symbols are
\begin{align}
    \calP^{(i-1)}_0 &= \xi, &
    \calP^{(i-1)}_1 &= \calP^{(i-1)}_2 = \dotsb = \calP^{(i-1)}_i = 0,
\label{calP(i-1)k:string}
\\
    \calQ^{(i-1)}_0 &= s + \sum_{n=1}^{\infty} n t_n \xi^n, &
    \calQ^{(i-1)}_1 &= \calQ^{(i-1)}_2 = \dotsb = \calQ^{(i-1)}_i = 0.
\label{calQ(i-1)k:string}
\end{align}
To compute $\Pbar^{(i-1)}$ \eqref{def:Pbari-1}, let us consider
$\exp(\ad(\Xbar^{(i-1)}/\hbar))\fbar_\tbar$ first. Note that
\begin{equation}
 \begin{split}
    \hbar^{-1}[\Xbar^{(i-1)}, \fbar] &=
    \sum_{n=0}^{i-1} \hbar^{n-1} [\Xbar_n, \fbar]
\\
    &=
    \sum_{n=0}^{i-1} \hbar^{n-1}
    \sum_{m=1}^\infty t_m c_{n,m} 
    [(1-s)^{m-n}e^{m\hbar\ders}, (1-s-\hbar)e^{\hbar\ders}]
 \end{split}
\end{equation}
Substituting
\begin{multline}
    [(1-s)^{m-n}e^{m\hbar\ders}, (1-s-\hbar)e^{\hbar\ders}]
\\
    = \left(-n \hbar (1-s)^{m-n}
    - \sum_{r=2}^{m-n+1}\binom{m-n+1}{r} (-\hbar)^r (1-s)^{m-n+1-r}
    \right) e^{(m+1)\hbar\ders},
\end{multline}
we have
\begin{equation*}
 \begin{split}
    &\hbar^{-1}[\Xbar^{(i-1)}, \fbar]
\\
    =&{}
    - \sum_{n=0}^{i-1} \hbar^n
    \sum_{m=1}^\infty n t_m c_{n,m} (1-s)^{m-n} e^{(m+1)\hbar\ders}
\\
    &+ \sum_{k=0}^\infty \hbar^k
    \sum_{m=1}^\infty t_m
    \sum_{\substack{0\leq j \leq i-1 \\ r \geq 2 \\ j-1+r=k}}
    (-1)^{r+1} c_{j,m} \binom{m-j+1}{r}
    (1-s)^{m-k} e^{(m+1)\hbar\ders}.
 \end{split}
\end{equation*}
Because of the definition of $c_{n,m}$ \eqref{cnm}, the coefficients of
$\hbar^n t_m$ ($n=0,\dotsc,i-1$, $m=1,2,\dotsc$) in the right hand side
vanish and the coefficient of $\hbar^i$ is equal to $i
c_{i,m}$. (Actually this is why $c_{n,m}$ is defined by the recursion
relation \eqref{cnm}.) This means
\begin{equation}
    \hbar^{-1}[\Xbar^{(i-1)}, \fbar]
    = \hbar^i
    \sum_{m=1}^\infty t_m i c_{i,m} (1-s)^{m-i} e^{(m+1)\hbar\ders}
    + O(\hbar^{i+1}).
\label{[Xbar(i-1),fbar]}
\end{equation}
Further application of $\ad (\hbar^{-1} \Xbar^{(i-1)})$ changes the
symbol of terms of $\hbar$-order $-i$ (i.e., coefficients of $\hbar^i$)
by application of $\ad_{\{,\}} \calXbar_0$, as $\ad \hbar^{j-1}\Xbar_j$
($j=1,\dotsc,i-1$) lowers the $\hbar$-order. Hence for $N\geqq 1$ we
have 
\begin{multline}
    \bigl(\ad \hbar^{-1} \Xbar^{(i-1)}\bigr)^N \fbar
\\
    = \hbar^i \bigl(\ad_{\{,\}}\calXbar_0\bigr)^{N-1}\left.\left(
    \sum_{m=1}^\infty t_m i c_{i,m} (1-s)^{m-i} \xi^{m+1}
    \right)\right|_{\xi \to e^{\hbar\ders}}
    + O(\hbar^{i+1}).
\label{ad(Xbar(i-1))N(fbar)}
\end{multline}

Next we compute the conjugation of $\fbar$ by
$e^{\hbar^{-1}\phi^{(i-1)}(s)}$. 
\begin{equation}
 \begin{split}
    e^{\ad \hbar^{-1}\phi^{(i-1)}(s)} \fbar
    ={}&
    e^{\hbar^{-1}\phi^{(i-1)}(s)} (1-s-\hbar) e^{\hbar\ders}
    e^{-\hbar^{-1}\phi^{(i-1)}(s)}
\\
    ={}&
    e^{\hbar^{-1}(\phi^{(i-1)}(s) - \phi^{(i-1)}(s+\hbar))}
    (1-s-\hbar) e^{\hbar\ders}
\\
    ={}&
    \exp\Biggl(
    \frac{1}{\hbar}(\phi_0(s)-\phi_0(s+\hbar)) + \log(1-s-\hbar) 
\\
    &\qquad+
    \sum_{j=1}^{i-1} \hbar^{j-1}(\phi_j(s)-\phi_j(s+\hbar))
    \Biggr) e^{\hbar\ders}.
 \end{split}
\end{equation}
By \eqref{X0,Xbar0,phi0:string}, \eqref{X1,Xbar1,phi1:string} and
\eqref{Xn,Xbarn,phin:string}, we have
\begin{align*}
    \frac{1}{\hbar}(\phi_0(s)-\phi_0(s+\hbar)) &+ \log(1-s-\hbar)
    =
    - \sum_{k=1}^\infty \frac{\hbar^k}{k+1} (1-s)^{-k},
\\
    \phi_1(s) - \phi_1(s+\hbar) &=
    \sum_{k=1}^\infty \hbar^k \frac{c_{1,0}}{k} (1-s)^{-k},
\\
    \phi_j(s) - \phi_j(s+\hbar) &=
    \sum_{k=1}^\infty \hbar^k 
    (-1)^{k+1} c_{j,0}\binom{-j+1}{k} (1-s)^{-j-k+1}.
\end{align*}
The coefficients $c_{j,0}$ were defined by \eqref{cn0} so that 
\begin{multline*}
    \frac{1}{\hbar}(\phi_0(s)-\phi_0(s+\hbar)) + \log(1-s-\hbar) 
    + \sum_{j=1}^{i-1} \hbar^{j-1}(\phi_j(s)-\phi_j(s+\hbar))
\\
    = (i-1) c_{i,0} \frac{\hbar^i}{(1-s)^i}
    + O(\hbar^{i+1}).
\end{multline*}
Thus we have
\begin{equation}
    e^{\ad \hbar^{-1}\phi^{(i-1)}(s)} \fbar
    =
    \bigl(1 + \hbar^i (i-1)c_{i,0} (1-s)^{-i} + O(\hbar^{i+1})\bigr)
    e^{\hbar\ders},
\label{exp(ad(phi(i-1)))fbar}
\end{equation}
for $i\geqq 2$. When $i=1$, we should replace $(1-s)^{-1}$ by
$\log(1-s)$ but the rest is the same.

Summarising the above results \eqref{ad(Xbar(i-1))N(fbar)} and
\eqref{exp(ad(phi(i-1)))fbar}, we obtain the following expansion of
$\Pbar^{(i-1)}$.
\begin{equation}
 \begin{split}
    &\Pbar^{(i-1)}
\\  ={}&
    e^{\ad \hbar^{-1}\phi^{(i-1)}} e^{\ad \hbar^{-1}\Xbar^{(i-1)}} \fbar
\\
    ={}&
    e^{\ad \hbar^{-1}\phi^{(i-1)}} \fbar
    +
    e^{\ad \hbar^{-1}\phi^{(i-1)}}
    \sum_{N=1}^\infty 
    \frac{\bigl(\ad \hbar^{-1}\Xbar^{(i-1)}\bigr)^N}{N!} \fbar
\\
    ={}&
    e^{\hbar\ders} + \hbar^i (i-1) c_{i,0} (1-s)^{-i} e^{\hbar\ders}
\\
    &+ \hbar^i
    e^{\ad_{\{,\}}\phi_0}
    \sum_{N=1}^\infty 
    \frac{\bigl(\ad_{\{,\}}\calXbar_0\bigr)^{N-1}}{N!}
    \left.\left(
     \sum_{m=1}^\infty t_m i c_{i,m} (1-s)^{m-i} \xi^{m+1}
    \right)\right|_{\xi \to e^{\hbar\ders}}
\\
    &+ O(\hbar^{i+1}).
 \end{split}
\label{Pbar(i-1):string}
\end{equation}
Therefore 
\begin{equation}
    \Pbar^{(i-1)}_0 = e^{\hbar\ders}, \qquad
    \Pbar^{(i-1)}_1 = \dotsb = \Pbar^{(i-1)}_{i-1} = 0,
\label{Pbar(i-1)k:string}
\end{equation}
which coincide with \eqref{P(i-1)k:string}, and
\begin{multline}
    \Pbar^{(i-1)}_i
    = (i-1) c_{i,0} (1-s)^{-i}\xi
\\    + 
    e^{\ad_{\{,\}}\phi_0}
    \sum_{N=1}^\infty 
    \frac{\bigl(\ad_{\{,\}}\calXbar_0\bigr)^{N-1}}{N!}
    \left(
     \sum_{m=1}^\infty t_m i c_{i,m} (1-s)^{m-i} \xi^{m+1}
    \right).
\label{Pbar(i-1)i:string:temp}
\end{multline}
{}From formulae
\begin{equation*}
    \ad_{\{,\}}\calXbar_0 \bigl((1-s)^{-j}\bigr)
    =
    j (1-s)^{-j-1} 
    \left(\sum_{k=1}^\infty k t_k (1-s)^k \xi^k\right),
\end{equation*}
$\{\calXbar_0, (1-s)^k\xi^k\}=0$ and \eqref{e(ad(phi0))xi} it follows
that
\begin{multline}
    e^{\ad_{\{,\}}\phi_0} \bigl(\ad_{\{,\}}\calXbar_0\bigr)^{N-1} 
    \bigl((1-s)^{m-i}\xi^{m+1}\bigr)
\\
    =
    (i+1) \dotsm (i+N-1) (1-s)^{-i-N} \xi^{m+1}
    \left(\sum_{k=1}^\infty k t_k \xi^k\right)^{N-1}.
\label{e(ad(phi0))(ad(Xbar0))N-1(1-s)xi}
\end{multline}
Substituting it to \eqref{Pbar(i-1)i:string:temp}, we have
\begin{multline}
    \calPbar^{(i-1)}_i
    = (i-1) c_{i,0} (1-s)^{-i} \xi
\\
    + \sum_{N=1}^\infty \frac{1}{N!}
    \sum_{m=1}^\infty t_m c_{i,m} \xi^{m+1} i (i+1) \dotsm (i+N-1) \times
\\
    \times
    (1-s)^{-i-N}
    \left(\sum_{k=1}^\infty k t_k \xi^k\right)^{N-1}.
\label{Pbar(i-1)i:string}
\end{multline}

Now let us compute \eqref{tildeXi=ints}. Although we have not computed
$\Qbar^{(i-1)}$, thanks to \eqref{calP(i-1)k:string} and
\eqref{calQ(i-1)k:string}, integrals in \eqref{tildeXi=ints} are
simplified to
\begin{equation*}
    -\tilde\calX_i+\phi_i+\tilde\calXbar_i
    =
    \int^s \xi^{-1} \calPbar^{(i-1)}_i \, ds,
\end{equation*}
which is computable without information of $\Qbar^{(i-1)}$. By the
explicit formula \eqref{Pbar(i-1)i:string} we obtain
\begin{multline}
    -\tilde\calX_i+\phi_i+\tilde\calXbar_i
    =
    c_{i,0} (1-s)^{-i+1}
\\
    +
    \sum_{N=1}^\infty \frac{1}{N!}\sum_{m=1}^\infty
    t_m c_{i,m} \xi^m i (i+1) \dotsm (i+N-2) (1-s)^{-i-N+1}
    \left(\sum_{k=1}^\infty k t_k \xi^k\right)^{N-1}.
\end{multline}
In this formula there is no term with negative powers of $\xi$, which
means $\tilde\calX_i=0$, i.e., $\calX_i=0$. The constant term with
respect to $\xi$ is $c_{i,0} (1-s)^{-i+1}$, which is $\phi_i(s)$, as was
expected. The remaining part is $\tilde\calXbar_i$. 

In general, it is almost hopeless to compute $\calXbar_i$ from
$\tilde\calXbar_i$ by \eqref{tildeXi,Xbari->Xi,Xbari:symbol}. However,
quite fortunately, in the present case we are able to find the explicit
answer. Using \eqref{e(ad(phi0))(ad(Xbar0))N-1(1-s)xi}, we can rewrite
$\tilde\calXbar_i$ as follows.
\begin{equation}
    \tilde\calXbar_i
    =
    e^{\ad_{\{,\}}\phi_0}\left(
    \sum_{N=1}^\infty \frac{\bigl(\ad_{\{,\}}\calXbar_0\bigr)^{N-1}}{N!}
    \left(\sum_{m=1}^\infty t_m c_{i,m} (1-s)^{m-i} \xi^m\right)\right).
\label{tildeXbari':string}
\end{equation}
Recall that equations in \eqref{tildeXi,Xbari->Xi,Xbari:symbol} are the
inversion formulae of \eqref{def:tildeX} and
\eqref{def:tildeXbar}. Comparing \eqref{tildeXbari':string} and
\eqref{def:tildeXbar}, we can conclude that
\begin{equation*}
    \calXbar_i =
    \sum_{m=1}^\infty t_m c_{i,m} (1-s)^{m-i} \xi^m,
\end{equation*}
which finally proves the Ansatz \eqref{Xn,Xbarn,phin:string}.


\begin{thebibliography}{dFGZ}

\bibitem[A]{aok:86}
           Aoki, T., Calcul exponentiel des op\'erateurs
	   microdiff\'erentiels d'ordre infini. II.
           Ann. Inst. Fourier (Grenoble)  {\bf 36}
	   143--165 (1986).

\bibitem[B]{bourbaki}
N. Bourbaki, \'El\'ements de math\'ematique,
Groupes et alg\`ebres de Lie, Actualit\'es Sci. Indust.,
(Hermann).

\bibitem[D]{dij:91}
Dijkgraaf, R.,
Intersection Theory, Integrable Hierarchies and 
Topological Field Theory,
in {\it New symmetry principles in quantum field theory} (Carg\`ese,
1991), 
NATO Adv. Sci. Inst. Ser. B Phys. {\bf 295} (1992), 95--158. 

\bibitem[dFGZ]{dfgz}
Di Francesco, P., Ginsparg, P., and Zinn-Justin, J.,
$2$D gravity and random matrices.
Phys. Rep. {\bf 254} (1995), 133 pp. 

\bibitem[DMP]{dmp:93}
R.~Dijkgraaf, G.~Moore and R.~Plesser, 
The partition function of 2D string theory, 
Nucl. Phys. {\bf B394} (1993), 356--382.  

\bibitem[EK]{egu-kan:94}
T.~Eguchi and H.~Kanno, 
Toda lattice hierarchy and the topological description 
of $c = 1$ string theory, 
Phys. Lett. {\bf B331} (1994), 330--334.  

\bibitem[EO]{eyn-ora:07}
Eynard, B., and Orantin, N.,
Invariants of algebraic curves and topological expansion,
Commun. Number Theory Phys. {\bf 1} 
(2007),
347--452. 

\bibitem[HOP]{hop:94}
A.~Hanany, Y.~Oz and R.~Plesser, 
Topological Landau-Ginzburg formulation 
and integrable structure of 2d string theory, 
Nucl. Phys. {\bf B425} (1994), 150--172. 

\bibitem[Kr]{kri:91}
Krichever, I.M.,
The dispersionless Lax equations and topological minimal models,
Commun. Math. Phys. {\bf 143} (1991), 415--426.

\bibitem[Mo]{mor:94}
Morozov, A.,
Integrability and Matrix Models
Phys. Usp. {\bf 37} (1994), 1--55.

\bibitem[OS]{orl-sch}
Orlov, A. Yu. and Schulman, E. I., Additional symmetries for integrable
equations and conformal algebra representation,
Lett. Math. Phys. {\bf 12} (1986), 171--179;
Orlov, A. Yu.,
Vertex operators, $\bar{\partial}$-problems, symmetries, variational
indentities and Hamiltonian formalism for $2+1$ integrable systems,
in: {\it Plasma Theory and Nonlinear and
Turbulent Processes in Physics\/}
(World Scientific, Singapore, 1988);
Grinevich, P. G., and Orlov, A. Yu.,
Virasoro action on Riemann surfaces, Grassmannians,
$\det\bar{\partial}_j$ and Segal Wilson $\tau$ function,
in: {\it Problems of Modern Quantum Field Theory\/}
(Springer-Verlag, 1989).


\bibitem[S]{sch:85}
Schapira, P.,
Microdifferential systems in the complex domain,
Grundlehren der mathematischen Wissenschaften {\bf 269},
Springer-Verlag,
Berlin-New York,
(1985)

\bibitem[T1]{tak:96}
Takasaki, K.,
Toda Lattice Hierarchy and Generalized String Equations.
Commun. Math. Phys. {\bf 181} (1996), 131--156.


\bibitem[TT1]{tak-tak:91}
Takasaki, K., and Takebe, T.,
SDiff(2) Toda equation---hierarchy, tau function, and symmetries.
Lett. Math. Phys. {\bf 23} (1991), 205--214. 

\bibitem[TT2]{tak-tak:95}
Takasaki, K., and Takebe, T.,
Integrable Hierarchies and Dispersionless Limit,
Rev. Math. Phys. {\bf 7} (1995), 743-803.

\bibitem[TT3]{tak-tak:09}
Takasaki, K., and Takebe, T.,
An $\hbar$-expansion of the KP hierarchy
   --- recursive construction of solutions,
{\tt arXiv:0912.4867}

\bibitem[TT4]{tak-tak:11}
Takasaki, K., and Takebe, T.,
$\hbar$-Dependent KP hierarchy,
to appear in Theoret. and Math. Phys., the Proceedings of the
``International Workshop on Classical and Quantum Integrable
Systems 2011'' (January 24-27, 2011 Protvino, Russia), 
{\tt arXiv:1105.0794}

\bibitem[UT]{uen-tak:84}
Ueno, K., and Takasaki, K.,
Toda lattice hierarchy,
in {\it Group Representations and Systems of Differential Equations},
K. Okamoto ed., Advanced Studies in Pure Math\. {\bf 4}
(North-Holland/Kinokuniya 1984), 1--95.



\end{thebibliography}
\end{document}